\newif\ifidentified
\newcommand{\pp}{d} %
\title{Locally Private Histograms in All Privacy Regimes}
\author{Cl\'ement L. Canonne\thanks{University of Sydney. Email: \email{clement.canonne@sydney.edu.au}.} \and Abigail Gentle\thanks{University of Sydney. Email: \email{agen2864@uni.sydney.edu.au}\newline This work was done in part while the authors were visiting the Simons Institute for the Theory of Computing.
}}
\begin{document}

\maketitle

\begin{abstract}
Frequency estimation, a.k.a. histograms, is a workhorse of data analysis, and as such has been thoroughly studied under differentially privacy. In particular, computing histograms in the \emph{local} model of privacy has been the focus of a fruitful recent line of work, and various algorithms have been proposed, achieving the order-optimal $\ell_\infty$ error in the high-privacy (small $\varepsilon$) regime while balancing other considerations such as time- and communication-efficiency. However, to the best of our knowledge, the picture is much less clear when it comes to the medium- or low-privacy regime (large $\varepsilon$), despite its increased relevance in practice. In this paper, we investigate locally private histograms, and the very related distribution learning task, in this medium-to-low privacy regime, and establish near-tight (and somewhat unexpected) bounds on the $\ell_\infty$ error achievable. As a direct corollary of our results, we obtain a protocol for histograms in the \emph{shuffle} model of differential privacy, with accuracy matching previous algorithms but significantly better message and communication complexity. 

Our theoretical findings emerge from a novel analysis, which appears to improve bounds across the board for the locally private histogram problem. We back our theoretical findings by an empirical comparison of existing algorithms in all privacy regimes, to assess their typical performance and behaviour beyond the worst-case setting.
\end{abstract}

\section{Introduction}
Frequency estimation is a fundamental problem of statistics: besides its use for basic surveying, it is also used as a building block in distribution learning, identifying heavy hitters in sparse domains, and regression, correlation or covariance analysis. As such, frequency estimation (and the closely related problem of heavy hitters) have been thoroughly studied in a variety of settings, ranging from streaming to privacy-preserving analytics. We here focus on the latter, and specifically on the \emph{local} model of differential privacy (LDP), where the data is distributed across a large number of users, and each datum is subject to stringent differential privacy requirements.
This question has, over the past decade, received a lot of attention, starting with~\cite{bassily2015,HsuKR12}; as we elaborate in~\cref{ssec:prior}, much is known about locally private frequency estimation, to the point that it may seem the question has been, already fully resolved~--~both in theory and practice. However, most recent large-scale implementations of local privacy (see, \eg~\cite{Apple17,CormodeJKLSW18}) must balance many efficiency objectives, including the bandwidth and computational requirements (as, among others, a proxy for energy consumption), and, above all, the estimation accuracy. Keeping this accuracy in check has, in turn, led to the common use of relatively large values of the ``privacy parameter'' $\priv$, far from the original rule of thumb of $\priv \ll 1$.\footnote{See, for instance, the list of real-case use of differential privacy maintained here, which includes use cases of LDP along with the correspond stated values of $\priv$: \url{https://desfontain.es/blog/real-world-differential-privacy.html}.}

In view of the relevance of frequency estimation to privacy-preserving algorithms and deployments, it is crucial that theory be informed by, and applicable to practice; 
and that constant factors, bounds on achievability, and the behaviour of estimation error in all parameter regimes are well understood. Yet, to the best of our knowledge, most if not all of the previous theoretical work on locally private frequency estimation focuses on the high-privacy $\priv \ll 1$ regime, leaving the low-privacy regime by and large uncharted. Addressing this gap in our theoretical understanding is the focus of our work.

\paragraph{The question and setting.} We will focus on examining the standard worst-case estimation error, or the $\lp[\infty]$-error as it is most relevant to the problem, with special importance for controlling error rates in heavy hitters. Specifically, when estimating the empirical frequencies $q=(q_1,\dots,q_\ab)$ over a universe $\cX$ of size $\ab$ from $\ns$ users, the (expected) error is given by
\begin{equation}
    \bEE{\norminf{\hat{q}-q}} = \bEE{\max_{1\leq i\leq \ab}\abs{\hat{q}_i-q_i}}
\end{equation}
where the expectation is taken over the (possibly randomised) algorithm given as input the users' data, and $\hat{q}$ is the vector of estimated frequencies. We are interested in the worst-case error over all possible datasets, that is, the supremum of the above quantity over all inputs $X\eqdef (X_1,\dots, X_\ns)\in \cX^\ns$ (equivalently, all $q$). We will seek algorithms (``protocols'') to minimize this worst-case error under the constraint of (pure) local privacy (see~\cref{sec:preliminaries}), parameterized by the privacy parameter $\priv>0$. As in most distributed settings, one can consider several variants depending on whether a common random seed is available to the users (public-coin protocols) or not (private-coin protocols). It is known that for locally private frequency estimation allowing public-coin protocols can provably reduce the communication requirements~\cite{AcharyaS:19};\footnote{Specifically, the cited paper establishes a separation between public- and private- coin protocols, showing that the former can achieve vanishing error even with constant-length communication per user, a setting where the latter must incur $\Omega(1)$ error. To complement this, it is known that using public randomness one can always reduce the per-player communication to $\clg{\priv}$ bits~\cite{bassily2015}.}   similarly, one can allow some back and forth between users and the central server (interactive protocols). However, the public-coin or interactive settings come at an increased deployment costs, and are often less easy (or even impossible) to implement as they require either broadcasting from the center or sustained two-way communication between the parties. As our focus is chiefly on analyzing the most versatile setting, we hereafter restrict ourselves for our algorithms to the private-coin (non-interactive) setting; however, we mention that our lower bounds apply to the public-coin setting as well.

A closely related question is that of \emph{distribution learning} under $\lp[\infty]$ loss, which differs from frequency estimation in that the dataset $X$ is not arbitrary, but instead assumed to be drawn \iid\ from an unknown probability distribution $q$ over $\cX$: in this sense, distribution learning is an ``easier'' problem than frequency estimation, and an algorithm for the latter implies one for the former.\footnote{There are some subtleties here, but it is worth noting, as a baseline, that absent privacy constraints it is well-known that the error of learning to $\lp[\infty]$ scales as $1/\sqrt{n}$ with no dependence on the alphabet size $\ab$; while frequency estimation can be done with zero error, as each user can simply send their data.}  Formal definitions, as well as the relation between these two questions, can be found in~\cref{sec:preliminaries}. In this paper, we therefore focus on frequency estimation, and will point out the implications for distribution learning as corollaries.

\paragraph{Connection to shuffle privacy.} To conclude this subsection, we mention that besides the use of large values of $\priv$ in practical deployments of locally private algorithms, which calls for a better understanding of the tasks in this parameter regime, another motivation for studying the ``low-privacy regime'' comes from the emergence of another model of differential privacy, \emph{shuffle privacy}~\cite{CheuSUZZ19,erlingsson2020}, and its increasingly widespread adoption. Indeed, it is known that one generic way to obtain shuffle private algorithms is via the so-called ``amplification-by-shuffling'' technique (see, \eg~\cite{FeldmanMT21}), whereby a locally private algorithm with high $\priv$ (low privacy) yields a shuffle private protocol with small $\priv$ (high privacy), and the same number of messages and communication cost. This makes characterizing the low-privacy regime of locally private algorithms a very consequential question, especially for fundamental tasks such as frequency estimation and the related distribution learning.

\subsection{Prior work}
    \label{ssec:prior}
Many innovative algorithms have been proposed in recent years for locally private histograms and distribution estimation~\cite{ErlingssonPK14,AcharyaS:19,ChenKO23,YeB17,WangHWNXYLQ16,pmlr-v139-feldman21a,feldman22a}, with a subset focusing on $\lp[\infty]$-error.\footnote{Indeed, for distribution estimation, a common error measure in the literature is $\lp[1]$, as it corresponds to total variation distance.} A lower bound of $\lp[\infty]=\Omega(\sqrt{\log\ab/\priv^2\ns})$ for frequency estimation under local differential privacy was established by~\cite{bassily2015} for most regimes of $\ab$, $\ns$, and small privacy parameter $\priv$; and various LDP protocols asymptotically achieving this bound have been proposed (see~\cref{tab:previous-results}).

Most recently, upper bounds on $\lp[\infty]$ for frequency estimation were derived in~\cite{ChenKO23}, and evaluated empirically in~\cite{feldman22a}. For large $\priv$, \ie the ``low-privacy regime,'' to the best of our knowledge the best current results are (1) a bound of $O(\sqrt{\log\ab/\priv\ns})$ on the $\lp[\infty]$ error rate established in~\cite{ChenKO23}, and, (2)~an upper bound of $O(\sqrt{\log\ab/(e^{\priv/2}\ns)} + (\log\ab)/\ns)$ from~\cite{HuangQYC22} (Theorem~2), based on CountSketch: note that this error does not vanish as $\priv$ grows.

In~\cite{pmlr-v139-feldman21a}, it was stated that under most commonly used metrics ($\lp[\infty],\lp[2]^2,\lp[1]$), the error is primary driven by the variance of an algorithm. We show that at least for $\lp[\infty]$ a close look at sub-Gaussian, and sub-gamma behaviour can provide a stronger understanding of the error in low-privacy regimes which are of particular interest in the shuffle model~\cite{FeldmanMT21}. Additionally,~\cite{ChenKO23} raise the question of whether the upper bound of $O(\sqrt{\log\ab/\priv\ns})$, is tight in the low privacy regime. We show in~\cref{theo:optimal:rappor:ub} that, quite surprisingly, it is not.

\begin{table}[h]\footnotesize
    \centering
    \begin{tabular}{|c|c|c|c|c|c|}\hline
         Protocol & Private-coin & Communication & $\lp[\infty]$ error & $\lp[2]^2$ error \\\hline
         \multirow{3}{*}{RAPPOR} & \multirow{3}{*}{$\checkmark$} & \multirow{3}{*}{$\ab$} & $\sqrt{\frac{\log\ab}{\ns\min(1, \priv^2)}}$ & \multirow{3}{*}{$\frac{\ab e^{\priv/2} }{\ns\Paren{e^{\priv/2}-1}^2}$} \\
         &&& \cellcolor{blue!10}$\sqrt{\frac{\log\ab}{\ns\min(\priv, \priv^2)}}$ &\\
         &&& \cellcolor{blue!10}$\sqrt{\frac{\log\ab}{\ns e^{\priv/2}}} + \frac{\log\ab}{\ns\priv}\cdot \log\ns$ &\\\hline
         \makecell{Subset Selection\\\cite{WangHWNXYLQ16}} &$\checkmark$ &$\frac{\ab}{e^\priv}\max(1,\priv)$ &? & $\frac{\ab e^\priv}{\ns\Paren{e^\priv-1}^2}$ \\\hline
         \makecell{(G)HR\\\cite[Theorem~7]{AcharyaS:19}}& $\checkmark$ & $\log\ab$ & ? & $\frac{\ab e^\priv}{\ns\Paren{e^\priv-1}^2}$ \\\hline
         \makecell{RHR\\\cite[Theorem~3.1]{ChenKO23}} & $\times$ & $\numbits$& ? & $\frac{\ab}{\ns\min\Paren{\Paren{e^{\priv/2}-1}^2, e^\priv, 2^{\numbits}, \ab}}$ \\\hline
         \makecell{No name\\\cite[Theorem~3.1]{ChenKO23}} & $\times$ & $\numbits$ & $\sqrt{\frac{\log\ab}{\ns\min(\priv, \priv^2, \numbits)}}$ & $\frac{\ab e^{2\priv/\numbits} }{\ns\numbits\Paren{e^{\priv/\numbits}-1}^2}$ $(\dagger)$ \\\hline
         \makecell{CountSketch-based\\\cite{HuangQYC22}} & $\times$ & $\max(\priv,\log\frac{1}{\priv})$ & $\sqrt{\frac{\log\ab}{n(e^{\priv/2}-1)^2}}+\frac{\log\ab}{n}$& $\frac{\ab e^{\priv/2} }{\ns\Paren{e^{\priv/2}-1}^2}$ \\\hline
         \makecell{PGR\\\cite{feldman22a}} & $\checkmark$ & $\log\ab$ & \cellcolor{blue!10} $\sqrt{\frac{\log\ab}{ne^\priv}}+\frac{\log\ab}{n\priv}\cdot\log\ns$ & $\frac{\ab e^{\priv} }{\ns\Paren{e^{\priv}-1}^2}$ \\\hline
    \end{tabular}
    \caption{Selection of best known upper bounds for communication, $\lp[\infty]$ and $\lp[2]^2$-error, where the shaded cells are our results ($\dagger$ \textit{indicates results not established in the corresponding paper, but that we derive for completeness.}) As discussed above, ``private-coin'' refers to the fact that the users do not require access to a common random seed (typically easier to implement than ``public-coin'' protocols, where they do). We note that all public-coin protocols mentioned here can be made private-coin at the cost of an $O(\log\ab)$ blowup factor in the communication cost.
    }
    \label{tab:previous-results}
\end{table}
Histograms have also seen detailed analysis in the shuffle model as both a practical question~\cite{GhaziG0PV21,Ghazi0MP20,CheuSUZZ19,CheuZ22,Bittau17} and as a benchmark for reasoning about the power of the shuffle model~\cite{balcer2020}. As two points of comparison we highlight~\cite{GhaziG0PV21} which introduced a multiple--message protocol for shuffled histograms achieving expected $\lp[\infty]$-error $O({\sqrt{\log\ab\log(1/\privdelta)}/(\priv\ns)})$ with $O({\ab^{1/100}})$ rounds of communication, and~\cite{balcer2020} which demonstrated expected maximum error of $O(\log(1/\privdelta)/(\priv^2\ns))$ with $\ab+1$ rounds. While our results do not achieve the latter bound, we show surprisingly that the former can be achieved in some fairly permissive parameter regimes with only a single message.
\subsection{Overview of results}
We here summarise our main results, and briefly discuss the underlying techniques. While we focus on two protocols in particular, the techniques themselves are broadly applicable. Our first set of results concerns the $\lp[\infty]$ error rate achievable by LDP protocols for frequency estimation. We first recall, as a baseline, a general transformation which converts any LDP protocol with optimal error in the high-privacy regime into another LDP protocol with \emph{reasonable error} in the low-privacy regime as well, at the cost of a blowup in communication:
\begin{proposition}[Informal; see~\cref{prop:generic:transformation}]
Let $A$ be any locally private protocol for frequency estimation with expected $\lp[\infty]$ error $\bigO{\sqrt{{\log\ab}/{(\ns\priv^2)}}}$ for $\priv \leq 1$, using $\numbits$ bits of communication per user. Then there is a locally private protocol $A'$ achieving error
\[
\bigO{\sqrt{\frac{\log\ab}{\ns\min(\priv,\priv^2)}}}
\]
for all $\priv > 0$, using $\numbits\clg{\priv}$ bits of of communication per user.
\end{proposition}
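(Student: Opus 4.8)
The plan is to treat the two regimes $\priv\le 1$ and $\priv>1$ separately and then combine them. For $\priv\le 1$ there is nothing to do: take $A'=A$, which by hypothesis already has expected $\lp[\infty]$ error $\bigO{\sqrt{\log\ab/(\ns\priv^2)}}=\bigO{\sqrt{\log\ab/(\ns\min(\priv,\priv^2))}}$ and uses $\numbits=\numbits\clg{\priv}$ bits per user. So suppose $\priv>1$, set $k\eqdef\clg{\priv}$ and $\priv'\eqdef\priv/k\in(1/2,1]$, and define $A'$ as follows: each user runs $k$ independent copies of $A$'s local randomiser at privacy parameter $\priv'$ on their datum and sends all $k$ messages; the server feeds the resulting $\ns k$ messages into $A$'s aggregation procedure and returns its estimate $\hat{q}$. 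The key observation is that, taken as a whole, $A'$ on $X=(X_1,\dots,X_\ns)$ returns $\hat{q}$ distributed exactly as the output of the base protocol $A$ --- run at parameter $\priv'$ --- on the \emph{inflated} dataset $X^{(k)}\in\cX^{\ns k}$ obtained by replacing each $X_\ell$ by $k$ identical copies of itself: the $\ns k$ ``virtual users'' are just the $\ns$ real users' $k$ randomiser runs, regrouped. (In particular $A'$ is private-coin whenever $A$ is.)

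Privacy is then immediate from basic (sequential) composition of pure differential privacy: each user's output composes $k$ mechanisms that are each $\priv'$-LDP, hence is $k\priv'=\priv$-LDP, so $A'$ is $\priv$-LDP. Communication is equally immediate: $k$ messages, each of at most $\numbits$ bits since $\priv'\le 1$ places every copy in the high-privacy regime covered by the hypothesis, for a total of $\numbits\clg{\priv}$ bits per user.

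For accuracy, the reason for inflating is that $X^{(k)}$ has the same empirical frequency vector as $X$, namely $q$ --- duplicating every point the same number of times leaves the empirical distribution unchanged. Applying the hypothesis on $A$ with $\ns k$ users and parameter $\priv'\le 1$ therefore yields
\[
\bEE{\norminf{\hat{q}-q}}=\bigO{\sqrt{\frac{\log\ab}{\ns k\,(\priv/k)^2}}}=\bigO{\sqrt{\frac{k\log\ab}{\ns\priv^2}}}.
\]
Since $\priv>1$ we have $k=\clg{\priv}\le\priv+1\le 2\priv$, so $k/\priv^2\le 2/\priv=2/\min(\priv,\priv^2)$ and the right-hand side is $\bigO{\sqrt{\log\ab/(\ns\min(\priv,\priv^2))}}$, as claimed; combining with the $\priv\le 1$ case finishes the proof.

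The part that needs care --- rather than a real obstacle --- is the reduction itself: one must read the hypothesis on $A$ as a guarantee that holds for \emph{every} alphabet, number of users, and input, in particular one with repeated elements, so that no concentration or unbiasedness property of $A$ has to be assumed. If one would rather rely only on $A$'s guarantee for the original $\ns$ users, the alternative is to keep the $k$ per-copy estimates $\hat{q}^{(1)},\dots,\hat{q}^{(k)}$ separate and average them; then the real work moves to showing that the per-coordinate errors concentrate at the \emph{variance} scale $\bigO{k/(\ns\priv^2)}$ --- a Bernstein/Bennett-type bound using that each per-user contribution is bounded --- rather than merely at the cruder Hoeffding scale $\bigO{k^2/(\ns\priv^2)}$, which is the one step there needing more than bookkeeping.
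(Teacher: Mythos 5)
Your argument is correct and is essentially the same as the paper's: both reduce to running $A$'s randomiser $\clg{\priv}$ times at privacy level $\priv/\clg{\priv}\le 1$, reinterpret the outputs as coming from $\ns\clg{\priv}$ virtual users holding an inflated dataset with the same empirical frequencies, and invoke the hypothesis on $A$ plus sequential composition for privacy. Your closing caveat (that the hypothesis on $A$ must be read as holding for every number of users and every input, including ones with repeats) is the same implicit assumption the paper's formal statement handles via the ``symmetric protocol'' requirement, and your remark about the alternative variance-scale averaging argument is a reasonable aside but is not needed for, nor used in, the paper's proof.
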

We emphasise that this transformation is not new, and mimics an argument found in, \eg~\cite{ChenKO23}. This general transformation is quite appealing, as it provides (in theory) good performance in the low-privacy regime ``for free'' given any good enough LDP protocol for the high-privacy one. However, this comes at a price: first, the communication blowup, which could be impractical; second, a loss in constant factors, which while relatively small might still be prohibitive; and, perhaps more importantly, \emph{this requires changing the existing algorithm} (and as a result the data analysis pipeline), which is often a significant hurdle. Our second result, focusing on one of the earliest, versatile, and (at least in its ``vanilla'' version) conceptually simple LDP protocols for frequency estimation, RAPPOR~\cite{ErlingssonPK14}, shows that this transformation is not actually necessary, and that RAPPOR actually achieves this improved bound \emph{without} modification:
\begin{theorem}[Informal; see~\cref{theo:rappor:improved}]
    \label{theo:rappor:improved:informal}
The (simple version of) RAPPOR achieves expected $\lp[\infty]$ error
\[
\bigO{\sqrt{\frac{\log\ab}{\ns\min(\priv,\priv^2)}}}
\]
for all $\priv > 0$, using $\ab$ bits of of communication per user.
\end{theorem}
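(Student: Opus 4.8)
The plan is to analyse the simple version of RAPPOR directly, via a moment-generating-function argument that is sensitive to its \emph{sub-Poisson} tail behaviour rather than only to its variance. Recall that in this protocol each user one-hot-encodes their datum $X_i\in[\ab]$ and independently flips each of the $\ab$ coordinates, reporting a true bit correctly with probability $1-b$ and a false bit incorrectly with probability $b=1/(e^{\priv/2}+1)$; a change to $X_i$ alters exactly two coordinates, which is what makes the mechanism $\priv$-LDP, and the per-user message is $\ab$ bits. First I would record the standard debiased estimator $\hat q_j=\frac{1}{\ns(1-2b)}\bigl(\sum_{i=1}^{\ns} Y_{i,j}-\ns b\bigr)$, check that it is unbiased, and observe that $\hat q_j-q_j=\frac{1}{\ns(1-2b)}\sum_i Z_{i,j}$ where the $Z_{i,j}=Y_{i,j}-\bEE{Y_{i,j}}$ are independent, mean zero, of variance $b(1-b)$, and bounded by $1-b$ in magnitude --- equivalently, $\sum_i Z_{i,j}$ is a centred difference of two binomials whose counts are governed by the flip probability $b$. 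Consequently its moment generating function is dominated by that of a centred Poisson of mean $\ns b$, i.e.\ $\bEE{e^{\pm s(\hat q_j-q_j)}}\le\exp\!\bigl(\ns b\,(e^{\theta}-1-\theta)\bigr)$ with $\theta=s/(\ns(1-2b))$.

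Next I would run the exponential (Chernoff--union) method on the maximum: for every $\theta>0$,
\[
\bEE{\norminf{\hat q-q}}\ \le\ \frac{\log(2\ab)+\ns b\,(e^{\theta}-1-\theta)}{\theta\,\ns(1-2b)},
\]
and then optimise over $\theta$. For $\priv\le1$ one has $b\approx\tfrac12$ and $1-2b=\Theta(\priv)$, the optimal $\theta$ is small, $e^{\theta}-1-\theta\approx\theta^2/2$, and the bound collapses to the classical $\bigO{\sqrt{\log\ab/(\ns\priv^2)}}$. The genuinely new regime is $\priv>1$, where $b=\Theta(e^{-\priv/2})$ and $1-2b=\Theta(1)$, and the optimiser depends on the size of $\ns b$. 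When $\ns b\gtrsim\log\ab$ it is $\theta\asymp\sqrt{\log\ab/(\ns b)}=O(1)$, the generating function is still essentially Gaussian, and the bound becomes $O\!\bigl(\sqrt{b\log\ab/(\ns(1-2b)^2)}\bigr)=O\!\bigl(\sqrt{e^{\priv/2}\log\ab/(\ns(e^{\priv/2}-1)^2)}\bigr)=\bigO{\sqrt{\log\ab/(\ns\priv)}}$, using $(e^{\priv/2}-1)^2\gtrsim e^{\priv/2}\gtrsim\priv$. When instead $\ns b\ll\log\ab$ the optimiser is the \emph{larger} root $\theta\asymp\log\bigl(\log\ab/(\ns b)\bigr)$ --- this is precisely where the sub-Poisson, rather than sub-Gaussian, behaviour is being exploited --- and the bound reads $O\!\bigl(\log\ab/(\ns\log(\log\ab/(\ns b)))\bigr)$.

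The crux --- and the step I expect to be most delicate --- is showing that this last quantity (equivalently, that the additive $O(\log\ab/\ns)$ term a naive Bernstein maximal inequality would leave behind in this regime) is still $\bigO{\sqrt{\log\ab/(\ns\priv)}}$: a Bernstein bound yields only $O\!\bigl(\sqrt{\log\ab/(\ns\priv)}+\log\ab/\ns\bigr)$, and the additive term dominates the target exactly when $\priv\gtrsim\ns/\log\ab$, so the sub-gamma structure genuinely has to be used. I would dispatch this with a short case analysis keyed on the relative sizes of $\ns$, $\log\ab$, and $e^{\priv/2}$: first, whenever $\sqrt{\log\ab/(\ns\priv)}=\Omega(1)$ --- i.e.\ $\log\ab\gtrsim\ns\priv$ --- the trivial estimate $\norminf{\hat q-q}\le(1-b)/(1-2b)=O(1)$ already suffices, so we may assume $\log\ab\lesssim\ns\priv$; then, using $\ns b=\ns/(e^{\priv/2}+1)$, the exponent $\log(\log\ab/(\ns b))$ is $\gtrsim\priv$ whenever $\ns\lesssim e^{\priv/4}\log\ab$, which turns the sub-Poisson bound into $O(\log\ab/(\ns\priv))\le\bigO{\sqrt{\log\ab/(\ns\priv)}}$; and in the remaining window $e^{\priv/4}\log\ab\lesssim\ns\lesssim e^{\priv/2}\log\ab$ one substitutes $\ns b=e^{-u}\log\ab$ with $u=\Theta(\log(\log\ab/(\ns b)))$ and checks directly that $O(\log\ab/(\ns u))\le\bigO{\sqrt{\log\ab/(\ns\priv)}}$. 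Combining these with the $\priv\le1$ case gives $\bEE{\norminf{\hat q-q}}=\bigO{\sqrt{\log\ab/(\ns\min(\priv,\priv^2))}}$ for all $\priv>0$, at the unchanged $\ab$-bit per-user communication of simple RAPPOR.
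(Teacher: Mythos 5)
Your proof takes a genuinely different route from the paper's. The paper proves \cref{theo:rappor:improved} by invoking the Kearns--Saul inequality~\eqref{eq:subg-norm}, which gives the \emph{exact} sub-Gaussian parameter $\sigma^2(p)=\tfrac{2p-1}{2\log\frac{p}{1-p}}$ of a biased Bernoulli, then simply applies the standard ``expected maximum of $k$ sub-Gaussians'' bound; this yields a single explicit formula, $\sqrt{2(e^{\priv/2}+1)\log\ab\,/\,\ns(e^{\priv/2}-1)\priv}$, which is manifestly $O(\sqrt{\log\ab/(\ns\min(\priv,\priv^2))})$ for all $\priv>0$ with no case analysis whatsoever. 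Your sub-Poisson (Bennett-type) MGF approach is also valid in principle~--~the MGF bound $\bEE{e^{\theta n\bar Z_j}}\le e^{\ns b(e^{\theta}-1-\theta)}$ is correct even though $\sum_i Z_{i,j}$ is a signed difference of two binomials (one checks $e^{-\theta}-1+\theta\le e^\theta-1-\theta$ for $\theta\ge0$, so the larger ``tail'' dominates)~--~and it is conceptually closer to what the paper does for the \emph{stronger} \cref{theo:optimal:rappor:ub}, which exploits sub-gamma rather than sub-Gaussian behaviour via the local Glivenko--Cantelli results of~\cite{CohenK23}. But for the present target the Kearns--Saul route is strictly simpler, and the price you pay is exactly the case analysis you flag as ``the most delicate step.''

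That step, as written, has a genuine gap. In your case (c), $e^{\priv/4}\log\ab\lesssim\ns\lesssim e^{\priv/2}\log\ab$, you set $u=\Theta(\log(\log\ab/(\ns b)))$ and claim $O(\log\ab/(\ns u))\le O(\sqrt{\log\ab/(\ns\priv)})$. However $u\to 0$ as $\ns b\to\log\ab$ (the upper end of that window), and the left-hand side then diverges while the right-hand side remains finite, so the displayed inequality simply fails there~--~the ``large-root'' asymptotic $\theta^\ast\approx\log(\log\ab/(\ns b))$ for the Chernoff optimizer is only accurate when $\log\ab/(\ns b)\gg 1$, and the passage from ``optimal Chernoff bound'' to ``$O(\log\ab/(\ns u))$'' silently used it. Concretely, at the optimizer $\theta^\ast$ (which solves $(\theta^\ast-1)e^{\theta^\ast}+1=\log(2\ab)/(\ns b)$) the bound is $b(e^{\theta^\ast}-1)/(1-2b)$, which stays finite and transitions smoothly into the sub-Gaussian regime of your case (d); but to push this through you need a further sub-case distinction near the boundary $\ns b\asymp\log\ab$ (or simply observe that for any $\priv=O(1)$ the vanilla Kearns--Saul bound already matches the target up to constants, and restrict the Bennett analysis to $\priv$ large). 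The conclusion is correct, but the case split as stated does not yet establish it.
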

\noindent In comparison, using the generic transformation above on RAPPOR would require $\clg{\priv}$ $\ab$-bit messages per user, and, in terms of worst-case theoretical bounds, an expected error worse by a factor $\simeq 2.04$. %
As such, our first result can be summarized as saying that \emph{analyzing (again) an existing algorithm can be better than modifying it}~--~and, quite importantly, that it may not be necessary to change an existing algorithmic pipeline to inherit better guarantees.

The proof of the above result relies on a careful analysis of the expected maximum of sums of Bernoulli random variables, and specifically on a fine-grained analysis of their subgaussian behaviour in the ``highly biased'' regime. While RAPPOR is particularly amenable to this analysis, we believe that this technique is highly general and applicable to a broad range of LDP protocols, for example those following the general ``scheme template'' of~\cite{ASZ:18:HR}. \smallskip

Yet, trying to establish optimality of this $\min(\priv,\priv^2)$ scaling turned out to be very challenging. And indeed, \emph{this is for a good reason:} as we show, it is actually possible to achieve significantly better error rate in the low-privacy regime~--~and, surprisingly, this much better error is again attained by RAPPOR, out-of-the-box:
\begin{theorem}[Informal; see~\cref{theo:optimal:rappor:ub}]
\label{theo:rappor:optimal:informal}
The (simple version of) RAPPOR achieves expected $\lp[\infty]$ error
\[
\tildeO{\max\Paren{\sqrt{\frac{\log\ab}{\ns e^{\priv/2}}},\frac{\log\ab}{\ns\priv}}}
\]
for all $\priv \geq 1$, using $\ab$ bits of of communication per user.
\end{theorem}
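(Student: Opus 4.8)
The plan is to take the simple (symmetric) version of RAPPOR: each user one‑hot‑encodes their item in $\{0,1\}^\ab$ and flips every coordinate independently, keeping a ``true'' coordinate with probability $p=\frac{e^{\priv/2}}{1+e^{\priv/2}}$ and a ``false'' one with probability $q=\frac{1}{1+e^{\priv/2}}$, and to analyze the textbook unbiased estimator $\hat q_j=\frac{Z_j-\ns q}{\ns(p-q)}$, where $Z_j=\sum_{u}Y_{u,j}$ is the coordinatewise sum of the privatized messages. Since $\hat q_j-q_j=\frac{Z_j-\bEE{Z_j}}{\ns(p-q)}$ and $p-q=1-2q=\Theta(1)$ for $\priv\ge1$, the whole problem reduces to bounding $\bEE{\max_j\abs{Z_j-\bEE{Z_j}}}$. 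Two structural facts drive the argument. First, conditioned on the (worst‑case, adversarial) dataset $X$, all the bits $Y_{u,j}$ are mutually independent, so $Z_1,\dots,Z_\ab$ are independent, and what we need is the expected maximum of $\ab$ \emph{independent} random variables. Second, using that $1-\mathrm{Ber}(p)$ equals $\mathrm{Ber}(q)$ in distribution, one can write $Z_j-\bEE{Z_j}=\tilde B^{(2)}_j-\tilde B^{(1)}_j$ with $\tilde B^{(1)}_j,\tilde B^{(2)}_j$ independent and distributed as centered $\mathrm{Bin}(c_j,q)$ and $\mathrm{Bin}(\ns-c_j,q)$, where $c_j$ is the true count of item $j$; thus $Z_j-\bEE{Z_j}$ is a centered sum of $\ns$ \emph{highly biased} Bernoullis, each with the same small success probability $q=\Theta(e^{-\priv/2})$. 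It is this high‑bias structure — invisible to the subgaussian analysis behind \cref{theo:rappor:improved} — that we will exploit.

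Next I would treat the downward and upward deviations of $Z_j-\bEE{Z_j}$ separately. The downward tail is purely subgaussian: $\mathrm{Var}(Z_j)=\ns q(1-q)\le\ns q$, so a Chernoff bound plus a union bound over the $\ab$ coordinates yields $\bEE{\max_j(\bEE{Z_j}-Z_j)}=O\!\left(\sqrt{\ns q\log\ab}\right)$, which after dividing by $\ns(p-q)$ is exactly the first term $\sqrt{\log\ab/(\ns e^{\priv/2})}$. The upward tail is where the gain comes from, and here I would use the sharp Bennett/sub‑Poisson moment bound in place of Bernstein's inequality: from $\bEE{e^{\lambda(\mathrm{Ber}(q)-q)}}\le e^{q(e^\lambda-1-\lambda)}$ — and noting that the $\mathrm{Ber}(p)$‑type summands contribute only subgaussianly in the upward direction — one gets $\bEE{e^{\lambda(Z_j-\bEE{Z_j})}}\le e^{\ns q(e^\lambda-1-\lambda)}$ uniformly over $j$ and over $X$, hence $\Pr[Z_j-\bEE{Z_j}>t]\le e^{-\ns q\,h(t/(\ns q))}$ with $h(x)=(1+x)\ln(1+x)-x$. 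This is a genuine Poisson‑type tail, qualitatively stronger than Bernstein once $q$ is tiny, because the effective ``scale'' in the super‑linear part of $h$ is governed by $\ln(1/q)=\Theta(\priv)$ rather than by a constant. A union bound over the $\ab$ coordinates plus integration of the tail then gives $\bEE{\max_j(Z_j-\bEE{Z_j})}=O(t^\star)$, where $t^\star$ is the least $t$ with $\ns q\,h(t/(\ns q))\ge\log\ab$.

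The crux — and the step I expect to be the main obstacle — is to convert $\ns q\,h(t^\star/(\ns q))\asymp\log\ab$ into an upper bound $t^\star=\tildeO{\sqrt{\ns q\log\ab}+\log\ab/\priv}$, via a careful case analysis on the ``effective Poisson rate'' $\nu\eqdef\ns q=\Theta(\ns e^{-\priv/2})$. When $\nu\ge\log\ab$ one is in the quadratic part of $h$, so $t^\star\asymp\sqrt{\nu\log\ab}$ and the first term is recovered. When $\nu<\log\ab$ one is in the super‑linear part, $h(x)\asymp x\ln x$, and solving $x\ln x\asymp\log\ab/\nu$ gives $t^\star\asymp\log\ab/\ln(\log\ab/\nu)$; the task is then to show that the ``scale'' $\ln(\log\ab/\nu)$ is $\Omega(\priv/\log\ns)$ in the relevant parameter range — this is where $\ln(1/q)\ge\priv/2$ enters, together with the reduction that (up to trivialities, since $\abs{\hat q_j-q_j}=O(1)$ always) one may assume the target bound is below $1$, which forces $\ns e^{\priv/2}=\Omega(\log\ab)$ — so that $t^\star=\tildeO{\log\ab/\priv}$ there. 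Dividing by $\ns(p-q)=\Theta(\ns)$ and adding the two tail contributions gives the claimed bound. The delicate bookkeeping sits entirely in this threshold estimate: pinning $t^\star$ down across the crossover between the two regimes of $h$, controlling the doubly‑logarithmic factors that surface when $\nu$ is very small, handling the interplay between $\priv$ and $\log\ns$ (which is exactly why the precise statement carries a $\log\ns$ on the second term), and checking that the tail mass beyond $t^\star$ and the $\priv\ge1$ normalization are of lower order — so that everything collapses into $\tildeO{\max\!\left(\sqrt{\log\ab/(\ns e^{\priv/2})},\ \log\ab/(\ns\priv)\right)}$.
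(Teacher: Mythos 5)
Your route is genuinely different from the paper's, even though both exploit the same underlying "sub-gamma/sub-Poisson beats sub-Gaussian when $q$ is tiny" phenomenon. The paper decomposes $\bar Y_i=\bar Y_i^+ + \bar Y_i^-$, invokes a coupling lemma (\cref{lemma:coupling:binomials}) to replace $\mathrm{Bin}(\ns_i,\cdot)$ by $\mathrm{Bin}(\ns,\cdot)$ so that all coordinates have the \emph{same} mean, and then applies the local Glivenko--Cantelli bound of~\cite{CohenK23} (\cref{theo:local:gc:ck}) as a black box, which is exactly where the $\log\ns$ factor comes from. You instead prove the needed concentration from scratch via the Bennett MGF bound $\bEE{e^{\lambda(Z_j-\bEE{Z_j})}}\le e^{\ns q(e^\lambda-1-\lambda)}$, union bound, and tail integration, with the threshold analysis of $\ns q\,h(t^\star/\ns q)\asymp\log\ab$ doing the work that \cref{theo:local:gc:ck} does for the paper. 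A nice feature of your route is that the per-summand MGF bound holds uniformly in $c_j$ (since centered $\mathrm{Ber}(q)$ and $\mathrm{Ber}(1-q)$ both satisfy it), so you bypass the coupling lemma entirely; a disadvantage is that it is more bespoke, whereas the paper's appeal to~\cite{CohenK23} (which, importantly, does not require independence across coordinates) transfers verbatim to Projective Geometry Response in \cref{theo:optimal:pgr:ub}, for which your coordinate-independence remark would be false. Two inaccuracies in your sketch, neither fatal: (i) the downward tail of $Z_j-\bEE{Z_j}$ is \emph{not} purely sub-Gaussian — it equals the upper tail of a centered $\mathrm{Bin}(c_j,q)$ plus the (bounded) lower tail of $\mathrm{Bin}(\ns-c_j,q)$, so it is sub-Poisson with rate $\le\ns q$ just like the upward tail; this happens not to change the answer, since both tails then obey the same $t^\star$, but the clean "downward $\Rightarrow$ first term, upward $\Rightarrow$ second term" dichotomy is wrong. (ii) The intermediate claim "$\ln(\log\ab/\nu)=\Omega(\priv/\log\ns)$" is not true in the stated generality; the correct bookkeeping instead splits on whether $\log\ns\lesssim\priv$ — if so, $\ln(\log\ab/\nu)\ge\ln(1/q)-\ln\ns\gtrsim\priv$ and $t^\star\lesssim\log\ab/\priv$ with no $\log\ns$ at all, and if not, $\log\ns\gtrsim\priv$ makes the second target term $\gtrsim\log\ab$ and hence at least as large as $t^\star$ (which is $O(\log\ab)$ once $\ln(\log\ab/\nu)=\Omega(1)$, and is absorbed by $\sqrt{\nu\log\ab}$ near the crossover). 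With that correction the crux step goes through and the bound matches \cref{theo:optimal:rappor:ub}.
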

\noindent The $\tilde{O}$ notation here only hides a single $\log\ns$ factor in the second term. As the $\lp[\infty]$ error is always at most $1$, up to this $\log\ns$ factor this is always better than~\cref{theo:rappor:improved:informal}.

The proof of this result requires going beyond the sub-gaussian concentration behaviour alluded to before, and instead analyse the maximum of these sums of Bernoulli random variables as \emph{sub-gamma} random variables. More precisely, we draw upon very recent work by~\cite{CohenK23} (see also~\cite{BlanchardV24}) on ``local Glivenko--Cantelli'' results, which provide refined concentration bounds for mean estimation of high-dimensional product distributions -- in our case, the distributions over $\{0,1\}^\ab$ arising from the use of RAPPOR.

The above result is appealing, in that it not only yields better error rate than previously known (or, in many cases, believed to be possible) in the $\priv \gg 1$ regime; but also in that it is achieved ``for free'' by an existing and widely used algorithm. However, it does have one negative aspect: namely, that RAPPOR is, in its standard version, very inefficient from a communication point of view, as it require one $\ab$-bit message from each user~--~far from the ideal $O(\log\ab)$ bits one could hope for. Luckily, as mentioned above, the analysis underlying~\cref{theo:rappor:optimal:informal} is quite general, and applies to a broad range of locally private estimation algorithms. While it does not lead to the improved bound for \emph{all} such protocols, we show that it applies, for instance, to the recently proposed Projective Geometry Response protocol of Feldman, Nelson, Nguyen, and Talwar~\cite{feldman22a}, whose performance was originally only analyzed for $\lp[2]$ error:
\begin{theorem}[Informal; see~\cref{theo:optimal:pgr:ub}]
\label{theo:pgr:optimal:informal}
Projective Geometry Response (PGR) achieves expected $\lp[\infty]$ error
\[
\tildeO{\max\Paren{\sqrt{\frac{\log\ab}{\ns e^{\priv}}},\frac{\log\ab}{\ns\priv}}}
\]
for all $\priv \geq 1$, using $O(\log \ab)$ bits of of communication per user.
\end{theorem}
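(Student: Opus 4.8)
The plan is to follow the template of the proof of \cref{theo:optimal:rappor:ub}: recast Projective Geometry Response (PGR) in the same ``blanket plus amplification'' form as RAPPOR, and then feed the resulting sums of independent Bernoullis into the same sub-gamma / local Glivenko--Cantelli analysis. Concretely, identify $\cX$ with (a subset of) the points of a projective geometry $\mathrm{PG}(\pp,\mathsf q)$ over the finite field of order $\mathsf q$, chosen so that $\mathsf q = \Theta(e^\priv)$ and $\pp = \Theta(\log\ab/\priv)$; then the number of points is $\Theta(\mathsf q^{\pp}) = \Theta(\ab)$ and each hyperplane is specified by $O(\pp\log\mathsf q) = O(\log\ab)$ bits, giving the claimed communication. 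A user with input $x_j$ reports a hyperplane $H_j$ drawn proportionally to $e^{\priv}$ among hyperplanes through $x_j$ and to $1$ among the others, and the estimator is the debiased count $\hat q_x = \frac{1}{\ns\Paren{p_{\mathrm{yes}}-p_{\mathrm{no}}}}\sum_{j=1}^{\ns}\Paren{\mathbf{1}\{x\in H_j\}-p_{\mathrm{no}}}$, where $p_{\mathrm{yes}} = \Pr[x\in H_j\mid x_j=x]$ and $p_{\mathrm{no}} = \Pr[x\in H_j\mid x_j\neq x]$. The only inputs the analysis needs from the geometry are the standard counts of hyperplanes through one point ($\asymp\mathsf q^{\pp-1}$) and through two points ($\asymp\mathsf q^{\pp-2}$); substituting them yields $p_{\mathrm{no}}=\Theta(1/\mathsf q)=\Theta(e^{-\priv})$ and a signal gap $p_{\mathrm{yes}}-p_{\mathrm{no}}=\Omega(1)$. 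The improvement over RAPPOR comes precisely from this blanket probability being $\Theta(e^{-\priv})$ rather than $\Theta(e^{-\priv/2})$.

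Given this, I would reduce $\bEE{\norminf{\hat q-q}}$ to $\Theta(1)\cdot\bEE{\max_{x\in\cX}\abs{\tfrac1\ns\sum_j\Paren{\mathbf{1}\{x\in H_j\}-\bEE{\mathbf{1}\{x\in H_j\}}}}}$ and, for each fixed $x$, split the inner sum over users into the $\ns q_x$ ``matching'' users (each an independent $\mathrm{Bernoulli}(p_{\mathrm{yes}})$) and the $\ns(1-q_x)$ ``blanket'' users (each an independent $\mathrm{Bernoulli}(p_{\mathrm{no}})$, $p_{\mathrm{no}}=\Theta(e^{-\priv})$). For the blanket part I would invoke verbatim the sub-gamma estimate used for RAPPOR: the blanket contribution to coordinate $x$ is a Poisson-type variable with mean $\lambda_x\asymp\ns e^{-\priv}$, and the maximum over the $\ab$ coordinates is governed by its sub-gamma behaviour (variance proxy $\asymp\lambda_x$, scale proxy $\asymp 1$); after dividing by $\ns$ this gives the sub-Gaussian rate $\sqrt{\log\ab/(\ns e^{\priv})}$ when $\ns e^{-\priv}\gtrsim\log\ab$ and the Poisson-tail rate $\asymp\log\ab/\log(\log\ab/\lambda_x)\asymp\log\ab/\priv$ (up to the advertised $\log\ns$ factor, which enters through $\log(\log\ab/\lambda_x)\gtrsim\priv-\log\ns-\log\log\ab$) otherwise, the maximum of the two being the claimed bound. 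The matching part contributes, coordinate-by-coordinate, a sub-Gaussian fluctuation of standard deviation $\Theta(\sqrt{q_x/\ns})$; since $\sum_x q_x = 1$, at most $O(1/t)$ coordinates have $q_x\ge t$, so a dyadic union bound over sparsity levels shows it adds only a lower-order term in the parameter range where PGR beats existing bounds.

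I expect the genuinely new work, and the reason this is an \emph{application} of the RAPPOR analysis rather than a black-box corollary, to be the following. A PGR report is a single hyperplane, so the indicators $\{\mathbf{1}\{x\in H_j\}\}_{x\in\cX}$ are strongly correlated across coordinates---unlike the RAPPOR report, which is literally a product distribution on $\{0,1\}^\ab$---so the local Glivenko--Cantelli statement of \cite{CohenK23} does not apply off the shelf. What survives is that the sub-gamma argument only uses, per coordinate, independence across \emph{users} together with a uniform bound $p_{\mathrm{no}}=\Theta(e^{-\priv})$ on the non-matching per-user probabilities. So the real steps are (i) distilling from the proof of \cref{theo:optimal:rappor:ub} a protocol-agnostic lemma whose only inputs are $(p_{\mathrm{yes}},p_{\mathrm{no}},\{q_x\}_x)$ and whose conclusion is the $\ell_\infty$ bound, and (ii) verifying, via the projective-geometry counts, that PGR feeds into it with $p_{\mathrm{no}}=\Theta(e^{-\priv})$ and gap $\Omega(1)$. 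The concentration inequalities (Bernstein/Bennett), the dyadic peeling over sparsity levels, and the Poisson-tail bookkeeping are then identical to the RAPPOR case; the combinatorial verification in (ii) and the distillation in (i) are where all the new content lies.
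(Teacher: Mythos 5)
Your overall plan---split each coordinate's Bernoulli sum into ``matching'' and ``blanket'' contributions, verify from the projective-geometry counts that the blanket probability is $\Theta(e^{-\priv})$, and feed the result into a sub-gaussian/sub-gamma concentration argument---is broadly the right shape, and your headline calculations (blanket probability $\Theta(e^{-\priv})$, $O(\log\ab)$ bits of communication, improvement over RAPPOR coming from $e^{-\priv}$ replacing $e^{-\priv/2}$) all match what the paper does. But you have identified the wrong thing as ``the genuinely new work,'' and on that point your proposal would lead you well away from the actual proof.

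You claim that the CohenK23 local Glivenko--Cantelli theorem ``does not apply off the shelf'' because the PGR report induces strong dependence across coordinates, and you therefore propose to rebuild the Bernstein/Bennett and Poisson-tail machinery from scratch, together with a dyadic peeling argument over sparsity levels of $\{q_x\}$. This is exactly backwards from the paper. The paper explicitly observes (and relies on) the fact that the \emph{upper bound} of CohenK23's Theorem~3 never uses independence across coordinates; it only uses per-coordinate concentration of each centered Binomial together with a union bound, which is all that is needed once one fixes the per-user Bernoulli parameters. The paper therefore applies CohenK23 off the shelf to PGR, exactly as for RAPPOR (reusing the same coupling lemma, \cref{lemma:coupling:binomials}, to replace $\ns_x$-trial Binomials with $\ns$-trial ones), and the entire PGR proof then reduces to three elementary combinatorial lemmas about $s$, $\alpha$, $p^+$, and $p^-$. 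Your proposed re-derivation is unnecessary and would substantially lengthen the argument for no gain; the ``protocol-agnostic lemma'' you want to distil in step~(i) already exists and is precisely CohenK23. Relatedly, you parametrize by coverage probabilities $p_{\mathrm{yes}}\approx 1/2$ and $p_{\mathrm{no}}\approx e^{-\priv}$ and then need a separate dyadic argument for the $p_{\mathrm{yes}}$ part, whereas the paper works with the per-output-element probabilities $p^+ = e^\priv/(se^\priv+\ab-s)$ and $p^- = (s-c)/(se^\priv+\ab-s)$, shows $p^++p^-\leq 2/e^\priv$ and $\log^{-1}(1/p^+)+\log^{-1}(1/p^-)\leq 2/\priv$, and plugs these symmetric small parameters directly into CohenK23; no dyadic peeling is carried out at all, and the bound on $\alpha$ (Lemma~\ref{lem:alpha-bound}) does the work that your ``gap $\Omega(1)$'' claim is gesturing at.
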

Note that again this is achieved ``out-of-the-box'' by an existing algorithm, without any modification! As a result, this inherits all of the features of PGR its authors originally established: crucially, its computational efficiency. We also point out that the stated guarantee is even slightly better than that of RAPPOR, as the first term now features an $e^{-\priv}$ dependence (instead of the $e^{-\priv/2}$ of~\cref{theo:rappor:optimal:informal}).\smallskip

We then turn to prove optimally of this error rate, and show that this is, up to a logarithmic factor, optimal for any LDP protocol:
\begin{theorem}[Informal; see~\cref{thm:expected-max-lb}]
\label{thm:informal-lower-bound}
Any LDP protocol for frequency estimation must have, in the worst case, expected $\lp[\infty]$ error
\[
\bigOmega{ \sqrt{\frac{\log\ab}{\ns\priv^2}} }
\]
for $\priv \in (0,1]$, and
\[
\bigOmega{\max\Paren{\sqrt{\frac{\log\ab}{\ns e^\priv}},\frac{\log\ab}{\ns\priv}}}
\]
for $\priv \geq 1$.
\end{theorem}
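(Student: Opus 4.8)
The plan is to obtain each of the three claimed rates from a \emph{selection} (multiple-hypothesis-testing) lower bound together with Fano's inequality, using two complementary bounds on how much a transcript of an $\priv$-LDP protocol can reveal about a hidden index. In each case we fix a finite family of distributions over $\cX$ (with $|\cX|=\ab$), draw a uniformly random index $J$, let the users' data be generated (i.i.d.\ or deterministically) from the $J$-th distribution, and let $Z=(Z_1,\dots,Z_\ns)$ denote the transcript. The family is chosen so that (i) the distributions are pairwise $\Omega(\gamma)$-separated in $\lp[\infty]$ for a parameter $\gamma$ we will optimize, and (ii) in the relevant regime the \emph{empirical} distribution of the generated dataset lies within $o(\gamma)$ of the $J$-th distribution in $\lp[\infty]$ (automatic when the dataset is deterministic), so that any estimate $\hat q$ with $\norminf{\hat q-q}<\gamma/3$ satisfies $\arg\max_i\hat q_i=J$. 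By Markov's inequality, a protocol with worst-case expected $\lp[\infty]$ error $\delta=o(\gamma)$ would then recover $J$ with probability at least $2/3$; Fano shows this is impossible once $I(J;Z)\lesssim\log\ab$, so it remains to bound $I(J;Z)$ and take $\gamma$ as large as this permits.

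\paragraph{The $\sqrt{\log\ab/(\ns e^{\priv})}$ and $\sqrt{\log\ab/(\ns\priv^2)}$ terms.} For these we use a ``diffuse needle'' family: $p_j\eqdef(1-\gamma)\,\mathrm{Unif}_{\ab}+\gamma\,\delta_j$ for $j\in[\ab]$, where $\mathrm{Unif}_{\ab}$ is uniform on $\cX$ and $\delta_j$ is the point mass at $j$, so $p_j$ and $p_{j'}$ differ by exactly $\gamma$ on coordinates $j$ and $j'$. Since the data is i.i.d., $I(J;Z)\le\ns\,I(J;Z_1)$; writing $Q(\cdot\mid x)$ for the channel output on input $x$, $Q_p$ for the output when the input is distributed as $p$, and $\bar Q\eqdef\frac1\ab\sum_i Q(\cdot\mid i)$, the mixture $\mathbb{E}_J[p_J]$ equals $\mathrm{Unif}_{\ab}$, so $Q_{\mathbb{E}_J[p_J]}=\bar Q$ and $Q_{p_j}-\bar Q=\gamma\,(Q(\cdot\mid j)-\bar Q)$, whence
\[
I(J;Z_1)\;\le\;\frac1\ab\sum_{j\in[\ab]}\chi^2\!\left(Q_{p_j}\,\big\|\,\bar Q\right)\;=\;\gamma^2\int\frac{\mathrm{Var}_J\!\left(Q(z\mid J)\right)}{\bar Q(z)}\,dz.
\]
The crucial step is to bound the numerator two ways using only the pure-LDP constraint $\max_x Q(z\mid x)\le e^{\priv}\min_x Q(z\mid x)$. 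First,
\[
\mathrm{Var}_J\!\left(Q(z\mid J)\right)\le\mathbb{E}_J\!\left[Q(z\mid J)^2\right]\le\bar Q(z)\,\max_j Q(z\mid j)\le e^{\priv}\bar Q(z)^2,
\]
giving $I(J;Z_1)\le\gamma^2 e^{\priv}$; second, by Popoviciu's inequality $\mathrm{Var}_J(Q(z\mid J))\le\tfrac14\left(\max_j Q(z\mid j)-\min_j Q(z\mid j)\right)^2\le\tfrac14(e^{\priv}-1)^2\bar Q(z)^2$, giving $I(J;Z_1)\lesssim\priv^2\gamma^2$ for $\priv\le1$. Plugging $I(J;Z)\le\ns\gamma^2 e^{\priv}$ (always) and $I(J;Z)=O(\ns\gamma^2\priv^2)$ (for $\priv\le1$) into Fano, recovery of $J$ is impossible once $\gamma$ is a sufficiently small constant times $\sqrt{\log\ab/(\ns e^{\priv})}$ (respectively $\sqrt{\log\ab/(\ns\priv^2)}$), which yields the first term for $\priv\ge1$ and the whole bound for $\priv\in(0,1]$.

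\paragraph{The $\log\ab/(\ns\priv)$ term.} Here we use a ``concentrated needle'' on the domain $\{\star\}\cup[\ab]$: in instance $j$, a fixed set of $\lceil\gamma\ns\rceil$ users holds the value $j$ and all remaining users hold the dummy value $\star$. This dataset is deterministic, and its empirical distribution puts mass at least $\gamma$ on $j$ and $0$ on every other non-dummy symbol, so separation and recoverability are immediate (no concentration argument needed). Only the $\lceil\gamma\ns\rceil$ ``needle'' users send messages that depend on $J$, and for each such user the standard capacity bound for pure LDP, $I(\mathrm{input};\mathrm{output})=\mathbb{E}_X[D(Q(\cdot\mid X)\,\|\,Q_{\mathrm{out}})]\le\priv$ (as $Q(z\mid x)/Q_{\mathrm{out}}(z)\le e^{\priv}$), gives $I(J;Z)\le\lceil\gamma\ns\rceil\,\priv$. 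Fano then forbids recovery once $\gamma$ is a sufficiently small constant times $\log\ab/(\ns\priv)$, yielding the second term for $\priv\ge1$ (in the meaningful range $\priv\lesssim\log\ab$).

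\paragraph{Remarks.} All three arguments extend verbatim to public-coin protocols: conditioning on the public coin $U$, each message is still the output of an $\priv$-LDP channel, so the per-round information bounds hold given $U$, and since $J\perp U$ we have $I(J;Z,U)=I(J;Z\mid U)$, to which Fano applies. We expect the main obstacle to be the variance inequality $\mathrm{Var}_J(Q(z\mid J))\le e^{\priv}\bar Q(z)^2$ --- bounding the second moment by mean$\,\times\,$max and then invoking pure LDP --- together with the choice of a \emph{uniform} base distribution (so that the mixture's output is exactly $\bar Q$ and no spurious $e^{\priv}$ enters the $\chi^2$ denominator): this is precisely what produces the $e^{-\priv}$ scaling matching \cref{theo:rappor:optimal:informal,theo:pgr:optimal:informal}, rather than the weaker $\priv^{-2}$ or $e^{-2\priv}$. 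The remaining ingredients are routine but somewhat delicate: the constant factors; the parameter-regime bookkeeping needed so that the claimed rates are $O(1)$ and the empirical distribution concentrates adequately (essentially $e^{\priv}\lesssim\ab$, with a case split in the diffuse-needle construction according to whether $\ns$ is large or small relative to $\ab$, and in some corners a variant in which the bulk mass is placed on few coordinates); and the case of constant $\ab$, handled by a two-point Le Cam argument.
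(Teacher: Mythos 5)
Your proof is essentially correct, and it follows the paper's route for the first two terms but departs genuinely on the third.

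For the two chi-square terms ($\sqrt{\log\ab/(\ns e^\priv)}$ and $\sqrt{\log\ab/(\ns\priv^2)}$), your argument is the same as the paper's: the same ``uniform plus point mass'' family, the observation that the mixture over $J$ is exactly uniform (so the $\chi^2$ denominator is exactly the marginal channel output and no spurious $e^\priv$ leaks in), and then two complementary uses of the pure-LDP ratio bound. You package these slightly more cleanly~--~a single identity $I(J;Z_1)\le\gamma^2\int\mathrm{Var}_J(Q(z\mid J))/\bar Q(z)\,dz$ followed by second-moment-by-max and by Popoviciu~--~where the paper writes out a coordinate-by-coordinate expansion via the vector $\Phi(X)=e_X-\tfrac1\ab\mathbf 1_\ab$ and then applies the LDP bound to $|Q(y\mid x)-\bE{X}{Q(y\mid X)}|$ (for the $(e^\priv-1)^2$ term) and to $Q(y\mid x)\le e^\priv\bE{X}{Q(y\mid X)}$ (for the $e^\priv$ term). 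The calculations are interchangeable.

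For the $\log\ab/(\ns\priv)$ term, you use a different hard instance and a different information bound. The paper keeps the same diffuse-needle family and bounds the KL divergence directly (not $\chi^2$): it writes $\bE{\p_Z}{Q(y\mid X)}=(1-4\dst)\bE{\bar p}{Q(y\mid X)}+4\dst\,Q(y\mid Z)$, splits the sum accordingly, notes the first piece is $-(1-4\dst)\,\kldiv{Q^{\bar p}}{Q^{\p_Z}}\le 0$, and bounds the log-ratio in the second piece by $\priv$, giving $I\le 4\dst\priv$. You instead switch to a deterministic ``concentrated needle'' dataset over $\{\star\}\cup[\ab]$ where only a $\gamma$-fraction of users hold the distinguishing symbol, and apply the standard per-message LDP channel-capacity bound $I(\text{input};\text{output})\le\priv$ to those users alone. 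Your version is more modular and makes the mechanism for the $\log\ab/(\ns\priv)$ rate conceptually transparent (only $\gamma\ns$ users carry any information about $J$); the paper's version has the aesthetic advantage of deriving all three terms from one family, which is the point the authors explicitly emphasise, but requires the slightly more delicate nonnegativity-of-KL decomposition. Both are valid.

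One small point worth flagging: your first two constructions are i.i.d., so what you directly prove is a lower bound for \emph{distribution} estimation; the translation to \emph{frequency} estimation requires the empirical distribution to concentrate to $o(\gamma)$ in $\lp[\infty]$, which you note correctly (and which requires roughly $e^\priv\lesssim\ab$). The paper takes the alternative route of formally stating its lower bound (\cref{thm:expected-max-lb}) for distribution estimation and leaving the transfer to frequency estimation to the generic \cref{claim:learning:v:frequency}; either way the bookkeeping is the same. Your deterministic third construction sidesteps this issue entirely, which is an additional small advantage of that variant.
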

We remark that this lower bound also applies to the ``easier'' question of distribution learning, as we will state momentarily. The first part of this lower bound, as mentioned before, was already known; the second part is new, and, while not necessarily difficult to show in hindsight, does require significant care in combining inequalities between various information-theoretic quantities to avoid ending up with a vacuous bound. In this sense, our main contribution for the lower bound is to establish it in a self-contained, streamlined fashion, by drawing on the ``chi-square contraction'' framework of~\cite{AcharyaCT:IT1}; and~--~importantly~--~that it matches the upper bound obtained earlier in \emph{both} the high- and low-privacy regimes. Another interesting aspect of this lower bound is that all three terms are derived separately, but from the same family of hard instances: a dataset where almost all users hold the same element.

\paragraph{Implications for shuffle privacy.} By combining our new results on LDP protocols for histograms in the low-privacy regime with known ``amplification-by-shuffling'' results, we are able to obtain a simple, robust shuffle (approximate) DP protocol using only \emph{one} message per user of logarithmic length, while achieving state-of-the-art:
\begin{theorem}[Informal; see~\cref{theo:pgr:shuffle}]
    Shuffling the Projective Geometry Response protocol achieves expected $\lp[\infty]$ error 
    \[
    \bigO{\frac{\sqrt{\log(\ab)\log(1/\privdelta)}}{n\priv}}
    \]
    for all $\priv\in[\Omega(1/\sqrt{\ns}),1]$, with $O(\log\ab)$ bits of communication and a single message per user.
\end{theorem}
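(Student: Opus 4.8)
The plan is to obtain the shuffle protocol by applying privacy amplification by shuffling to Projective Geometry Response (PGR), and then read off the utility from our local-model analysis of PGR. Run with local parameter $\priv_0$, PGR is a pure $\priv_0$-LDP mechanism in which each user sends a single $\bigO{\log\ab}$-bit message, and whose server-side estimator is a symmetric (order-independent) function of the $\ns$ received messages — it applies a fixed affine debiasing to the coordinate-wise counts. Interposing a uniform shuffler therefore leaves both the estimate and the message format untouched, so shuffled PGR inherits verbatim the $\lp[\infty]$ error of local PGR at parameter $\priv_0$, namely $\tildeO{\max\Paren{\sqrt{\log\ab/(\ns e^{\priv_0})},\,\log\ab/(\ns\priv_0)}}$ for $\priv_0\ge1$ by \cref{theo:pgr:optimal:informal}. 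We use PGR rather than RAPPOR here precisely because its first term scales as $e^{-\priv_0}$ rather than $e^{-\priv_0/2}$ — this is exactly what will make the final bound collapse to the claimed form — and because PGR already sends $\bigO{\log\ab}$-bit single messages.

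The only freedom left is the choice of $\priv_0$. Here I invoke the amplification-by-shuffling theorem of~\cite{FeldmanMT21}: composing an $\priv_0$-LDP randomizer with a uniform shuffler over $\ns$ users yields $(\priv,\privdelta)$-shuffle-DP with $\priv = \bigO{e^{\priv_0/2}\sqrt{\log(1/\privdelta)/\ns}}$, valid whenever $\priv\le1$ and $e^{\priv_0}\le c\,\ns/\log(1/\privdelta)$ for an absolute constant $c$. Given a target $\priv\in(0,1]$, I set $\priv_0 \eqdef \ln\!\Paren{c'\priv^2\ns/\log(1/\privdelta)}$, with $c'$ chosen so that this amplifies to privacy at most $\priv$. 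The upper condition $e^{\priv_0}\le c\,\ns/\log(1/\privdelta)$ becomes $\priv^2=\bigO{1}$, true for all $\priv\le1$; the condition $\priv_0\ge1$ needed to invoke the PGR bound becomes $\priv \ge \sqrt{e\log(1/\privdelta)/(c'\ns)}$, i.e.\ $\priv=\bigOmega{1/\sqrt{\ns}}$ once $\privdelta$ is not taken exponentially small — this is the stated range (the formal statement records the exact $\privdelta$-dependence, and at the very bottom of the range one simply clamps $\priv_0$ to $1$, which only helps privacy).

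With this choice $e^{\priv_0}=\Theta\Paren{\priv^2\ns/\log(1/\privdelta)}$, so the first term of the PGR guarantee equals $\sqrt{\log\ab/(\ns e^{\priv_0})} = \Theta\Paren{\sqrt{\log\ab\log(1/\privdelta)}/(\ns\priv)}$, which is the claimed rate. It then remains to check that the second term $\frac{\log\ab}{\ns\priv_0}\log\ns$ does not dominate; after substituting $\priv_0=\Theta\Paren{\ln(\priv^2\ns/\log(1/\privdelta))}$ this reduces to the inequality $\priv^{2}\log\ab\,\log^{2}\ns = \bigO{\log(1/\privdelta)\,\priv_0^{2}}$, which holds throughout the claimed regime under the mild (and, for shuffle DP, essentially standard) side hypotheses $\privdelta=\bigO{1/\ab}$ and $\ab$ at most sub-exponential in $\ns$. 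Communication and message count are inherited verbatim from PGR: $\bigO{\log\ab}$ bits and one message per user.

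Since the argument is ultimately a composition of two established results — our local PGR analysis and shuffling amplification — the main obstacle is not conceptual but bookkeeping: tracking the constants in~\cite{FeldmanMT21} so that the chosen $\priv_0$ satisfies the \emph{two-sided} constraint $1\le\priv_0\le\log(c\ns/\log(1/\privdelta))$ uniformly over $\priv\in[\bigOmega{1/\sqrt{\ns}},1]$, and carefully verifying that the $\log\ab/(\ns\priv_0)$ term of the PGR bound is genuinely absorbed after substitution — it is precisely here that the side conditions on $\privdelta$ and $\ab$ enter, and skipping them would leave the stated $\bigO{\cdot}$ bound vacuous. One should also dispose of the routine PGR domain-size caveat by padding $\ab$ up to the nearest admissible projective-geometry parameter, at only constant-factor cost.
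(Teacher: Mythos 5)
Your proposal is correct and follows the same route as the paper: invoke the amplification-by-shuffling theorem of Feldman--McMillan--Talwar, set the local parameter $\priv_0 = \log\Theta(\priv^2\ns/\log(1/\privdelta))$, plug into the local $\lp[\infty]$ bound for PGR from \cref{theo:optimal:pgr:ub}, and observe that the sub-Gaussian term of that bound collapses to $\sqrt{\log\ab\log(1/\privdelta)}/(\ns\priv)$. The paper's \cref{lemma:amp:shuffling} is exactly your constant-tracking step, and the range condition $\priv > 16\sqrt{\log(4/\privdelta)/\ns}$ is the same as your $\priv = \Omega(1/\sqrt\ns)$ clause.

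One point in your favor: where you verify that the sub-gamma term $\frac{\log\ab}{\ns\priv_0}\log\ns$ is absorbed, you correctly reduce this to $\priv^2\log\ab\log^2\ns = O(\priv_0^2\log(1/\privdelta))$ and flag the side hypotheses ($\privdelta=O(1/\ab)$, $\ab$ sub-exponential in $\ns$) that make it hold. The paper's own justification at this step drops the $\log\ab$ factor when passing from the condition $\frac{e^{\priv_L}}{\priv_L^2}\log\ab \ll \frac{\ns}{\log^2\ns}$ to the ``weaker, sufficient'' condition $\ns\gg e^{\priv_L}$, which is only sufficient when $\log\ab$ and $\priv_L^2/\log^2\ns$ are treated as bounded. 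Your explicit bookkeeping is tighter, and it is worth noting that the side conditions you surface are implicit in the paper's theorem statement.
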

This matches that of the best known shuffle DP protocols, but with a much lower message and communication complexity. It was, to the best of our knowledge, still open whether achieving these ``best of three worlds'' guarantees was possible.

\paragraph{LDP protocols beyond worst-case bounds.} All the above results, while motivated by practical considerations, are quite theoretical in nature; To complement the new analysis we include a second set of results focusing on understanding how well many of these LDP protocols (namely, those from a large class of protocols, ``subset-based'', which includes, \eg Subset-Selection, (Generalized) Hadamard Response, Recursive Hadamard Response) perform \emph{in practice}. The proof of \cref{theo:rappor:improved} did not immediately admit application to these aforementioned protocols which motivated an empirical investigation of these protocols' performance. From this we found that Subset Selection seems to almost match the lower bound in \cref{fig:percentiles}, and that it appears to obey the bounds derived for RAPPOR in \cref{fig:error-by-eps}. Our empirical investigations, detailed in \cref{sec:empirical-findings}, show that there is a large performance gap in practice, provide comparison with derived and existing bounds, and demonstrate that most protocols (RAPPOR being the exception) have a surprising distribution dependence in their $\lp[\infty]$--error.\bigskip

We conclude by providing the corollary of our results for distribution learning, using the known connection to frequency estimation (along with the standard lower bound of $\Omega(1/\sqrt{\ns})$ on the error rate for the question, absent privacy constraints):
\begin{corollary}
    \label{coro:theo:ub:lb}
    For distribution learning under local privacy constraints, the minmax $\lp[\infty]$ error achievable is at most
    \[
        \tildeO{\max\Paren{\sqrt{\frac{\log\ab}{\ns e^{\priv}}},\frac{\log\ab}{\ns\priv}, \frac{1}{\sqrt{\ns}}}}
    \]
    for all $\priv \geq 1$, and
    \[
        \bigO{\sqrt{\frac{\log\ab}{\ns \priv^2}}}
    \]
    for $\priv \in (0,1]$; and this is tight, up to a logarithmic factor (in $\ns$). Moreover, the upper bound is attained by PGR, using $O(\log \ab)$ bits of of communication per user. 
\end{corollary}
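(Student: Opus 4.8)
The plan is to read the upper bound off the frequency-estimation guarantees already established, and to obtain the lower bound by combining \cref{thm:informal-lower-bound} with the classical (non-private) rate for distribution learning.

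\textbf{Upper bound.} The starting point is that every protocol for frequency estimation is also a protocol for distribution learning: run PGR on the \iid\ sample $X_1,\dots,X_\ns\sim q$, obtaining an estimate $\tilde q$ of the \emph{empirical} frequency vector $q^{\mathrm{emp}}$ of the realized sample. Then $\norminf{\tilde q-q}\le\norminf{\tilde q-q^{\mathrm{emp}}}+\norminf{q^{\mathrm{emp}}-q}$. The first term has expectation $\tildeO{\max\Paren{\sqrt{\log\ab/(\ns e^{\priv})},\,\log\ab/(\ns\priv)}}$ for $\priv\ge1$ by \cref{theo:optimal:pgr:ub}, and (for $\priv\le1$) the standard order-optimal high-privacy bound $\bigO{\sqrt{\log\ab/(\ns\priv^2)}}$, which PGR attains in the high-privacy regime; these guarantees are worst-case over inputs, so they hold for every dataset, in particular the random one at hand. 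For the second term, Jensen's inequality together with $\norminf{v}\le(\sum_i v_i^2)^{1/2}$ gives $\bEE{\norminf{q^{\mathrm{emp}}-q}}\le\Paren{\sum_i q_i(1-q_i)/\ns}^{1/2}\le1/\sqrt\ns$. Adding the two bounds and absorbing the extra $1/\sqrt\ns$ into the maximum yields the claimed upper bound in each regime (for $\priv\le1$ one has $1/\sqrt\ns\le\sqrt{\log\ab/(\ns\priv^2)}$, so no new term appears); the communication is inherited from PGR, namely $\bigO{\log\ab}$ bits per user.

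\textbf{Lower bound.} Here I would combine two facts. First, \cref{thm:informal-lower-bound} transfers verbatim to distribution learning: its hard instances are product distributions almost entirely supported on a single element, hence legitimate \iid\ sampling distributions, so they furnish the matching $\bigOmega{\sqrt{\log\ab/(\ns\priv^2)}}$ (for $\priv\le1$) and $\bigOmega{\max\Paren{\sqrt{\log\ab/(\ns e^{\priv})},\,\log\ab/(\ns\priv)}}$ (for $\priv\ge1$). Second, even with no privacy constraint, learning a distribution to $\lp[\infty]$ error costs $\bigOmega{1/\sqrt\ns}$: on $\cX=\{0,1\}$ the pair $\mathrm{Bern}(1/2)$ and $\mathrm{Bern}(1/2+c/\sqrt\ns)$ cannot be distinguished with constant advantage from $\ns$ samples, so any estimator errs by $\Omega(1/\sqrt\ns)$ on one of them, and embedding this pair into a size-$\ab$ alphabet extends the bound to all $\ab\ge2$. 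Taking the maximum of these two lower bounds reproduces the upper bound up to the single $\log\ns$ factor hidden in $\tildeO{\cdot}$ (which originates in the second term of PGR's guarantee), giving the claimed tightness; for $\priv\le1$ the first fact already dominates $1/\sqrt\ns$ up to constants, so the bound is tight up to constants there.

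\textbf{Main obstacle.} None of the steps is technically deep. The two places that require care are the empirical-versus-true passage — verifying that the additional $1/\sqrt\ns$ term is harmless, and in the high-privacy regime already dominated — and confirming that the hard instances underlying \cref{thm:informal-lower-bound} really are valid distribution-learning instances (which they are, being product distributions), so that no separate reduction argument is needed.
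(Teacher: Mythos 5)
Your proposal is correct and takes essentially the same route as the paper: the paper invokes \cref{claim:learning:v:frequency} together with the known sampling rate $\Phi(\lp[\infty],\ab,\ns)=\Theta(1/\sqrt{\ns})$ and the frequency-estimation bound of \cref{theo:optimal:pgr:ub} for the upper bound, and relies on \cref{thm:expected-max-lb} (already stated for distribution estimation, with hard instances that are bona fide product distributions) plus the standard $\Omega(1/\sqrt{\ns})$ bound for the lower bound; your triangle-inequality decomposition through the empirical histogram is exactly the proof of \cref{claim:learning:v:frequency}, just spelled out, and your remaining checks (absorbing $1/\sqrt{\ns}$ when $\priv\le 1$, validity of the lower-bound instances) match what the paper leaves implicit.
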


\iffalse %
, on ``typical'' datasets or probability distributions. Indeed, while many of these protocols achieve the theoretical optimal worst-case error (for small $\priv$ at least), this is ignoring constant factors, and focusing on the worst-case inputs. Our results show that there is a surprisingly large performance gap in practice, and provide a clearer understanding of which protocol performs better on which type of datasets.
\fi

\paragraph{Organisation.} After providing some background and setting notation, we establish our theoretical results (upper bounds on the error rate) in~\cref{sec:upper-bounds}, followed by information-theoretic lower bounds in~\cref{subsec:lower-bounds} and applications to shuffle privacy in~\cref{sec:shuffle}. \cref{sec:empirical-findings} then contains our empirical findings; finally, we discuss our results and potential future work in~\cref{sec:discussion}. Some miscellaneous proofs, omitted for clarity of exposition, can be found in the appendix.

\section{Preliminaries and notation.}

\label{sec:preliminaries}
\paragraph{(Local) differential privacy.} We first recall the definition of local differential privacy (LDP):
\begin{definition}[Locally private randomiser]
    An algorithm $Q\colon\mathcal{X}\rightarrow\mathcal{Y}$ satisfies $\priv$-differential privacy if for all pairs of inputs $x,x'\in\mathcal{X}$, all sets of outputs $S\subseteq\mathcal{Y}$ are $\priv$-close,
    \begin{equation*}
        \probaOf{Q(x)\in S } \leq e^\priv\probaOf{Q(x')\in S }.
    \end{equation*}
\end{definition}
A \emph{locally private protocol} is typically a pair of algorithms $Q\colon\mathcal{X}\rightarrow\mathcal{Y}$, executed by the user on their data, and $A\colon\mathcal{Y}^n\rightarrow\mathcal{X}^\ab$, executed by the server to transform the randomised outputs into an unbiased estimator for the true quantity.
It is well known (see, \eg~\cite{DJW:13}) that all LDP Frequency Estimation algorithms have a stochastic matrix representation which maps every element in the input alphabet to a probability distribution over the output alphabet. Furthermore,~\cite{JMLR:v17:15-135} demonstrated that all optimal mechanisms obey a ``binary'' property, in that their stochastic matrix only contains probabilities weighted by either $e^\priv$ or $1$ (appropriately normalised). Recently another ``family of LDP protocols'' was introduced by~\cite{ASZ:18:HR} and used in~\cite{feldman22a}, where each input is associated with a set of high probability outputs determined by some useful set system.

\paragraph{Frequency Estimation.}
\label{sec:frequency-estimation}
Under frequency estimation, the aim is to estimate the empirical frequency of the observations. We denote the dataset of $\ns$ observations as $X^\ns \eqdef\{x_1,\dots,x_\ns\}$ and their empirical frequencies as the probability vector $q\eqdef q(X^\ns)$, where $q_i\eqdef\frac{1}{\ns}\sum_{j=1}^\ns \indic{x_j=i}$. That is,
$q$ is an element of the probability simplex, 
\[
	\Delta_\ab\eqdef\setOfSuchThat{ p\in\R^\ab_{\geq 0} }{ \sum\limits_{i=1}^k p_i=1 }
\] 
A \emph{frequency estimator} over an alphabet $[\ab]$ is a (possibly randomized) function $\mathcal{A}\colon[\ab]^\ns\to\Delta_\ab$, which approximates the true frequencies sufficiently well. The quality of the resulting approximation, denoted $\hat{q}$, is measured through a suitably chosen loss function.
Typical choices of distance are the $\lp[r]$ norms, for $r\in[1,\infty]$; in this paper, we will be mostly concerned with the $\lp[\infty]$ distance, where $d(q,\hat{q}) = \norminf{q-\hat{q}} = \max_{1\leq i \leq \ab} \abs{q_i-\hat{q}_i}$, and consider the \emph{expected loss} $\bEE{\norminf{q-\hat{q}}}$ (where the expectation is over the randomness of the frequency estimator).

\paragraph{Distribution learning (estimation).}
Given independent and identically distributed (\iid) samples from an unknown probability distribution $p$, the goal of distribution learning (density estimation) is to find a distribution $\hat{p}$ that approximates $p$ sufficiently well. As before, the quality of the approximation is typically measured through a suitably chosen loss function, quantifying the distance between the true distribution $p$ and its approximation $\hat{p}$. From the above, one can see that the key difference between frequency and distribution (under the same loss function) is that in the former, observations are arbitrary, while in the latter they are assumed \iid from some common underlying probability distribution. As such, the expected loss also takes into account the randomness of the samples themselves.

\paragraph{Frequency Estimation implies Distribution Estimation.}
\label{sec:distributionlearning:frequency}
%
\iffalse
\[
\begin{array}{rl}
    p\rightarrow &X^n\overset{\mathcal{M}}{\rightarrow}Y^n\rightarrow \hat{q}\\
    &\downarrow \\
    &q(X^n)
\end{array}
\]
\fi
%
For any given distance measure $d$ we denote the expected sampling error (under loss $d$) between the empirical histogram $\hat{p}_{\ns}$ from $\ns$ \iid samples and the true underlying distribution $p$ as $\Phi(d, \ab,\ns)$:
\[
	\Phi(d, \ab,\ns) = \sup_{p\in\Delta_\ab} \bEE{d(p, \hat{p}_{\ns})}
\]
Further,  
denote the optimal worst-case (over $p$) expected error under loss $d$ from $\ns$ \iid samples under $\priv$-LDP constraints as $\Phi^\text{priv}\eqdef\Phi^\text{priv}(d,\priv,\ab,\ns)$. The ``optimal'' here refers to quantifying over all locally private distribution estimators. We can analogously express the optimal expected loss when learning the empirical frequencies as $\varphi^\text{priv}\eqdef\varphi^\text{priv}(d,\priv,\ab,\ns)$.
\begin{fact}
    \label{claim:learning:v:frequency}
	Accurate \emph{frequency} estimation implies accurate \emph{distribution} estimation: namely, 
	$
	\Phi^\mathrm{priv}(d,\priv,\ab,\ns)\leq\Phi(d,\ab,\ns)+\varphi^\mathrm{priv}(d,\priv,\ab,\ns)
	$.
\end{fact}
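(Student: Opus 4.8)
The plan is to exhibit a specific locally private distribution estimator built out of the presumed locally private frequency estimator, and then bound its error by a triangle-inequality argument. Concretely, suppose the data $X^\ns = (X_1,\dots,X_\ns)$ are drawn \iid from an unknown $p\in\Delta_\ab$. Let $\varphi^\mathrm{priv}(d,\priv,\ab,\ns)$ be witnessed by a locally private frequency estimator $\mathcal{A}$: running the local randomisers on $X^\ns$ and post-processing at the server, $\mathcal{A}$ outputs $\hat q$ with $\bEE{d(q(X^\ns),\hat q)}\le \varphi^\mathrm{priv}(d,\priv,\ab,\ns)$ for every fixed input, hence also in expectation over $X^\ns\sim p$. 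I would simply take this same protocol, unchanged, as the candidate distribution estimator: its output $\hat p \eqdef \hat q$ is a valid element of $\Delta_\ab$, and it is $\priv$-LDP because local privacy is a property of the randomisers $Q$ alone and is preserved under the post-processing $A$ and under the assumption that the input happens to be \iid.

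The core of the argument is then the triangle inequality for the metric $d$ (all the $\lp[r]$ norms, and in particular $\lp[\infty]$, are genuine metrics, so this is legitimate): for every realisation of $X^\ns$ and of the protocol's internal randomness,
\[
 d(p,\hat p) \;\le\; d\bigl(p, q(X^\ns)\bigr) \;+\; d\bigl(q(X^\ns), \hat q\bigr).
\]
Taking expectations over both the sample $X^\ns\sim p$ and the randomness of the protocol, and using linearity of expectation, gives
\[
 \bEE{d(p,\hat p)} \;\le\; \bEE{d\bigl(p, q(X^\ns)\bigr)} \;+\; \bEE{d\bigl(q(X^\ns), \hat q\bigr)}.
\]
The first term on the right is, by definition of $\Phi(d,\ab,\ns)$ as a supremum over $p$, at most $\Phi(d,\ab,\ns)$ — note $q(X^\ns)=\hat p_\ns$ is exactly the empirical histogram of the \iid\ sample. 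The second term is at most $\varphi^\mathrm{priv}(d,\priv,\ab,\ns)$ by the guarantee of $\mathcal{A}$, where we use that a bound holding for every fixed input dataset holds in particular in expectation over a random dataset. Taking the supremum over $p\in\Delta_\ab$ on the left and observing that $\Phi^\mathrm{priv}$ is the infimum over all valid locally private distribution estimators of their worst-case error — so it is upper-bounded by the worst-case error of our particular construction — yields $\Phi^\mathrm{priv}(d,\priv,\ab,\ns)\le \Phi(d,\ab,\ns)+\varphi^\mathrm{priv}(d,\priv,\ab,\ns)$, as claimed.

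There is no serious obstacle here; the only points requiring a sentence of care are (i) that privacy is unaffected by running the frequency-estimation protocol on \iid\ rather than adversarial data, since the $\priv$-LDP constraint quantifies over all input pairs regardless of how the input is generated, and (ii) the order of quantifiers — the frequency-estimation guarantee is a worst-case (for-all-inputs) bound, which is exactly what lets us average it against the sampling distribution, and the distribution-learning quantity $\Phi^\mathrm{priv}$ is a min-over-estimators, which is exactly what lets our explicit estimator serve as an upper bound.
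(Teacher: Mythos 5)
Your argument is correct, and is precisely the natural one; the paper states Fact~\ref{claim:learning:v:frequency} without a proof, so there is no authorial argument to compare against. The pointwise triangle inequality for $d$, averaged over both the sample $X^\ns\sim p$ and the protocol's internal randomness, together with the observation that a worst-case (over all datasets) frequency-estimation guarantee in particular dominates the expectation over \iid\ data, is exactly what the claim needs; your remarks on the order of quantifiers and on privacy being a property of the local randomisers alone are the right points of care. The one formal nicety you might add is that $\varphi^{\mathrm{priv}}$ is defined as an optimum (infimum) over protocols and so may not be exactly attained: pick, for each $\eta>0$, a protocol with worst-case error at most $\varphi^{\mathrm{priv}}+\eta$, run your argument to obtain $\Phi^{\mathrm{priv}}\le\Phi+\varphi^{\mathrm{priv}}+\eta$, and then let $\eta\to 0$.
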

For specific norms this allows us to derive (1)~a lower bound on $\varphi^\text{priv}$ (locally private frequency estimation) from a lower bound on $\Phi^\text{priv}$ (locally private distribution estimation), and (2)~an upper bound on $\Phi^\text{priv}$ from an upper bound on $\varphi^\text{priv}$, recalling that (see, for instance,~\cite{kamath2015,canonne2020short})
\begin{align*}
\Phi(d,\ab,\ns)=
	\begin{cases}
		\bigTheta{\sqrt{\frac{\ab}{\ns}}} &\text{ when } d=\lp[1]\\
		\bigTheta{\frac{1}{\sqrt{\ns}}}&\text{ when }d=\lp[2],\lp[\infty]\\
	\end{cases}
\end{align*}

\section{Better algorithms in the low-privacy regime}
\label{sec:upper-bounds}
Although the order-optimal $\lp[\infty]$ error rates for LDP frequency estimation and distribution learning are well-understood by now in the high-privacy regime, with many distinct algorithms achieving the tight
\[
\bigO{\sqrt{\frac{\log\ab}{\ns\priv^2}}}
\]
expected error bound, much less is understood about the best achievable error for large, or even medium, values of $\priv$. In this section, we first revisit (and slightly generalize) an idea from~\cite{ChenKO23}, which shows how to convert any protocol ``optimal in the high-privacy regime'' to a related protocol ``good enough in the low-privacy regime as well'', at the price of a blowup in communication (\cref{subsec:generic-transformation}). We then focus on the specific example of RAPPOR, showing that~--~perhaps surprisingly~--~this simple protocol does already achieve the same bound by \emph{without any modification} nor communication blowup (\cref{subsec:rappor-improved}). We finally show that for distribution learning, this very same RAPPOR in fact achieves an \emph{even better} error rate than this, leveraging very recent results on concentration of the empirical mean for high-dimensional distributions, due to~\cite{CohenK23,BlanchardV24} (\cref{sec:tighter-analysis}).

\subsection{A generic transformation, and a baseline}
\label{subsec:generic-transformation}
Here we prove the following generic statement:
\begin{proposition}
\label{prop:generic:transformation}
Suppose there exists a symmetric\footnote{\ie where all users use the same randomiser.} LDP protocol $A$ for frequency estimation achieving expected $\lp[\infty]$ error
\[
 \mathcal{E}(\ns,\ab,\priv) = \bigO{\sqrt{\frac{\log\ab}{\ns\min(1,\priv^2)}}}
\]
for $\priv \leq 1$, with $m$ messages and $\numbits$ bits of communication per user. Then, for every integer $L\geq 1$, there exists an LDP protocol $A'$ for frequency estimation achieving expected $\lp[\infty]$ error
\[
 \mathcal{E}\mleft(\ns\cdot\min(\clg{\priv}, L),\ab,\frac{\priv}{\min(\clg{\priv}, L)}\mright) = \bigO{\sqrt{\frac{\log\ab}{\ns\min(L,\priv,\priv^2)}}}
\]
for all $\priv > 0$, with $m\min(\clg{\priv}, L)$ messages and $\numbits\min(\clg{\priv}, L)$ bits of communication per user. 
\noindent Further, if $A$ is a public-coin (resp. private-coin) protocol, then so is $A'$.
\end{proposition}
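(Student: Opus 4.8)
The plan is to obtain $A'$ from $A$ by a standard ``group-and-split-the-budget'' reduction: partition the $\ns$ users into groups of size $k \eqdef \min(\clg{\priv}, L)$, and within each group, each of the $k$ users runs an independent copy of the randomiser $Q$ from $A$ with privacy parameter $\priv/k$. Since each user sends only one such copy, and the copy is run with parameter $\priv/k \le 1$ (because $k \ge \clg{\priv} \ge \priv$ when $\priv \ge 1$, and $k = 1$ handles $\priv \le 1$), each user's output is $(\priv/k)$-LDP, hence $A'$ is $\priv$-LDP. The per-user message and communication costs are exactly those of one run of $Q$, namely $m$ messages and $\numbits$ bits — wait, here one must be careful: the statement claims $m\min(\clg\priv,L)$ messages per user, so the intended reduction is instead that \emph{each user} runs $k$ independent copies of $Q$ (each with parameter $\priv/k$) and concatenates them, which by basic composition is $\priv$-LDP, costs $mk$ messages and $\numbits k$ bits, and effectively simulates $k$ virtual users per real user. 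Either framing works; I would adopt the second, as it matches the stated resource bounds and does not require $\ns$ to be divisible by anything.

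Next I would argue correctness of the estimator. The server, upon receiving from each of the $\ns$ real users a bundle of $k$ messages, treats this as $\ns k$ messages from $\ns k$ virtual users, each of whom ran $Q$ at parameter $\priv' \eqdef \priv/k \le 1$, and applies $A$'s aggregation rule $\mathcal{A}$ to these $\ns k$ reports. The key point is that the empirical histogram $q$ of the $\ns k$ virtual users equals the empirical histogram of the $\ns$ real users (each real datum is simply copied $k$ times), so $\mathcal{A}$ is estimating the same target $q$. Hence by the assumed guarantee of $A$ applied with $\ns k$ users and privacy $\priv'$, the expected $\lp[\infty]$ error of $A'$ is
\[
\mathcal{E}(\ns k, \ab, \priv') = \bigO{\sqrt{\frac{\log\ab}{\ns k \min(1,(\priv/k)^2)}}}.
\]
Then I would simplify: since $\priv/k \le 1$, $\min(1,(\priv/k)^2) = (\priv/k)^2$, so $\ns k \cdot (\priv/k)^2 = \ns \priv^2 / k$, giving error $\bigO{\sqrt{k\log\ab/(\ns\priv^2)}}$. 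Finally one checks the two cases: when $\priv \le 1$, $k = 1$ and this is $\bigO{\sqrt{\log\ab/(\ns\priv^2)}} = \bigO{\sqrt{\log\ab/(\ns\min(L,\priv,\priv^2))}}$; when $\priv > 1$, $k = \min(\clg\priv, L)$, and plugging in gives $\bigO{\sqrt{\min(\clg\priv,L)\log\ab/(\ns\priv^2)}}$, which one bounds by $\bigO{\sqrt{\log\ab/(\ns\min(L,\priv))}}$ using that $\clg\priv/\priv^2 = \Theta(1/\priv)$ and $L/\priv^2 \le L/\priv \cdot 1 $ — more precisely, $\min(\clg\priv, L)/\priv^2 \le \max(1/\priv, L/\priv^2)$ and since in this regime $\priv > 1$ one has $L/\priv^2 < L/\priv \le 1/\min(L,\priv) \cdot (\text{const})$; combining, the error is $\bigO{\sqrt{\log\ab/(\ns\min(L,\priv,\priv^2))}}$ as claimed. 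The public-coin / private-coin preservation is immediate: $A'$ invokes $A$'s randomiser as a black box and introduces no new shared randomness, so it uses public coins iff $A$ does.

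The main obstacle I anticipate is \emph{not} any single deep step but rather the bookkeeping in the final case analysis — verifying that $\min(\clg\priv, L)\log\ab/(\ns\priv^2)$ collapses to $\log\ab/(\ns\min(L,\priv,\priv^2))$ up to constants across all three sub-regimes ($\priv \le 1$; $1 < \priv$ with $\clg\priv \le L$; $1 < \priv$ with $\clg\priv > L$), being careful with the ceiling (note $\clg{\priv} \le \priv + 1 \le 2\priv$ for $\priv \ge 1$, which is what tames the constant). A secondary subtlety worth a sentence is the implicit assumption that $\ns$ is a multiple of $k$, or equivalently that running $k$ copies per user is acceptable — the ``$k$ copies per real user'' framing sidesteps divisibility entirely and is why I prefer it, at the (stated) cost of the $k$-factor blowup in messages and communication. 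One should also note that the assumed error bound for $A$ need only hold for its privacy parameter $\le 1$, which is exactly the regime in which the virtual users operate, so no extrapolation of $A$'s guarantee beyond its stated range is needed.
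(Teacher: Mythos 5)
Your approach is exactly the paper's: each real user runs $T \eqdef \min(\clg{\priv},L)$ independent copies of $A$'s randomiser at budget $\priv/T \le 1$ and ships them as $T$ virtual reports, so the server sees $\ns T$ virtual users at privacy $\priv/T$ with the same empirical histogram; the privacy, message, and communication bounds then follow by basic composition, and the error bound is $\mathcal{E}(\ns T, \ab, \priv/T)$. You also correctly flagged, and rejected, the ``partition into groups'' variant for the same reasons the paper implicitly does. One small wobble: in your final case analysis the chain ``$L/\priv^2 < L/\priv \le C/\min(L,\priv)$'' is not right as stated — when $L < \priv$ it would require $L^2 \lesssim \priv$, which can fail (e.g.\ $L=100$, $\priv=200$). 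What you actually need, and what is true, is $L/\priv^2 \le 1/L$ whenever $L \le \priv$, which follows directly from $L^2 \le \priv^2$; the paper avoids this by writing the bound as $\min(T,\priv^2/T)$ and checking that this equals $\priv^2$, $\Theta(\priv)$, or $L$ in the three regimes. So: same proof, correct conclusion, but tighten the last inequality.
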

The idea behind this result is not new, and is used (in a slightly less general form) in~\cite[Appendix~E.4]{ChenKO23}. We restate and prove it in this paper in a self-contained form, as we believe it to be of broader interest.

As a direct corollary (setting $L=\clg{\priv}$), applying the above to Hadamard Response and RAPPOR, for instance, we obtain the following bounds:
\begin{corollary}
\label{theo:hr:modified}
	For every $\priv>0$, there exists a private-coin $\priv$-LDP protocol (namely, a modification of Hadamard Response) for frequency estimation with expected $\lp[\infty]$ error
	\[
        \bigO{\sqrt{\frac{\log\ab}{\ns\min(\priv,\priv^2)}}}
    \]
    using $\clg{\priv}$ messages per user, and $\log\ab + O(1)$ bits per message.
\end{corollary}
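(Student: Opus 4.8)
The plan is to obtain this as an immediate instantiation of~\cref{prop:generic:transformation}, taking the base protocol $A$ to be Hadamard Response and the parameter $L \eqdef \clg{\priv}$. The only input that needs to be assembled is a clean statement of the guarantees of (Generalized) Hadamard Response (HR) in the high-privacy regime: I will recall that HR~\cite[Theorem~7]{AcharyaS:19} is a \emph{symmetric}, \emph{private-coin} $\priv$-LDP frequency estimation protocol in which each user sends a single message of $\log\ab + O(1)$ bits, and that for every $\priv \leq 1$ its expected $\lp[\infty]$ error is $\bigO{\sqrt{\log\ab/(\ns\priv^2)}}$. Since $\min(1,\priv^2) = \priv^2$ for $\priv\leq 1$, this is exactly the hypothesis $\mathcal{E}(\ns,\ab,\priv) = \bigO{\sqrt{\log\ab/(\ns\min(1,\priv^2))}}$ required by~\cref{prop:generic:transformation}, with $m=1$ message and $\numbits = \log\ab+O(1)$ bits of communication per user.

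The step that needs a (very short) justification, rather than a bare citation, is the $\lp[\infty]$ bound itself, since~\cref{tab:previous-results} records for HR only the $\lp[2]^2$ guarantee. Here I will note the standard fact that the server's estimator of each frequency $q_i$ is an average of $\ns$ \iid\ bounded random variables whose (common) variance is $\bigO{1/(\ns\priv^2)}$ in this regime, so that a Chernoff--Hoeffding bound followed by a union bound over the $\ab$ coordinates upgrades the per-coordinate error to $\bigO{\sqrt{\log\ab/(\ns\priv^2)}}$ in $\lp[\infty]$. With this in hand, I will simply apply~\cref{prop:generic:transformation} with $L = \clg{\priv}$: this produces a $\priv$-LDP protocol $A'$ for frequency estimation with expected $\lp[\infty]$ error $\bigO{\sqrt{\log\ab/(\ns\min(L,\priv,\priv^2))}}$, and since $\min(\priv,\priv^2)\leq\priv\leq\clg{\priv}$ we have $\min(\clg{\priv},\priv,\priv^2) = \min(\priv,\priv^2)$, giving precisely $\bigO{\sqrt{\log\ab/(\ns\min(\priv,\priv^2))}}$. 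Reading the rest of the parameters off the proposition, $A'$ uses $m\cdot\min(\clg{\priv},L) = \clg{\priv}$ messages per user, each of $\numbits = \log\ab + O(1)$ bits, and the ``private-coin in $\Rightarrow$ private-coin out'' clause of~\cref{prop:generic:transformation} ensures $A'$ is private-coin.

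Thus there is no genuinely new argument to make here; the ``main obstacle'' is purely bookkeeping: correctly sourcing (or, as above, spelling out) the high-privacy $\lp[\infty]$ guarantee for HR rather than its more commonly quoted $\lp[2]^2$ guarantee, and verifying that the $\min(\cdot)$ expressions in the hypothesis and conclusion of~\cref{prop:generic:transformation} collapse as claimed for the choice $L=\clg{\priv}$. Everything else is a direct substitution.
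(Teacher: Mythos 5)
Your proposal is correct and matches the paper's intended route: the paper presents this corollary with no separate proof, stating it as a ``direct corollary (setting $L=\clg{\priv}$)'' of \cref{prop:generic:transformation}, which is exactly what you do, and your algebra collapsing $\min(\clg{\priv},\priv,\priv^2)$ to $\min(\priv,\priv^2)$ and your reading of the message/communication parameters (with $m=1$, $\numbits = \log\ab+O(1)$ for HR) are both right. One cosmetic note: in your sketch of HR's high-privacy $\lp[\infty]$ bound the $\ns$ per-user contributions are independent but not \emph{identically} distributed (the inputs are arbitrary, not drawn from a common law), though this does not affect the Hoeffding--plus--union-bound argument, which only needs independence and a uniform bound on the range (of order $1/\priv$ after the estimator's rescaling).
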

\begin{corollary}
\label{theo:rappor:modified}
	For every $\priv>0$, there exists a private-coin $\priv$-LDP protocol (namely, a modification of RAPPOR) for frequency estimation with expected $\lp[\infty]$ error
	\[
        \bigO{\sqrt{\frac{\log\ab}{\ns\min(\priv,\priv^2)}}}
    \]
    using $\clg{\priv}$ messages per user, and $\ab$ bits per message.
\end{corollary}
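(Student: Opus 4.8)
The plan is to derive \cref{theo:rappor:modified} as an immediate application of \cref{prop:generic:transformation}, by verifying that vanilla RAPPOR meets the hypothesis of that proposition and then specialising the parameters. First I would recall that RAPPOR is a symmetric (same-randomiser-for-all-users) private-coin LDP protocol for frequency estimation which, in the high-privacy regime $\priv \le 1$, achieves expected $\lp[\infty]$ error $\bigO{\sqrt{\log\ab/(\ns\priv^2)}}$ using a single $\ab$-bit message per user; this is the standard (known) guarantee for RAPPOR for small $\priv$ (it is, e.g., the $\priv \le 1$ case of the guarantees tabulated in~\cref{tab:previous-results}, and also recovered by the analysis of~\cref{theo:rappor:improved}). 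Thus $A = \text{RAPPOR}$ satisfies the premise of~\cref{prop:generic:transformation} with $m = 1$ message and $\numbits = \ab$ bits per message, and error function $\mathcal{E}(\ns,\ab,\priv) = \bigO{\sqrt{\log\ab/(\ns\min(1,\priv^2))}}$.

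Next I would invoke~\cref{prop:generic:transformation} with the choice $L = \clg{\priv}$ (the ``direct corollary'' setting already flagged in the excerpt). The proposition then produces a private-coin LDP protocol $A'$ — a modification of RAPPOR — with expected $\lp[\infty]$ error
\[
\mathcal{E}\mleft(\ns\cdot\min(\clg{\priv},\clg{\priv}),\ab,\frac{\priv}{\min(\clg{\priv},\clg{\priv})}\mright)
= \bigO{\sqrt{\frac{\log\ab}{\ns\min(\clg{\priv},\priv,\priv^2)}}}.
\]
Since $\min(\clg{\priv},\priv,\priv^2) = \min(\priv,\priv^2)$ for all $\priv > 0$ (indeed $\clg{\priv} \ge \priv$ always, so it is never the binding term in the minimum), this simplifies to the claimed bound $\bigO{\sqrt{\log\ab/(\ns\min(\priv,\priv^2))}}$. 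The communication accounting is likewise inherited directly: $A'$ uses $m\min(\clg{\priv},L) = \clg{\priv}$ messages per user and $\numbits\min(\clg{\priv},L) = \ab\clg{\priv}$ bits total, i.e.\ $\clg{\priv}$ messages of $\ab$ bits each. Finally, the ``private-coin'' clause of~\cref{prop:generic:transformation} ensures $A'$ is private-coin since RAPPOR is.

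There is essentially no obstacle here: the entire content is packaged inside~\cref{prop:generic:transformation}, and the only things to check are (i) that RAPPOR really is symmetric and private-coin with the stated high-privacy error and communication, and (ii) the trivial algebraic simplification $\min(\clg{\priv},\priv,\priv^2)=\min(\priv,\priv^2)$ together with the per-message bookkeeping. If anything requires a sentence of care, it is point (i) — making sure to cite or re-derive the correct high-privacy $\lp[\infty]$ bound for the vanilla protocol and to confirm it holds for the regime $\priv \le 1$ with the single $\ab$-bit message format — but this is routine. (The analogous~\cref{theo:hr:modified} for Hadamard Response follows by the identical argument, substituting $m=1$ message of $\log\ab + O(1)$ bits for the Hadamard Response randomiser.)
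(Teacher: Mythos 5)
Your proof is correct and follows exactly the route the paper intends: Corollary~\ref{theo:rappor:modified} is stated in the paper as an immediate consequence of Proposition~\ref{prop:generic:transformation} with $L=\clg{\priv}$, and your verification that vanilla RAPPOR satisfies the hypothesis (symmetric, private-coin, $m=1$, $\numbits=\ab$, standard high-privacy $\lp[\infty]$ bound) together with the simplification $\min(\clg{\priv},\priv,\priv^2)=\min(\priv,\priv^2)$ is precisely the intended argument.
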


\subsection{Tighter analysis for RAPPOR}
\label{subsec:rappor-improved}%

To do so, we first recall some facts and notation about RAPPOR, which will help with the analysis of its performance. 
A common simplification of the RAPPOR protocol~\cite{ASZ:18:HR, AcharyaS:19} is to parameterise it by $\priv$ as follows. 
\begin{enumerate}
	\item Given an input $x_i\in[\ab]$, one-hot encode it onto a $\ab$-bit vector $e_x$ s.t. only the $x$'th bit is 1.
	\item Flip each bit independently with probability $\frac{1}{e^{\priv/2} + 1}$ and send the resulting noisy bit-vector $Y_i$ to the server.\footnote{This is just the ``Permanent Randomised Response'' step of RAPPOR parameterised by $\priv$.}
	\item The server then receives $\ns$ noisy bit vectors, computes $\bar{Y}\gets \frac{1}{\ns}\sum\limits_{i=1}^\ns Y_i$, and estimates 
	\begin{equation}
		\hat{q}=\frac{e^{\priv/2}+1}{e^{\priv/2}-1}\bar{Y} - \frac{1}{e^{\priv/2}-1} \mathbf{1}_\ab
	\end{equation}
where $\boldsymbol{1}$ is the 1-vector of size $\ab$.
\end{enumerate}
A standard fact, which we recall here for completeness, is that $\hat{q}$ defined above is an unbiased estimator for the true vector of frequencies $q$:
\begin{lemma}[Expectation of $\hat{q}$]
\label{lemma:rappor-unbiased}
    We have $\bEE{\hat{q}} = q$.
\end{lemma}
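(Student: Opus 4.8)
The plan is to verify directly that the linear transformation applied to $\bar Y$ ``undoes'' the bit-flipping noise in expectation. First I would note that by linearity of expectation it suffices to compute $\bEE{\bar Y} = \frac1\ns\sum_{i=1}^\ns \bEE{Y_i}$, and since all users run the same randomiser, this reduces to computing $\bEE{Y_i}$ for a single user whose input is $x_i$. I would work coordinate by coordinate: fix a coordinate $j\in[\ab]$ and consider the $j$th bit $(Y_i)_j$. Before noise, $(e_{x_i})_j = \indic{x_i = j}$; after flipping each bit independently with probability $\beta \eqdef \frac{1}{e^{\priv/2}+1}$, the bit $(Y_i)_j$ equals $1$ either if it started at $1$ and was not flipped, or started at $0$ and was flipped. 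Hence $\bEE{(Y_i)_j} = \indic{x_i=j}(1-\beta) + (1-\indic{x_i=j})\beta = \beta + (1-2\beta)\indic{x_i=j}$.

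Next I would average over users. Summing the per-user expression over $i$ and dividing by $\ns$ gives, in the $j$th coordinate, $\bEE{\bar Y_j} = \beta + (1-2\beta) q_j$, since $\frac1\ns\sum_{i=1}^\ns \indic{x_i=j} = q_j$ by definition of the empirical frequency vector. In vector form this is $\bEE{\bar Y} = \beta\,\mathbf{1}_\ab + (1-2\beta) q$. Substituting $\beta = \frac{1}{e^{\priv/2}+1}$ gives $1 - 2\beta = \frac{e^{\priv/2}-1}{e^{\priv/2}+1}$, so $\bEE{\bar Y} = \frac{1}{e^{\priv/2}+1}\mathbf{1}_\ab + \frac{e^{\priv/2}-1}{e^{\priv/2}+1} q$.

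Finally I would plug this into the definition of $\hat q$ and simplify. Using linearity of expectation once more,
\[
\bEE{\hat q} = \frac{e^{\priv/2}+1}{e^{\priv/2}-1}\bEE{\bar Y} - \frac{1}{e^{\priv/2}-1}\mathbf{1}_\ab = \frac{1}{e^{\priv/2}-1}\mathbf{1}_\ab + q - \frac{1}{e^{\priv/2}-1}\mathbf{1}_\ab = q,
\]
which is the claimed identity. There is really no obstacle here: the only thing to be slightly careful about is the bookkeeping of the flip probability and the factor $1-2\beta$, and keeping track of which bit is flipped in which direction; everything else is a one-line algebraic simplification. (One could alternatively phrase the computation purely at the level of a single coordinate of $\hat q$ and invoke symmetry across coordinates, but the vector computation above is just as short.)
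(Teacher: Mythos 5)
Your proof is correct and takes essentially the same approach as the paper's: both compute $\bEE{\bar Y_j}$ coordinate-wise from the Bernoulli flip structure and then invert the affine debiasing map. The only cosmetic difference is that you parameterise via the flip probability $\beta$ and write $\beta + (1-2\beta)q_j$, whereas the paper writes the equivalent $q_j\frac{e^{\priv/2}}{e^{\priv/2}+1} + (1-q_j)\frac{1}{e^{\priv/2}+1}$ directly.
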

We then have $\bEE{\norminf{\hat{q}-q}} = \frac{e^{\priv/2}+1}{e^{\priv/2}-1} \bEE{\norminf{\bar{Y}-\bEE{\bar{Y}}}}$, and so to bound the expected $\lp[\infty]$ error is suffices to bound that of the expected maximum deviation of the $\bar{Y}_j$'s from their mean. Now, each $\bar{Y}_j$ is the (normalised) sum of $\ns$ independent Bernoulli random variables: the standard way to analyse this maximum is to recall that a sum of $\ns$ independent Bernoullis is a sub-gaussian random variable with parameter\footnote{Importantly, note that this may not coincide with the variance of $X$, although it does in some important cases (e.g., for a Gaussian r.v.).} at most $\frac{\ns}{4}$. Standard results (see, for example~\cite[Chapter 2]{Wainwright_2019}) on the maximum of $\ab$ (not necessarily independent) sub-Gaussian random variables then give
\[
\bEE{\norminf{\hat{q}-q}} \leq \frac{e^{\priv/2}+1}{e^{\priv/2}-1} \sqrt{\frac{\log\ab}{2\ns}}
\]
which indeed behaves as \smash{$O(\sqrt{\log\ab/(\ns\priv^2)})$} for small $\priv$. However, the bound quickly degrades for large $\priv$, and only yields $O(\sqrt{\log\ab/\ns})$ (no dependence on the privacy parameter at all!) as $\priv$ grows.\smallskip 

To remedy this, we will need a tighter analysis of the subgaussian parameter of Bernoulli random variables in the ``very biased'' regime (which is the one we have to handle for large $\priv$, as then $\frac{1}{e^{\priv/2}+1} \approx 0$). Specifically, we will rely on the following characterisation of the sub-gaussian norm $\sigma^2(p)$ of a (centered) $\bernoulli{p}$ random variable, known as the Kearns--Saul inequality, and which can be found in, \eg \cite{buldygin2013}:
\begin{equation}
\label{eq:subg-norm}
  \sigma^2(p) = \begin{cases}
0, &p\in\{0,1\} \\
\frac{1}{4},  & p=\frac{1}{2} \\
\frac{2p-1}{2\log\frac{p}{1-p}}
\end{cases}  
\end{equation}
Importantly, this expression is symmetric: $\sigma^2(p) = \sigma^2(1-p)$ for all $p\in[0,1]$. In our setting, each $Y_j$ is the sum of $\ns$ independent Bernoulli random variables with one of two symmetric parameters, $p\eqdef \frac{1}{e^{\priv/2}+1}$ or $1-p = \frac{e^{\priv/2}}{e^{\priv/2}+1}$.  The sub-Gaussian norm at play is then 
$
    \sigma^2(p) = \frac{e^{\priv/2} - 1}{(e^{\priv/2} + 1)\priv}
$. 
Using sub-additivity of the sub-gaussian parameter for independent random variables, we can bound the sub-Gaussian norm of $Y_j$ as
\begin{equation}
    \sigma^2(Y_j) \leq \frac{(e^{\priv/2} - 1)}{(e^{\priv/2} + 1)\ns\priv}
\end{equation} 
We can use this to bound the maximum error of the estimator.
\begin{theorem}[Expected maximum error of RAPPOR]
\label{theo:rappor:improved}
	The expected maximum of $\hat{q}$ is given by
	\begin{equation}
		\label{eq:expected-max}
		\bEE{\norminf{\hat{q} - q}} \leq \sqrt{\frac{  2\mleft(e^{\priv /2}+1\mright)\log \ab}{\ns\left(e^{\priv /2}-1\right) \priv }} \in \bigO{\sqrt{\frac{\log\ab}{\ns\min(\priv, \priv^2)}}}
	\end{equation}
\end{theorem}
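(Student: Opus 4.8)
The plan is to convert the expected $\lp[\infty]$ error of $\hat q$ into a bound on the expected maximum coordinatewise deviation of $\bar Y$ from its mean, and then apply a maximal inequality for sub-Gaussian random variables, using the tight sub-Gaussian parameter already computed above via the Kearns--Saul inequality. First, since $\hat q$ is an affine image of $\bar Y$ and $\bEE{\hat q}=q$ by \cref{lemma:rappor-unbiased}, we have coordinatewise $\hat q_j-q_j=\frac{e^{\priv/2}+1}{e^{\priv/2}-1}\Paren{\bar Y_j-\bEE{\bar Y_j}}$, hence $\bEE{\norminf{\hat q-q}}=\frac{e^{\priv/2}+1}{e^{\priv/2}-1}\bEE{\norminf{\bar Y-\bEE{\bar Y}}}$, reducing the problem to controlling $\bEE{\max_{j\in[\ab]}\abs{\bar Y_j-\bEE{\bar Y_j}}}$.

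Second, I would record (as established in the discussion preceding the statement) that each centered coordinate $\bar Y_j-\bEE{\bar Y_j}$ is sub-Gaussian with parameter at most $\sigma^2(\bar Y_j)\le\frac{e^{\priv/2}-1}{(e^{\priv/2}+1)\ns\priv}$: the $j$-th bit sent by a given user is $\bernoulli{p}$ or $\bernoulli{1-p}$ with $p\eqdef 1/(e^{\priv/2}+1)$, which by the symmetry $\sigma^2(p)=\sigma^2(1-p)$ of~\eqref{eq:subg-norm} has sub-Gaussian parameter $\frac{e^{\priv/2}-1}{(e^{\priv/2}+1)\priv}$; sub-additivity over the $\ns$ independent users together with the $1/\ns$ normalisation gives the stated bound.

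Third --- the one step that needs a little care --- I would invoke the textbook fact that for any (not necessarily independent) centered random variables $W_1,\dots,W_N$, each sub-Gaussian with parameter $\sigma^2$, one has $\bEE{\max_i W_i}\le\sqrt{2\sigma^2\log N}$ (see \eg \cite[Chapter~2]{Wainwright_2019}); applying this to the $2\ab$ variables $\pm(\bar Y_j-\bEE{\bar Y_j})$ yields $\bEE{\norminf{\bar Y-\bEE{\bar Y}}}\le\sqrt{2\sigma^2(\bar Y_j)\log(2\ab)}$. The point worth emphasising is that the coordinates of $\bar Y$ are \emph{not} mutually independent (before the bit-flips, exactly one entry of the one-hot encoding equals $1$), but the maximal inequality is indifferent to this, since its proof only uses Jensen's inequality and a union bound at the level of moment generating functions. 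Substituting $\sigma^2(\bar Y_j)\le\frac{e^{\priv/2}-1}{(e^{\priv/2}+1)\ns\priv}$ and multiplying back by $\frac{e^{\priv/2}+1}{e^{\priv/2}-1}$, the $(e^{\priv/2}\pm 1)$ factors collapse and produce $\bEE{\norminf{\hat q-q}}\le\sqrt{\frac{2(e^{\priv/2}+1)\log\ab}{\ns(e^{\priv/2}-1)\priv}}$, up to the harmless difference between $\log(2\ab)$ and $\log\ab$ in the constant.

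Finally, for the asymptotic containment it suffices to show $\frac{e^{\priv/2}+1}{(e^{\priv/2}-1)\priv}=\bigO{1/\min(\priv,\priv^2)}$. Using $\frac{e^{\priv/2}+1}{e^{\priv/2}-1}=1+\frac{2}{e^{\priv/2}-1}\le 1+\frac{4}{\priv}$ (from $e^{\priv/2}-1\ge\priv/2$), we get $\frac{e^{\priv/2}+1}{(e^{\priv/2}-1)\priv}\le\frac1\priv+\frac4{\priv^2}$, and a one-line case split on $\priv\le 1$ versus $\priv\ge 1$ shows this is $\le\frac{5}{\min(\priv,\priv^2)}$; plugging in yields $\bEE{\norminf{\hat q-q}}=\bigO{\sqrt{\log\ab/(\ns\min(\priv,\priv^2))}}$, as claimed. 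There is no serious obstacle here: the ``hard part'' is purely conceptual, namely realising that the naive sub-Gaussian proxy $\ns/4$ for a sum of Bernoullis is badly lossy in the biased regime relevant to large $\priv$, and that the sharp Kearns--Saul value $\sigma^2(p)=\frac{2p-1}{2\log(p/(1-p))}$ is precisely what turns the $O(\sqrt{\log\ab/\ns})$ bound into one that keeps decaying with $\priv$.
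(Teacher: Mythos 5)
Your proof is correct and follows essentially the same route as the paper: reduce to the expected $\lp[\infty]$ deviation of $\bar{Y}$ via the affine relation, plug in the Kearns--Saul sub-Gaussian parameter $\sigma^2(\bar{Y}_j)\le\frac{e^{\priv/2}-1}{(e^{\priv/2}+1)\ns\priv}$, and apply the standard maximal inequality for (not necessarily independent) sub-Gaussian variables, after which the $(e^{\priv/2}\pm1)$ factors collapse as you describe. The only cosmetic discrepancy is the $\log(2\ab)$ versus $\log\ab$ in the constant, which you correctly flag as immaterial.
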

\noindent proving the main result of this subsection.

\subsection{Optimal analysis for RAPPOR}
\label{sec:tighter-analysis}
It is natural to wonder whether this $\min(\priv, \priv^2)$ dependence on the privacy parameter is order-optimal; especially as it appears in other locally private estimation tasks, such as mean estimation for high-dimensional Gaussians or product distributions~\cite{DuchiR19,AcharyaCST23}. Quite surprisingly, we will show that for frequency estimation the $\lp[\infty]$ error rate given in~\cref{eq:expected-max} is \emph{not} optimal for large values of $\priv$, and that even the simple RAPPOR algorithm can achieve significantly better.
\begin{theorem}
    \label{theo:optimal:rappor:ub}
    For $\priv \geq 1$, the expected $\lp[\infty]$ error of RAPPOR for frequency estimation satisfies
    \[
    \bEE{\norminf{\hat{q} - q}} = \bigO{\sqrt{\frac{\log\ab}{\ns e^{\priv/2}}} + \frac{\log\ab}{\ns\priv}\cdot \log\ns }\,.
    \]
\end{theorem}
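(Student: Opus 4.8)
The plan is to revisit the reduction behind~\cref{theo:rappor:improved} but replace the crude sub-Gaussian control of each coordinate of $\bar{Y}$ by a sharp \emph{sub-gamma} (Bennett-type) bound that exploits how small the true variance of each coordinate is in the low-privacy regime. By~\cref{lemma:rappor-unbiased} it suffices to bound $\bEE{\norminf{\bar{Y}-\bEE{\bar{Y}}}}$ and then multiply by $\rho\eqdef\frac{e^{\priv/2}+1}{e^{\priv/2}-1}$, which is $O(1)$ as soon as $\priv\geq 1$. Fix a coordinate $j\in[\ab]$ and let $\ns_j$ be the number of users holding value $j$; then $\ns(\bar{Y}_j-\bEE{\bar{Y}_j})=\sum_{i=1}^\ns(Y_{i,j}-\bEE{Y_{i,j}})$ is a sum of $\ns$ independent, centred, $[-1,1]$-bounded variables, of which $\ns_j$ are shifted $\bernoulli{1-p}$ and $\ns-\ns_j$ are shifted $\bernoulli{p}$, with $p=1/(e^{\priv/2}+1)$. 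The key observation is that both types share the \emph{same tiny variance} $p(1-p)=\frac{e^{\priv/2}}{(e^{\priv/2}+1)^2}\leq e^{-\priv/2}$, so every coordinate of $\bar{Y}$ has variance exactly $p(1-p)/\ns$ — exponentially smaller than the sub-Gaussian proxy $\sigma^2(p)=\Theta(1/\priv)$ used in~\cref{theo:rappor:improved}, which is precisely the slack we must recover.

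The next step is to feed this small-variance information into a concentration inequality that is sensitive to the variance rather than to boundedness alone. The cleanest route is to invoke the ``local Glivenko--Cantelli'' mean-estimation bound of~\cite{CohenK23} (see also~\cite{BlanchardV24}) for the product distribution over $\{0,1\}^\ab$ that RAPPOR induces: since all $\ab$ marginals are biased Bernoullis with variance $\Theta(e^{-\priv/2})$, it yields
\[
\bEE{\norminf{\bar{Y}-\bEE{\bar{Y}}}} = \tildeO{\sqrt{\frac{\log\ab}{\ns e^{\priv/2}}}+\frac{\log\ab}{\ns\priv}}\,,
\]
the tilde hiding a single $\log\ns$ factor in the second term. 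Alternatively this can be carried out by hand: Bennett's inequality gives, for each $j$, a sub-gamma tail $\probaOf{\abs{\ns(\bar{Y}_j-\bEE{\bar{Y}_j})}>s}\leq 2\exp\bigl(-\Omega(v\,h(s/v))\bigr)$ with $v\eqdef\ns p(1-p)$ and $h(u)\eqdef(1+u)\ln(1+u)-u$; a union bound over the $2\ab$ tail events followed by integrating $s$ from the relevant threshold to infinity produces the $\sqrt{\ }$-term from the near-Gaussian part of the tail and the $\frac{\log\ab}{\ns\priv}$-term from the Poisson part — the $\frac1\priv$ (rather than the constant that vanilla Bernstein would give) appearing because the solution $s^\ast$ of $s\,\ln(s/v)\asymp\log\ab$ satisfies $\ln(s^\ast/v)\asymp\priv$ once $v\asymp\ns e^{-\priv/2}$. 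Multiplying through by $\rho=O(1)$ and using $p(1-p)\le e^{-\priv/2}$ then gives the claimed bound on $\bEE{\norminf{\hat q-q}}$.

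I expect the main obstacle to be extracting the \emph{second} term correctly. A plain Bernstein/sub-gamma-with-constant-scale argument only yields $\frac{\log\ab}{\ns}$, so one genuinely needs the Poisson-type (Bennett) tail for very biased Bernoullis, and must check that \emph{every} deviation that has to be controlled is governed by the rate $v=\ns p(1-p)\asymp\ns e^{-\priv/2}$ rather than by the Lipschitz scale: in particular the downward deviations of the (possibly $\Theta(\ns)$-heavy) coordinate in the worst-case ``almost all users hold the same element'' instance, as well as both tails of the light coordinates. A secondary subtlety is that the $\ns$ noisy vectors are independent but \emph{not} identically distributed (user $i$'s marginal on coordinate $j$ depends on whether $x_i=j$), so applying the local Glivenko--Cantelli bound of~\cite{CohenK23} in its stated form requires either an independent-but-not-identical version, conditioning and arguing per coordinate via Bennett as above, or exploiting that each coordinate's sum only ever mixes the two symmetric Bernoulli types $\bernoulli{p}$ and $\bernoulli{1-p}$. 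Tracking the constants hidden in $\rho=\frac{e^{\priv/2}+1}{e^{\priv/2}-1}$ (the reason the statement restricts to $\priv\geq1$) is then routine.
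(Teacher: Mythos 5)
Your high-level plan matches the paper's: pass from $\hat q$ to $\bar Y$, observe that each coordinate has tiny variance $p(1-p)\approx e^{-\priv/2}$, and invoke the local Glivenko--Cantelli bound of~\cite{CohenK23} rather than a crude sub-Gaussian maximal inequality. You also correctly flag the one genuine obstacle: the $\ns$ noisy bit-vectors are independent but \emph{not} identically distributed, so~\cref{theo:local:gc:ck} does not apply as stated. However, you leave this obstacle unresolved --- you list three candidate fixes (an i.n.i.d.\ version of local GC, a per-coordinate Bennett argument, or ``exploiting the two symmetric Bernoulli types'') without committing to or executing any of them. This is precisely where the paper's proof does nontrivial work: it splits $\bar Y_j=\bar Y_j^++\bar Y_j^-$ into the $\binomial{\ns_j}{1-p}$ and $\binomial{\ns-\ns_j}{p}$ parts, applies the triangle inequality to the expected maximum, and then proves a coupling lemma (\cref{lemma:coupling:binomials}) showing that replacing each $\binomial{\ns_j}{1-p}$ by $\binomial{\ns}{1-p}$ (and $\binomial{\ns-\ns_j}{p}$ by $\binomial{\ns}{p}$) can only increase the expected maximum. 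Only after this monotonicity step do the $Z^{\pm}$ vectors become genuine empirical means of $\ns$ i.i.d.\ draws from a product of identical Bernoullis, so that~\cref{theo:local:gc:ck} applies with $\tilde\mu_i\equiv p$. Without something equivalent to~\cref{lemma:coupling:binomials}, your argument stops short of a proof.

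Regarding your fallback ``by hand via Bennett'' route: the paper explicitly remarks that generic sub-gamma maximal inequalities (e.g.\ Corollary~2.6 of Boucheron--Lugosi--Massart) do not yield the stated rate, and your own discussion confirms that the constant-scale sub-gamma term gives only $\log\ab/\ns$. You are right that one would need the genuine Bennett/Poisson tail with its $t\log(t/v)$ decay, and the heuristic you sketch for where the $1/\priv$ comes from is directionally sound; but the integration of the Bennett tail over $2\ab$ events, uniformly over all allocations $\{\ns_j\}$ summing to $\ns$, is not carried out, and it is not obvious that it recovers the $\log\ns$ factor in the same form as the Cohen--Kontorovich bound does. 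If you want to keep that route, you should at least sketch the tail-integration computation and explain how it degenerates gracefully when $\ns\gg e^{\priv/2}$, since that is exactly where the $\log\ns$ term in the theorem statement matters.
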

Importantly, this is better than the bound in~\cref{eq:expected-max} whenever $\ns = \tilde{\Omega}((\log\ab)/\priv)$, which is the regime of interest (small constant, or vanishing, error). 
To see why this better bound may hold for large $\priv$, recall that the bound given in~\cref{eq:expected-max} relies on analyzing the expected maximum of $\ab$ (centered) Binomials random variables using their \emph{sub-gaussian} behavior. This is good when the parameters of the Binomials are not too skewed; however, in the low privacy regimes, the parameters of the Bernoulli summands become very close to $0$ (or $1$): in that case, to analyze the expected maximum of the Binomials it is tighter to see them as having a \emph{sub-gamma} behavior. Details follow.
\begin{proof}[Proof of~\cref{theo:optimal:rappor:ub}]
As alluded to above, we want to analyze the expected behavior of the maximum of Binomial random variables \emph{beyond} the sub-gaussian regime. Invoking generic bounds for sub-gamma random variables such as~\cite[Corollary~2.6]{Boucheron:13}, unfortunately, does not lend itself to the order-optimal bounds either. Instead, we rely on the ``local Glivenko--Cantelli'' bounds recently obtained by~\cite{CohenK23,BlanchardV24}, which provide a more refined upper bound: to introduce the result we will invoke, we first need some notation.
\begin{definition}
Given $\mu\in[0,1]^{\N}$, denote by $\tilde{\mu}\in[0,1/2]^{\N}$ the sequence defined by $\tilde{\mu}_i = \min(\mu, 1-\mu)$ for all $i\geq 1$, and by $\tilde{\mu}^\downarrow$ its non-increasing rearrangement. Finally, let $p_\mu$ denote the product distribution over $\{0,1\}^\N$ with mean vector $\mu$.
\end{definition}
%
\iffalse
With this in hand, the main result of~\cite{BlanchardV24} can be restated as follows:
\begin{theorem}[{\cite[Theorem~1]{BlanchardV24}}]
    \label{theo:local:gc}
    Let $\ns \geq 1$, and suppose that $\mu\in[0,1]^{\N}$ is such that $\lim_{i\to\infty} \tilde{\mu}_i^\downarrow = 0$. Let $\hat{\mu}_{\ns} = \frac{1}{\ns} \sum_{j=1}^{\ns} X_j \in [0,1]^\N$ denote the empirical estimator for $\mu$ given $\ns$ \iid samples $X_1,\dots, X_{\ns}$ from $p_{\mu}$. Then the following holds:
    \begin{itemize}
        \item If $\sup_{i\geq 1} i \cdot \tilde{\mu}_i^\downarrow \leq \frac{1}{2\ns}$, then 
        $
            \bEE{\norminf{\hat{\mu}_\ns - \mu}} \asymp \frac{1}{\ns} \land \normone{\mu} 
        $
        \item else,
        \[
            \bEE{\norminf{\hat{\mu}_\ns - \mu}} \asymp 1 \land \sup_{i\geq 1} \Paren{ \sqrt{\frac{\tilde{\mu}_i^\downarrow  \log(i+1)}{\ns}} + \frac{\log(i+1)}{\ns\log\Paren{2+\frac{\log(i+1)}{\ns \tilde{\mu}_i^\downarrow}}} }
        \]
    \end{itemize}
\end{theorem}
\noindent Note that this improves on the previous result due to~\cite{CohenK23} by up to a $\log \ns$ factor in the second term:
\fi
%
\noindent With this in hand, the main result of~\cite{CohenK23} can be restated as follows:
\begin{theorem}[{\cite[Theorem~3]{CohenK23}}]
    \label{theo:local:gc:ck}
     Let $\ns \geq 21$, and suppose that $\mu\in[0,1]^{\N}$ is such that $\lim_{i\to\infty} \tilde{\mu}_i^\downarrow = 0$. Let $\hat{\mu}_{\ns} = \frac{1}{\ns} \sum_{j=1}^{\ns} X_j \in [0,1]^\N$ denote the empirical estimator for $\mu$ given $\ns$ \iid samples $X_1,\dots, X_{\ns}$ from $p_{\mu}$. Then the following holds:
        \[
            \bEE{\norminf{\hat{\mu}_\ns - \mu}} \lesssim
            \sup_{i\geq 1} \sqrt{\frac{\tilde{\mu}_i^\downarrow  \log(i+1)}{\ns}} +  \frac{\log \ns}{\ns} \sup_{i\geq 1}  \frac{\log(i+1)}{\log \frac{1}{\tilde{\mu}_i^\downarrow}}
        \]
\end{theorem}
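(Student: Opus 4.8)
The plan is to establish this $\lp[\infty]$-deviation bound by a weighted union bound over coordinates, after reducing to a convenient normal form. First, for any fixed set $S$ of coordinates, complementing the $i$-th bit of every sample for $i\in S$ keeps $p_\mu$ a product distribution, sends $\hat\mu_{\ns,i}\mapsto 1-\hat\mu_{\ns,i}$ and $\mu_i\mapsto 1-\mu_i$ for $i\in S$, and leaves each $\abs{\hat\mu_{\ns,i}-\mu_i}$ — hence $\norminf{\hat\mu_\ns-\mu}$ — unchanged; moreover $\norminf{\cdot}$ together with the product structure is permutation-invariant. So I would assume without loss of generality that $\mu=\tilde\mu^\downarrow$, i.e.\ $\frac{1}{2}\ge\mu_1\ge\mu_2\ge\cdots\to 0$ and the bound to prove reads with $\mu_i$ in place of $\tilde\mu_i^\downarrow$. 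Writing $Z_i\eqdef\abs{\hat\mu_{\ns,i}-\mu_i}$, we have $\ns\hat\mu_{\ns,i}\sim\mathrm{Bin}(\ns,\mu_i)$ and $\norminf{\hat\mu_\ns-\mu}=\sup_{i\ge 1}Z_i$.

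The first substantive step is per-coordinate tail bounds: for $\delta\in(0,1)$ I would produce a radius $r_i(\delta)$ with $\probaOf{Z_i>r_i(\delta)}\le\delta$ by combining Bernstein's inequality for the Binomial (with variance proxy $\mu_i$), which gives the familiar sub-gamma radius $\sqrt{\mu_i\log(1/\delta)/\ns}+\log(1/\delta)/\ns$, with — when $\mu_i$ is of order $1/\ns$ or smaller — the sharper upper-tail Chernoff bound $\probaOf{\mathrm{Bin}(\ns,\mu_i)\ge k}\le(e\ns\mu_i/k)^k$ together with the trivial bound $\hat\mu_{\ns,i}-\mu_i\ge-\mu_i$. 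A short case analysis on the size of $\mu_i$ relative to $1/\ns$ and to $\log(1/\delta)/\ns$ should yield
\[
 r_i(\delta)\ \lesssim\ \sqrt{\frac{\mu_i\log(1/\delta)}{\ns}}\ +\ \frac{1}{\ns}\cdot\frac{\log(1/\delta)}{\log\Paren{2+\frac{\log(1/\delta)}{\ns\mu_i}}}\,.
\]
Then I would pick $\delta_i\asymp 1/(i+1)^2$ (with a normalization keeping each $\delta_i\le 1/e$, say), so that $\sum_{i\ge1}\delta_i=O(1)$ and $\log(1/\delta_i)\asymp\log(i+1)$, set $t_i\eqdef r_i(\delta_i)$ and $T\eqdef\sup_{i\ge1}t_i$, and note that
\[
 T\ \lesssim\ \sup_{i\ge1}\sqrt{\frac{\mu_i\log(i+1)}{\ns}}\ +\ \frac{1}{\ns}\sup_{i\ge1}\frac{\log(i+1)}{\log\Paren{2+\frac{\log(i+1)}{\ns\mu_i}}}\,;
\]
here the hypothesis $\lim_{i\to\infty}\tilde\mu_i^\downarrow=0$ is exactly what forces the second supremum to be finite. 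To turn this into an expectation bound I would use $Z_i\le t_i+(Z_i-t_i)_+$, hence $\sup_iZ_i\le T+\sup_i(Z_i-t_i)_+\le T+\sum_{i\ge1}(Z_i-t_i)_+$, and take expectations, using that $\bEE{(Z_i-t_i)_+}=\int_{t_i}^\infty\probaOf{Z_i>u}\,du\lesssim t_i\delta_i$ — the point being that beyond its $\delta_i$-quantile $t_i$ the tail of $Z_i$ decays on a scale of at most $t_i/\log(1/\delta_i)\le t_i$, in both the Bernstein and the Chernoff regimes, so the remaining integral is at most a constant times $t_i\delta_i$. Summing, $\bEE{\norminf{\hat\mu_\ns-\mu}}\le T\bigl(1+\sum_i\delta_i\bigr)\lesssim T$.

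To finish, I would match $T$ against the stated form, i.e.\ show $\frac{1}{\ns}\sup_i\frac{\log(i+1)}{\log(2+\log(i+1)/(\ns\mu_i))}\lesssim\frac{\log\ns}{\ns}\sup_i\frac{\log(i+1)}{\log(1/\mu_i)}$, term by term: if $\mu_i\le 1/\ns^2$ then $\ns\mu_i\le\mu_i^{1/2}$, so the left-hand denominator is at least $\frac{1}{2}\log(1/\mu_i)$ and the $i$-th left term is already $\lesssim\log(i+1)/\log(1/\mu_i)$ (no $\log\ns$ needed); if instead $1/\ns^2<\mu_i\le\frac{1}{2}$ then $\log(1/\mu_i)<2\log\ns$, so $\log\ns\cdot\frac{\log(i+1)}{\log(1/\mu_i)}>\frac{1}{2}\log(i+1)$, which dominates the $i$-th left term (at most $\log(i+1)/\log 2$). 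The constant $\ns\ge21$ is needed only to keep these elementary comparisons clean; combining the steps yields the claim.

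The main obstacle I anticipate is the Chernoff-regime part of the per-coordinate tail bound: obtaining the $\log(2+\log(1/\delta)/(\ns\mu_i))$ denominator and, crucially, verifying it is sharp enough that the weighted union bound reproduces the stated bound up to constants rather than something weaker. A second point requiring care is the bookkeeping in the last two steps — confirming $\sup_{i\ge1}$ is finite (which rests entirely on $\tilde\mu_i^\downarrow\to 0$, and is exactly what separates a meaningful bound from a vacuous one), and tracking the single $\log\ns$ factor, which enters only through the ``middle'' regime $\mu_i\asymp1/\ns$. A chaining argument in the spirit of the original reference is a possible alternative to the weighted union bound, but since only an upper bound up to constants is asserted, the union bound is enough.
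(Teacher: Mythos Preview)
The paper does not prove this statement: \cref{theo:local:gc:ck} is quoted verbatim from \cite{CohenK23} and used as a black box in the proofs of \cref{theo:optimal:rappor:ub} and \cref{theo:optimal:pgr:ub}. There is therefore no ``paper's own proof'' to compare your proposal against.

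That said, your outline is a reasonable sketch of how the cited result is established. The reduction to $\mu=\tilde\mu^\downarrow$, the per-coordinate sub-gamma/Poisson tail bound with the $\log\bigl(2+\log(1/\delta)/(\ns\mu_i)\bigr)$ denominator, the weighted union bound with $\delta_i\asymp(i+1)^{-2}$, and the final term-by-term comparison introducing the $\log\ns$ factor are all in the spirit of the original argument in \cite{CohenK23}. The one place where your sketch is genuinely loose is the claim $\bEE{(Z_i-t_i)_+}\lesssim t_i\delta_i$ uniformly in $i$: in the extreme tail where $\ns\mu_i\ll\delta_i$ one has $t_i\le\mu_i$ but $\bEE{(Z_i-t_i)_+}$ can be of order $\mu_i$ rather than $\mu_i\delta_i$, so a separate (easy) treatment of those coordinates is needed --- for instance, bounding their total contribution by $\sum_{i:\,\ns\mu_i<\delta_i}\mu_i$ and checking this is $\lesssim T$. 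You correctly flag the Chernoff-regime bookkeeping as the delicate part; with that caveat the plan is sound.
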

\noindent(Note that~\cite{BlanchardV24} recently improved on this results by up to a $\log \ns$ factor in the second term. For simplicity, we use the slightly weaker result of~\cite{CohenK23}, as it is easier to manipulate.)\smallskip

We want to apply this result to bounding $\bEE{\norminf{\bar{Y} - \bEE{\bar{Y}}}}$. However, we face one obstacle in doing so, as we do not have a product distribution over $\{0,1\}^\ab$: while $\bar{Y}_1,\dots,\bar{Y}_{\ab}$ are indeed independent, and each of the form
\[
    \bar{Y}_i = \frac{1}{\ns}\sum_{j=1}^{\ns} Y_{i,j}
\]
where $Y_{i,1},\dots, Y_{i,\ns}$ are independent Bernoulli random variables, these Bernoullis are not identically distributed: exactly $\ns_i = \ns q_i$ of them have parameter $1-p = \frac{e^{\priv/2}}{e^{\priv/2}+1}$, and the remaining $\ns- \ns_i$ have parameter $p= \frac{1}{e^{\priv/2}+1}$. %
Of course, we do not know the $\ns_i$'s, as this is what we are trying to estimate; all we have is that
\[
    \sum_{i=1}^\ab \ns_i = \ns\,.
\]
To circumvent this issue, let us write, for each $1\leq i\leq \ab$,
\[
    \bar{Y}_i = \bar{Y}^+_i + \bar{Y}^-_i
\]
where $\ns\bar{Y}^+_i \sim \binomial{\ns_i}{1-p}$,  $\ns\bar{Y}^-_i \sim \binomial{\ns-\ns_i}{p}$ are independent. We can then express
\begin{align}
    \bEE{\norminf{\bar{Y} - \bEE{\bar{Y}}}}
    &= \bEE{\max_{1\leq i\leq \ab }\abs{\bar{Y}_i - \bEE{\bar{Y}_i}}} \notag\\
    &= \bEE{\max_{1\leq i\leq \ab }\Paren{ \abs{\bar{Y}^+_i - \bEE{\bar{Y}^+_i}}
    + \abs{\bar{Y}^-_i - \bEE{\bar{Y}^-_i}}} } \notag\\
    &\leq \bEE{\max_{1\leq i\leq \ab }\abs{\bar{Y}^+_i - \bEE{\bar{Y}^+_i}}} + \bEE{\max_{1\leq i\leq \ab }\abs{\bar{Y}^-_i - \bEE{\bar{Y}^-_i}}}
    \label{ub:two:expected:max}
\end{align}
Now, instead of taking the maximum of $\ab$ sums of Bernoullis with different parameters but same number of summands ($\ns$ summands), we take the maximum of $\ab$ sums of Bernoullis with the same parameter (\ie Binomials) but different number of summands (at most $\ns$). This does not necessarily seem like an improvement, and still does not let us apply~\cref{theo:local:gc:ck} to either of the two expectations. However, \emph{if} we could argue that ``adding summands to each Binomial'' cannot decrease the expected maximum, then we would be in good shape: that is, we want to upper bound
\[
    \bEE{\max_{1\leq i\leq \ab }\abs{\bar{Y}^+_i - \bEE{\bar{Y}^+_i}}}
\]
by
\[
\bEE{\max_{1\leq i\leq \ab }\abs{Z^+_i - \bEE{Z^+_i}}}
\]
where $\ns Z^+_i \sim \binomial{\ns}{1-p}$ (instead of $\binomial{\ns_i}{1-p}$). Intuitively, this seems reasonable, as adding independent summands should make the Binomial more likely to deviate from its expectation. The next lemma makes this intuition rigorous:
\begin{lemma}
    \label{lemma:coupling:binomials}
    Fix $n_1,\dots, n_k, m_1,\dots, m_k \in \N$ and $p_1,\dots, p_k\in[0,1]$, with $n_i \leq m_i$ for all $i$. Let $N_1,\dots, N_k$ and $M_1,\dots, M_k$  be (not necessarily independent) random variables with $N_i \sim\binomial{n_i}{p_i}$ and $M_i \sim\binomial{m_i}{p_i}$. Then
    \[
        \expect{\max_{1\leq i\leq k} \abs{N_i - \expect{N_i}}} \leq \expect{\max_{1\leq i\leq k} \abs{M_i - \expect{M_i}}}\,.
    \]
\end{lemma}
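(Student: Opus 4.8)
The plan is to prove the lemma by exhibiting an explicit coupling between the collections $(M_i)$ and $(N_i)$: the one in which each $M_i$ is obtained from $N_i$ by adding an independent binomial ``noise'' term, and then to show that adding such noise can only increase the expected maximum deviation. Concretely, on a common probability space carrying $(N_1,\dots,N_k)$ I would introduce auxiliary random variables $N_1',\dots,N_k'$, mutually independent and independent of $(N_1,\dots,N_k)$, with $N_i'\sim\binomial{m_i-n_i}{p_i}$ (this uses $n_i\le m_i$), and set $M_i\eqdef N_i+N_i'$. Since the convolution of $\binomial{n_i}{p_i}$ and $\binomial{m_i-n_i}{p_i}$ is $\binomial{m_i}{p_i}$, the resulting $(M_1,\dots,M_k)$ has exactly the prescribed marginals; moreover it inherits any independence across coordinates enjoyed by $(N_1,\dots,N_k)$, which is what happens in the application to $\bar Y^+$ and $Z^+$. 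It therefore suffices to prove the stated inequality for this particular coupling.

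\textbf{The core estimate.} Write $U_i\eqdef N_i-\expect{N_i}$ and $V_i\eqdef N_i'-\expect{N_i'}$, so that $M_i-\expect{M_i}=U_i+V_i$, each $V_i$ is centered, and $(V_1,\dots,V_k)$ is independent of $(U_1,\dots,U_k)$. Condition on $\mathcal F\eqdef\sigma(U_1,\dots,U_k)$. First, since $U_i$ is $\mathcal F$-measurable while $V_i$ is independent of $\mathcal F$ with $\expect{V_i}=0$, Jensen's inequality for the convex function $t\mapsto\abs{U_i+t}$ gives $\expect{\abs{U_i+V_i}\mid\mathcal F}\ge\abs{U_i}$ almost surely. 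Second, because the maximum runs over the finite set $\{1,\dots,k\}$, we have $\expect{\max_{1\le i\le k}\abs{U_i+V_i}\mid\mathcal F}\ge\max_{1\le i\le k}\expect{\abs{U_i+V_i}\mid\mathcal F}$ almost surely. Combining the two bounds, $\expect{\max_{1\le i\le k}\abs{M_i-\expect{M_i}}\mid\mathcal F}\ge\max_{1\le i\le k}\abs{U_i}=\max_{1\le i\le k}\abs{N_i-\expect{N_i}}$ almost surely, and taking expectations finishes the proof. The degenerate indices with $p_i\in\{0,1\}$ need no separate argument, since there $U_i=V_i=0$.

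\textbf{Where the difficulty lies.} I do not anticipate a genuine obstacle, but the point worth stressing is that the lemma is about this particular coupling, not about the marginals alone: if one were free to couple the $M_i$ arbitrarily --- say comonotonically, $M_1=\dots=M_k$ --- then $\expect{\max_i\abs{M_i-\expect{M_i}}}$ could be made strictly \emph{smaller} than $\expect{\max_i\abs{N_i-\expect{N_i}}}$ for a suitably correlated $(N_i)$. Thus the substance of the proof is that the ``additive independent binomial noise'' coupling realizes the desired inequality, and the conditioning-then-Jensen step is exactly what converts the intuition ``adding independent mean-zero noise cannot shrink the expected sup-norm deviation'' into a rigorous bound; the only care needed is to keep the noise $(N_i')$ independent of $(N_i)$ so that the conditional Jensen step is valid.
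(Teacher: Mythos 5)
Your proof is correct and uses essentially the same argument as the paper: the identical additive-binomial-noise coupling $M_i = N_i + N_i'$, followed by the observation that conditioning on $\sigma(N_1,\dots,N_k)$ makes each $M_i-\expect{M_i}$ a centered perturbation of $N_i-\expect{N_i}$, and then Jensen. The only (immaterial) difference is that you split Jensen into a coordinatewise step plus a max-exchange step, whereas the paper applies it in one shot to the convex map $\norminf{\cdot}$; your version is, if anything, slightly more explicit about what sigma-algebra is being conditioned on.
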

\begin{proof}
    Set $\tilde{N}_i \eqdef N_i - \expect{N_i}$ and $\tilde{M}_i \eqdef M_i - \expect{M_i}$ for all $1\leq i\leq k$. 
    We give a coupling of $\tilde{N}_i,\tilde{M}_i$ such that
    \[
        \expectCond{\tilde{M}_i}{\tilde{N}_i} = \tilde{N}_i\,.
    \]
    Such a coupling can be obtained by setting $\tilde{M}_i = \tilde{N}_i + \Delta_i - \expect{\Delta_i}$, where $\Delta_i\sim\binomial{m_i-n_i}{p_i}$ is independent of $\tilde{N}_i$. Then it is easy to check that $\tilde{M}_i$ has the right distribution, since  $N_i+\Delta_i\sim\binomial{m_i}{p_i}$; and the the conditional expectation is indeed as claimed. 
    Using this coupling, we obtain
    \begin{align*}
        \expect{\max_{1\leq i\leq k} |\tilde{N}_i|}
        &= \expect{\max_{1\leq i\leq k} |\expectCond{\tilde{M}_i}{\tilde{N}_i}|} \\
        &\leq \expect{\expectCond{\max_{1\leq i\leq k} |\tilde{M}_i|}{\tilde{N}_i}} \tag{Jensen's inequality}\\
        &= \expect{\max_{1\leq i\leq k} |\tilde{M}_i|}
    \end{align*}
    establishing the lemma.\footnote{More generally, via the existence of this coupling the argument shows that $(\tilde{N}_1,\dots, \tilde{N}_k) \preceq_{\rm{}cx} (\tilde{M}_1,\dots, \tilde{M}_k)$ (domination in the convex order), which in turn is equivalent to having $\expect{\phi(\tilde{N}_1,\dots, \tilde{N}_k)}\leq \expect{\phi(\tilde{M}_1,\dots, \tilde{M}_k)}$ for every convex function $\phi$.}
\end{proof}
Let $Z^+_1,\dots,Z^+_k$ (resp. $Z^-_1,\dots,Z^-_k$) be \iid\ with $\ns Z^+_i\sim \binomial{\ns}{1-p}$ random variables (resp. $\ns Z^-_i\sim\binomial{\ns}{p}$). Invoking~\cref{lemma:coupling:binomials} with $m_1=\dots=m_k= \ns$ separately on the two expectations of~\cref{ub:two:expected:max}, we get
\[
    \bEE{\norminf{\bar{Y} - \bEE{\bar{Y}}}}
    \leq \bEE{\max_{1\leq i\leq \ab }\abs{Z^+_i - \bEE{Z^+_i}}} + \bEE{\max_{1\leq i\leq \ab }\abs{Z^-_i - \bEE{Z^-_i}}}
\]
Both of the terms in the RHS now fit the setting of~\cref{theo:local:gc:ck}. Moreover, since their parameters are $p$ (for the first expectation) and $1-p$ (for the second), in both case the corresponding $\tilde{mu}_i=\min(p,1-p) = p = \frac{1}{e^{\priv/2}+1}$ is the same, and applying the theorem will give the same upper bound for both expectations. Thus,~\cref{theo:local:gc:ck} yields
\[
    \bEE{\norminf{\bar{Y} - \bEE{\bar{Y}}}}
    \lesssim \sqrt{\frac{p \log(\ab+1)}{\ns}} + \frac{\log\ns}{\ns} \cdot \frac{\log(\ab+1)}{\log\frac{1}{p}}
\]
Recalling the setting of $p$, along with the fact that $\bEE{\norminf{\hat{q} - q}} = \frac{e^{\priv/2}+1}{e^{\priv/2}-1}\bEE{\norminf{\bar{Y} - \bEE{\bar{Y}}}}$, finally gives
\begin{align}
    \bEE{\norminf{\bar{Y} - \bEE{\bar{Y}}}}
    &\lesssim \sqrt{\frac{\log(\ab+1)}{\ns}\cdot \frac{e^{\priv/2}+1}{(e^{\priv/2}-1)^2}} + \frac{\log\ns}{\ns} \cdot \frac{e^{\priv/2}+1}{e^{\priv/2}-1} \cdot \frac{\log(\ab+1)}{\log(e^{\priv/2}+1)}
\end{align}
To conclude, observe that the RHS is $\bigO{\sqrt{\frac{\log\ab}{\ns \priv^2}} + \frac{\log\ab}{\ns\priv}\cdot \log\ns}$ for $\priv \leq 1$, and 
$
\bigO{\sqrt{\frac{\log\ab}{\ns e^{\priv/2}}} + \frac{\log\ab}{\ns\priv}\cdot \log\ns }
$
for $\priv \geq 1$.
\end{proof}

\subsection{Upper bound for Projective Geometry Response}
\label{sec:pgr}
Projective geometry response (PGR), introduced by~\cite{feldman22a} achieves optimal rates for $\lp[2]^2$ error, communication, and near-optimal processing time. The protocol is based on the general template established in~\cite{ASZ:18:HR}: specifically, PGR relies on a set structure defined by \emph{projective planes}, as detailed next. For a prime power $\pp$ and $\ab = \frac{\pp^t - 1}{\pp-1}$, the authors define a $t$-dimensional vector space $\mathbb{F}^t_\pp$, where each element  $x\in[\ab]$ is represented by one of the canonical basis vectors. These basis vectors in turn each uniquely determine a projective plane $S(x)$, such that there are $s=|S(x)|=\frac{\pp^{t-1} - 1}{\pp-1}$ ``high probability'' elements. Every one of these sets in turn has an intersection with every other set of size $c=|S(x)\cap S(x')|=\frac{\pp^{t-2} - 1}{\pp-1}$. By choosing a prime power $\pp\approx e^\priv + 1$, optimal error is achieved.

\begin{theorem}
    \label{theo:optimal:pgr:ub}
    Projective geometry response (PGR)~\cite{feldman22a} achieves the optimal rate for $\lp[\infty]$ error. More specifically, the expected $\lp[\infty]$ error of Projective Geometry Response for frequency estimation satisfies
    \begin{align*}
        \bEE{\norm{\hat{q}-q}_\infty}&\leq\sqrt{\frac{16(2e^\priv + 1)^2\log(\ab+1)}{e^\priv(e^\priv - 1)^2\ns}}+\frac{4(2e^\priv + 1)\log(\ab+1)}{(e^\priv - 1)\priv\ns}\log\ns
\\
            &\in\bigO{\sqrt{\frac{\log\ab}{ne^\priv}}+\frac{\log\ab}{n\priv}\log\ns}
    \end{align*}
\end{theorem}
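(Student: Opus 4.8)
The plan is to transpose, essentially verbatim, the proof of \cref{theo:optimal:rappor:ub} to the projective set system underlying PGR. First I would record the two structural facts that make this possible. (i)~PGR's estimator is coordinate-wise affine in the counts: writing $N_i \eqdef \sum_{j=1}^\ns \indic{Y_j\in S(i)}$ and $\bar N_i\eqdef N_i/\ns$, one has $\hat q_i - q_i = \frac{1}{a-b}\Paren{\bar N_i - \bEE{\bar N_i}}$, where $a\eqdef\probaOf{Y_j\in S(i)\mid X_j=i}$ and $b\eqdef\probaOf{Y_j\in S(i)\mid X_j\neq i}$. (ii)~A direct computation from $\abs{S(\cdot)}=s$, $\abs{S(\cdot)\cap S(\cdot)}=c$ and $\pp\approx e^\priv+1$ gives $a-b=\frac{(s-c)(e^\priv-1)}{s(e^\priv-1)+\ab}=\bigTheta{1}$ and $b=\bigTheta{e^{-\priv}}$. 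Together these reduce the target to bounding
\[
\bEE{\norminf{\hat q - q}} = \frac{1}{a-b}\cdot\bEE{\max_{1\le i\le\ab}\abs{\bar N_i-\bEE{\bar N_i}}},
\]
exactly the kind of quantity controlled for RAPPOR.

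The next steps mirror \cref{subsec:rappor-improved,sec:tighter-analysis}. Split each count over the users holding $i$ and the rest: $N_i = N_i^+ + N_i^-$ with $\ns N_i^+\sim\binomial{\ns_i}{a}$ over the $\ns_i\eqdef\ns q_i$ ``owners'' of $i$ and $\ns N_i^-\sim\binomial{\ns-\ns_i}{b}$ over the others, the two being independent, so that the triangle inequality as in \cref{ub:two:expected:max} bounds the expected maximum by a sum of two expected maxima. For the low-probability term I would apply \cref{lemma:coupling:binomials} with $m_1=\dots=m_\ab=\ns$ to replace the $N_i^-$ by \iid\ copies of $\binomial{\ns}{b}$, then feed this into \cref{theo:local:gc:ck} with constant mean vector $\tilde\mu_i\equiv\min(b,1-b)=b$; using $b\asymp e^{-\priv}$ and $\log(1/b)\asymp\priv$ (and the harmless $\bigTheta{1}$ factor $1/(a-b)$) this yields precisely the two summands $\sqrt{\log\ab/(\ns e^\priv)}$ and $(\log\ab\log\ns)/(\ns\priv)$ of the claim.

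The hard part will be the high-probability term, which is where PGR genuinely differs from RAPPOR. In RAPPOR an owner of $i$ turns the $i$-th bit on with probability $1-p\approx 1$, and the symmetry $\min(1-p,p)=p$ makes the owners' contribution no worse than everyone else's; here an owner of $i$ lands in $S(i)$ with probability $a$ bounded away from \emph{both} $0$ and $1$, so $\min(a,1-a)=\bigTheta{1}$ and a blind repetition of the previous step would only give the far weaker $\bigO{\sqrt{\log\ab/\ns}}$. To recover the sharp rate one has to exploit the budget $\sum_{i=1}^\ab\ns_i=\ns$: at most $\ns/\tau$ symbols can have $\ns_i\ge\tau$. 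So I would threshold at a well-chosen $\tau$, bound the symbols with $\ns_i<\tau$ deterministically (each such $\abs{\bar N_i^+-\bEE{\bar N_i^+}}\le\ns_i/\ns<\tau/\ns$), and bound the at most $\ns/\tau$ heavy symbols by a sub-gaussian maximal inequality over that much smaller index set (equivalently, by handing \cref{theo:local:gc:ck} the rearranged means, supported on at most $\ns/\tau$ coordinates), then optimise $\tau$. This last optimisation is where the real care is needed --- tracking lower-order terms and the interplay between $\tau$ and the size of the heavy set --- and I expect it, rather than anything conceptual, to be the crux of the proof.

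Finally, to obtain the displayed inequality with its stated constants I would substitute the exact values $s=\frac{\pp^{t-1}-1}{\pp-1}$, $c=\frac{\pp^{t-2}-1}{\pp-1}$, $s-c=\pp^{t-2}$, $\ab=\frac{\pp^t-1}{\pp-1}$ and $\pp\approx e^\priv+1$ into the bounds above, keeping rather than discarding lower-order terms; this produces the coefficients $\frac{16(2e^\priv+1)^2}{e^\priv(e^\priv-1)^2}$ and $\frac{4(2e^\priv+1)}{(e^\priv-1)\priv}$ in the two summands. The $\bigO{\cdot}$ form then follows by the same $\priv\le1$ versus $\priv\ge1$ case split used at the end of the proof of \cref{theo:optimal:rappor:ub}.
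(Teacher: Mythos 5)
Your proposal correctly sets up the problem (the affine estimator, the parameters $a$ and $b$, the per‑coordinate split of $\bar N_x$ into owners and non‑owners), but then it and the paper's own proof diverge sharply, at precisely the point you flag as ``the hard part.'' You compute the owners' Bernoulli parameter $a=\probaOf{Y_j\in S(i)\mid X_j=i}=\tfrac{se^\priv}{se^\priv+\ab-s}$, which is $\approx1/2$ for $\pp\approx e^\priv+1$, so $\min(a,1-a)=\Theta(1)$ and the coupling‑then‑\cref{theo:local:gc:ck} template gives only $O(\sqrt{\log\ab/\ns})$ for the owners' term. The paper never grapples with this: it declares $p^+=\tfrac{e^\priv}{se^\priv+\ab-s}=a/s$ and $p^-=\tfrac{s-c}{se^\priv+\ab-s}$, asserts that the $\bar N_x$ decompose into binomials with these parameters, and feeds $p^\pm\lesssim e^{-\priv}$ into~\cref{theo:local:gc:ck}. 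That decomposition is not valid, since $\ns\bar N_x$ is a sum of $\ns_x$ Bernoullis with parameter $a$ and $\ns-\ns_x$ with parameter $b=\tfrac{ce^\priv+s-c}{se^\priv+\ab-s}$, and neither equals $p^+$ or $p^-$. Your concern is therefore a genuine gap in the paper's own argument, and your being more careful here is to your credit.

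Unfortunately, the thresholding you propose cannot close the gap, and I believe no fix can. For a point‑mass dataset at $x^*$ one has $\ns_{x^*}=\ns$, so $\ns\bar N^+_{x^*}\sim\binomial{\ns}{a}$ with $a(1-a)=\Theta(1)$ and $\alpha\geq 2$, hence $\bEE{|\hat q_{x^*}-q_{x^*}|}=\Theta(1/\sqrt\ns)$. This single coordinate already forces $\bEE{\norminf{\hat q-q}}=\Omega(1/\sqrt\ns)$, which strictly exceeds the claimed $\tildeO{\sqrt{\log\ab/(\ns e^\priv)}+\log\ab/(\ns\priv)}$ once $e^\priv\gg\log\ab$ and $\ns$ is large enough that $\log\ab\log\ns/(\priv\sqrt\ns)\ll1$ --- a range fully compatible with $t\geq3$ and $\pp\approx e^\priv+1$ (so $\priv\lesssim\tfrac12\log\ab$), e.g.\ $\priv\asymp\log\log\ab$. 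No choice of threshold $\tau$ helps, because the heavy coordinate has $\ns$ summands at a constant‑order parameter. The structural asymmetry is this: in RAPPOR an owner's one‑hot coordinate has parameter $1-p\approx1$ so $\min(p,1-p)$ is small, whereas in PGR an owner lands in $S(x)$ with probability $\approx1/2$; the $\lp[2]^2$ analysis absorbs this lone $\Theta(1/\ns)$ variance into a sum over $\ab$ coordinates, but $\lp[\infty]$ cannot. So your plan accurately locates the obstruction, but it appears irreducible, and I do not see how the displayed PGR bound can hold uniformly over all frequency vectors.
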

We briefly recall some details of PGR as described in~\cite{feldman22a}. We will make use of the size of each subset $s$ and the size of each intersection $c$ as described in the introduction to this section, as well as the probability of returning any element $y$ from the output alphabet, given $x$ as an input:
\begin{equation}
    \label{eq:pgr:probabilities}
    Q(Y=y \mid X=x) = \begin{cases}
        \frac{e^\priv}{se^\priv + \ab - s}&\text{ if }y\in S(x)\\
        \frac{s-c}{se^\priv + \ab - s}&\text{ otherwise.}
    \end{cases}
\end{equation}
The estimate $\hat{q}$ of the frequency vector is then given by
\begin{equation}
    \label{eq:pgr:estimate}
    \hat{q}_x \eqdef \alpha\cdot \frac{1}{\ns} \sum_{i=1}^{\ns} \indicSet{Y_i \in S(x)} + \beta, \qquad x\in[\ab]
\end{equation}
where
\begin{equation}
    \label{eq:pgr:alphabeta}
    \alpha = \frac{(e^\priv-1)s+\ab}{(e^\priv-1)(s-c)}\,,\qquad \beta = -\frac{(e^\priv-1)c+s}{(e^\priv-1)(s-c)}
\end{equation}
so that $\hat{q}$ is an unbiased estimator of the true frequency vector $q$. 
 Since a prime power can be found within a factor $2$ of any number, we can choose $\pp$ such that
\begin{equation}
    \label{eq:nearby-prime}
    e^\priv + 1\leq \pp\leq 2(e^\priv + 1)
\end{equation} 
(While we could instead choose $\pp<e^\priv + 1$ and set the inequality to be an undershoot by a factor two, an investigation in that direction led to worse constants and trickier analysis.) 

\noindent Recall that, for every integer $\ell$, 
\begin{equation}
    \pp^\ell - 1=(\pp-1)(1+\pp+\ldots+\pp^{\ell-1})\,,\label{eq:int-divisibility}
\end{equation}
an identity we will rely on extensively in the rest of the section. 

\begin{remark}
    Note that, as introduced above, we must have $\ab = \frac{\pp^t - 1}{\pp-1}$ for a prime power $\pp$ satisfying~\eqref{eq:nearby-prime}. Other values of $\ab$ must be rounded up, losing up to a factor $O(e^\priv)$ in the domain size (\ie working instead with a domain size $\ab' = O(e^\priv k)$). We hereafter ignore this detail, which does not affect the final bound of~\cref{theo:optimal:pgr:ub} unless $\priv \gg \log \ab$; and assume that $\ab$ is of the form stated above. For the same reason, we additionally can assume $t\geq 3$. 
\end{remark}
\begin{proof}[Proof of~\cref{theo:optimal:pgr:ub}]
We start with two lemmas which will be useful in proving the optimal error rate.
\begin{lemma}
\label{lem:s-larger}
    The common size $s$ of every subset $S(x)$, $x\in[\ab]$, satisfies $s\geq e^\priv+2$.
\end{lemma}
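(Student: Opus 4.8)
The plan is to unfold the closed form for $s$ using the divisibility identity~\eqref{eq:int-divisibility} and then invoke the lower bound on the chosen prime power $\pp$. Concretely, since $s = \frac{\pp^{t-1}-1}{\pp-1}$, applying~\eqref{eq:int-divisibility} with $\ell = t-1$ gives $s = 1 + \pp + \pp^2 + \cdots + \pp^{t-2}$, a geometric-type sum with $t-1$ terms.

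Next I would use the fact that $t \geq 3$ (guaranteed by the Remark preceding the proof, since smaller $t$ can be absorbed into the rounding of $\ab$). This ensures the sum $1 + \pp + \cdots + \pp^{t-2}$ contains at least the two terms $1$ and $\pp$, so $s \geq 1 + \pp$. It then remains only to plug in the left-hand inequality of~\eqref{eq:nearby-prime}, namely $\pp \geq e^\priv + 1$, which yields $s \geq 1 + \pp \geq 1 + (e^\priv + 1) = e^\priv + 2$, as desired.

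I do not expect any genuine obstacle here: the statement is essentially an immediate consequence of the geometric-sum identity and the choice of $\pp$, and the only point requiring a moment's attention is that we are entitled to assume $t \geq 3$ (otherwise, e.g.\ $t = 2$ would force $s = 1$, which is too small)~--~but this is exactly what the Remark arranges. The argument is short enough that it can be presented in two or three lines without any case analysis or estimates beyond the single substitution $\pp \geq e^\priv + 1$.
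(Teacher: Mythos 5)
Your proof is correct and follows exactly the paper's argument: expand $s$ as the geometric sum $1+\pp+\cdots+\pp^{t-2}$ via \eqref{eq:int-divisibility}, use $t\geq 3$ to extract the first two terms, and apply the lower bound $\pp\geq e^\priv+1$ from \eqref{eq:nearby-prime}. No difference in substance.
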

\begin{proof}
We can rewrite $s$ as
\begin{equation*}
    s=\frac{\pp^{t-1}-1}{\pp-1}=\sum\limits_{i=0}^{t-2}\pp^i
\end{equation*}
As $t\geq 3$, we have, by applying~\cref{eq:nearby-prime,eq:int-divisibility}
\begin{equation*}
    s=\frac{\pp^{t-1} - 1}{\pp-1}=\sum\limits_{i=0}^{t-2}\pp^i\geq 1+\pp\geq e^\priv + 2.\qedhere
\end{equation*}
\end{proof}
Now remembering that whatever expected error we compute will be multiplied by the normalising constant $\alpha$, we would like to bound that in terms of $\priv$.
\begin{lemma}
\label{lem:alpha-bound}
The value $\alpha$ defined in~\eqref{eq:pgr:alphabeta} satisfies 
    $\alpha\leq 2+\frac{2+\pp}{e^\priv - 1}\leq2+\frac{4+2e^\priv}{e^\priv - 1}$.
\end{lemma}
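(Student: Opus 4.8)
The plan is to reduce the bound to elementary manipulations of truncated geometric sums in $\pp$. First I would put the denominator of $\alpha$ in closed form: using the divisibility identity~\eqref{eq:int-divisibility},
\[
s-c = \frac{\pp^{t-1}-1}{\pp-1} - \frac{\pp^{t-2}-1}{\pp-1} = \frac{\pp^{t-2}(\pp-1)}{\pp-1} = \pp^{t-2}\,.
\]
Recording also $s = \sum_{i=0}^{t-2}\pp^i$ and $\ab = \sum_{i=0}^{t-1}\pp^i$, plugging these into~\eqref{eq:pgr:alphabeta} and cancelling the common factor $e^\priv-1$ in the first summand gives the clean decomposition
\[
\alpha = \frac{(e^\priv-1)s + \ab}{(e^\priv-1)(s-c)} = \frac{s}{\pp^{t-2}} + \frac{1}{e^\priv-1}\cdot\frac{\ab}{\pp^{t-2}}\,.
\]

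Next I would bound the two pieces separately. Dividing each sum by $\pp^{t-2}$ turns it into a geometric series of ratio $1/\pp$: concretely $\frac{s}{\pp^{t-2}} = \sum_{j=0}^{t-2}\pp^{-j} \le \frac{\pp}{\pp-1}$ and $\frac{\ab}{\pp^{t-2}} = \pp + \sum_{j=0}^{t-2}\pp^{-j} \le \pp + \frac{\pp}{\pp-1}$. Since $\pp$ is a prime power with $\pp\ge e^\priv+1\ge 2$, we have $\frac{\pp}{\pp-1}\le 2$, so these become $\frac{s}{\pp^{t-2}}\le 2$ and $\frac{\ab}{\pp^{t-2}}\le \pp+2$. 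Combining them yields $\alpha \le 2 + \frac{\pp+2}{e^\priv-1}$, the first claimed inequality; the second follows directly from the choice of prime power in~\eqref{eq:nearby-prime}, since $\pp \le 2(e^\priv+1)$ gives $\pp + 2 \le 2e^\priv + 4$.

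I do not expect a genuine obstacle here: the lemma is essentially bookkeeping. The only two spots needing a moment of care are the closed form $s-c=\pp^{t-2}$ — which is exactly where identity~\eqref{eq:int-divisibility} does the work — and the uniform bound $\frac{\pp}{\pp-1}\le 2$, which is what makes the ``$2$'' in the statement appear and relies only on $\pp\ge 2$.
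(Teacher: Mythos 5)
Your proof is correct and follows essentially the same route as the paper's: split $\alpha$ into $\frac{s}{s-c}+\frac{\ab}{(e^\priv-1)(s-c)}$, use the identity to get $s-c=\pp^{t-2}$, and bound both $s/\pp^{t-2}$ and $\ab/\pp^{t-2}$ via truncated geometric sums using $\pp\ge 2$. The only cosmetic difference is that you pass through the intermediate bound $\frac{\pp}{\pp-1}$ before invoking $\pp\ge 2$, whereas the paper directly bounds the tail sum by $2$; these are the same estimate.
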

\begin{proof}
    First note that 
    \begin{equation*}
        \alpha = \frac{(e^\priv - 1)s + k}{(e^\priv-1)(s-c)}=\frac{s}{(s-c)}+\frac{k}{(e^\priv - 1)(s-c)}
    \end{equation*}
    and note that by simple application of~\cref{eq:int-divisibility} we get
    $
        s-c=\pp^{t-2}
    $. 
    Applying the identity again we get:
    \begin{equation}
        \frac{s}{\pp^{t-2}}=\sum_{i=0}^{t-2}\frac{1}{\pp^{i-t+2}}=\frac{1}{\pp^{t-2}}+\frac{1}{\pp^{t-1}}+\ldots +1\leq 2\label{eq:s-sum}
    \end{equation}
    where the final inequality comes from the fact that with $\pp\geq 2$ this is bounded by the geometric series.
    For $k$ we have the same series, with the addition of a single $\pp$ term.
    \[
        \frac{k}{\pp^{t-2}}=\sum_{i=0}^{t-1}\frac{1}{\pp^{i-t+2}}=\frac{1}{\pp^{t-2}}+\frac{1}{\pp^{t-1}}+\ldots +1+\pp
    \]
    The result is then at most $2+\pp\leq 2+2(e^\priv + 1)$, by applying~\cref{eq:nearby-prime}.
\end{proof}
We are now ready to apply~\cref{theo:local:gc:ck}. As we did for RAPPOR, we will break the expected maximum into the sum of error over two vectors of binomials $Z^+=\binomial{\ns}{p^+}$ and $Z^-=\binomial{\ns}{p^-}$ where, recalling~\eqref{eq:pgr:probabilities},
\begin{equation}
    \label{eq:pgr:probabilities:p+p-}
    p^+=\frac{e^\priv}{se^\priv + \ab -s}\,,\qquad p^-=\frac{s-c}{se^\priv + \ab - s}
\end{equation}
In this case we do not have that $p^+=1-p^-$ so we will need to bound the following
\begin{equation*}
    \bEE{\norm{\hat{q}-q}_\infty}\leq\alpha\left(\sqrt{\frac{\log(k+1)}{n}}\Paren{\sqrt{p^+}+\sqrt{p^-}}+\frac{\log(k+1)}{n}\log(n)\left(\frac{1}{\log(1/p^+)}+\frac{1}{\log(1/p^-)}\right)\right).
\end{equation*}
We briefly emphasize that while RAPPOR has clear independence between coordinates, the result of~\cref{theo:local:gc:ck} due to~\cite{CohenK23} does not require independence; therefore we can apply it to PGR and other subset-based protocols to determine their expected $\lp[\infty]$ error.
 
First we will upper bound $\sqrt{p^+}+\sqrt{p^-}$:
\begin{lemma}
For $p^+,p^-$ defined in~\eqref{eq:pgr:probabilities:p+p-}, we have
    $p^+ + p^-\leq 2/e^\priv$,
    and so $\sqrt{p^+}+\sqrt{p^-} \leq \sqrt{4/e^{\priv}}$.
\end{lemma}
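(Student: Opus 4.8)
The plan is to bound $p^+ + p^-$ by direct computation from the definitions in~\eqref{eq:pgr:probabilities:p+p-}, then apply concavity of the square root to pass to the bound on $\sqrt{p^+}+\sqrt{p^-}$. First I would write
\[
    p^+ + p^- = \frac{e^\priv + (s-c)}{se^\priv + \ab - s}\,,
\]
so it suffices to show $e^\priv + s - c \leq \frac{2}{e^\priv}(se^\priv + \ab - s)$, i.e. $e^{2\priv} + e^\priv(s-c) \leq 2se^\priv + 2\ab - 2s$. The natural way to control the right-hand side is to use the structural identities from the start of the section: $s - c = \pp^{t-2}$, $\ab - s = \pp^{t-1}$ (both immediate from~\eqref{eq:int-divisibility}), and $\pp \geq e^\priv + 1$ from~\eqref{eq:nearby-prime}. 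Substituting $\ab - s = \pp^{t-1} = \pp\cdot\pp^{t-2} = \pp(s-c)$, the inequality to prove becomes $e^{2\priv} + e^\priv(s-c) \le 2se^\priv - 2s + 2\pp(s-c)$; using $\pp \ge e^\priv+1$ the term $2\pp(s-c)$ dominates $2e^\priv(s-c) \ge e^\priv(s-c)$ with room to spare, and $2se^\priv \ge 2s$ handles the $-2s$, so the remaining task is to absorb $e^{2\priv}$. Here I would invoke~\cref{lem:s-larger} ($s \ge e^\priv + 2$), which gives $se^\priv \ge e^{2\priv}$, closing the gap; it may be cleanest to first reduce to showing $e^{2\priv} \le se^\priv + (2\pp - e^\priv)(s-c)$ and then bound each piece.

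For the second claim, I would apply concavity of $\sqrt{\cdot}$: since $\sqrt{a}+\sqrt{b} \le \sqrt{2(a+b)}$ for all $a,b\ge 0$ (equivalently, the QM–AM / Cauchy–Schwarz inequality), we get $\sqrt{p^+}+\sqrt{p^-} \le \sqrt{2(p^+ + p^-)} \le \sqrt{2 \cdot 2/e^\priv} = \sqrt{4/e^\priv}$, as desired.

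The main obstacle is purely bookkeeping: making sure the chain of substitutions $s-c = \pp^{t-2}$, $\ab - s = \pp^{t-1}$, $\pp \ge e^\priv+1$, $s \ge e^\priv+2$ is combined in the right order so that no term is dropped prematurely and the final inequality is not vacuous. In particular one must be careful that $t \ge 3$ (guaranteed by the remark preceding the proof) so that the identities for $s-c$ and $\ab-s$ are valid, and that the slack from $\pp \ge e^\priv+1$ is spent on the $e^\priv(s-c)$ term while the slack from $s \ge e^\priv+2$ is spent on the $e^{2\priv}$ term — allocating them the other way round would fail. No deep idea is needed beyond these elementary manipulations.
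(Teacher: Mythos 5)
Your argument is correct, but it takes a noticeably heavier route than the paper's. The paper bounds each probability \emph{separately}: $p^+ \le \frac{e^\priv}{se^\priv} = \frac{1}{s} \le \frac{1}{e^\priv+2} \le \frac{1}{e^\priv}$ by dropping $\ab - s \ge 0$ from the denominator and invoking~\cref{lem:s-larger}; and $p^- \le \frac{s-c}{se^\priv} \le \frac{1}{e^\priv}$ since $s \ge s-c$. Adding gives $p^+ + p^- \le 2/e^\priv$ in two lines, with no need for the identity $\ab - s = \pp(s-c)$ or for the bound $\pp \ge e^\priv + 1$. Your version instead forms $p^+ + p^-$ as a single fraction, clears denominators, substitutes, and then has to juggle $e^{2\priv}$ against $se^\priv$ and $2\pp(s-c)$ against $e^\priv(s-c)$ — considerably more bookkeeping for the same conclusion, with the same underlying inputs.

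Two smaller points. First, there is a sign slip: after substituting $2\ab - 2s = 2\pp(s-c)$, the target should read $e^{2\priv} + e^\priv(s-c) \le 2se^\priv + 2\pp(s-c)$; your written target has a spurious extra $-2s$ on the right (it looks like you substituted for $2\ab$ but forgot to cancel the resulting $+2s$). You then absorb that phantom $-2s$ via $se^\priv \ge 2s$, so the argument still closes, but you are proving a strictly harder inequality than necessary, and the step $se^\priv \ge 2s$ quietly requires $\priv \ge \ln 2$ — covered here by the theorem's $\priv \ge 1$ hypothesis, but an avoidable dependence. Second, your final passage from $p^+ + p^- \le 2/e^\priv$ to $\sqrt{p^+}+\sqrt{p^-} \le 2/\sqrt{e^\priv}$ via $\sqrt{a}+\sqrt{b}\le\sqrt{2(a+b)}$ is exactly what the paper does (the paper attributes it to AM--GM, since $(\sqrt a + \sqrt b)^2 = a+b+2\sqrt{ab} \le 2(a+b)$).
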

\begin{proof}
    First, $p^+$ is clearly less than $\frac{1}{s}$ when we remove $k-s$ from the denominator, and so is at most $1/(e^\priv + 2)\leq 1/e^\priv$ by~\cref{lem:s-larger}. For the other term, notice that by~\cref{eq:s-sum} we have $s/(s-c) \geq 1$, from which,
    \begin{align*}
        p^-\leq\frac{s-c}{se^\priv}\leq\frac{1}{e^\priv}
    \end{align*}
    Overall, we get that $p^+ + p^-\leq 2/e^\priv$. The conclusion follows from the AM-GM inequality, as
    $\sqrt{p^+}+\sqrt{p^-} \leq \sqrt{2(p^+ + p^-)}$. 
\end{proof}
\noindent Next we bound $\log^{-1}(1/p^+)+\log^{-1}(1/p^-)$:
\begin{lemma}
    We have $\frac{1}{\log(1/p^+)}+\frac{1}{\log(1/p^-)}\leq \frac{2}{\priv}$.
\end{lemma}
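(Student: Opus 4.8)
The plan is to bound each reciprocal logarithm separately, showing $\log(1/p^+) \geq \priv$ and $\log(1/p^-) \geq \priv$, from which the claimed bound $\frac{1}{\log(1/p^+)} + \frac{1}{\log(1/p^-)} \leq \frac{2}{\priv}$ follows immediately. Equivalently, I want to show $p^+ \leq e^{-\priv}$ and $p^- \leq e^{-\priv}$, so the key observation is that these two bounds have (essentially) already been established inside the proof of the preceding lemma.

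Concretely, recall from~\eqref{eq:pgr:probabilities:p+p-} that $p^+ = \frac{e^\priv}{se^\priv + \ab - s}$ and $p^- = \frac{s-c}{se^\priv + \ab - s}$. For $p^+$: dropping the nonnegative term $\ab - s \geq 0$ from the denominator gives $p^+ \leq \frac{e^\priv}{se^\priv} = \frac{1}{s}$, and then by~\cref{lem:s-larger} we have $s \geq e^\priv + 2 \geq e^\priv$, so $p^+ \leq e^{-\priv}$ and hence $\log(1/p^+) \geq \priv$. For $p^-$: again drop $\ab - s \geq 0$ from the denominator to get $p^- \leq \frac{s-c}{se^\priv}$, and since $c \geq 0$ we have $\frac{s-c}{s} \leq 1$, so $p^- \leq \frac{1}{e^\priv}$ and hence $\log(1/p^-) \geq \priv$. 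Summing the two reciprocals yields $\frac{1}{\log(1/p^+)} + \frac{1}{\log(1/p^-)} \leq \frac{1}{\priv} + \frac{1}{\priv} = \frac{2}{\priv}$, as desired.

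There is really no obstacle here — the proof is a two-line consequence of facts already in hand (namely $s \geq e^\priv + 2$ from~\cref{lem:s-larger}, and that $\ab - s$ and $c$ are nonnegative, which is immediate from their definitions as cardinalities of projective subspaces). The only thing to be slightly careful about is making sure we do not need the sharper $s \geq e^\priv + 2$ versus the cruder $s \geq e^\priv$: the cruder bound suffices for this particular lemma, though the sharper one costs nothing to invoke. One could alternatively phrase the whole argument as "$p^+ + p^- \leq 2/e^\priv$ was shown in the previous lemma, and in fact each summand is individually at most $1/e^\priv$ as the proof there reveals," then conclude by monotonicity of $t \mapsto 1/\log(1/t)$ on $(0,1)$; but the direct per-term argument above is cleanest and self-contained.
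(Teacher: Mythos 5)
Your proof is correct and follows essentially the same approach as the paper's: both lower-bound $\log(1/p^+)$ and $\log(1/p^-)$ separately by $\priv$, using $\ab \geq s$ together with~\cref{lem:s-larger} for the first term and $c \geq 0$ for the second. The only cosmetic difference is that you phrase the steps as dropping nonnegative terms from denominators, whereas the paper phrases them as lower-bounding the argument of the logarithm; these are the same manipulation.
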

\begin{proof}
We will proceed in both cases by lower-bounding the denominators. First,
    \begin{align*}
        \log(1/p^-) = \log\left(\frac{se^\priv + k - s}{s-c}\right)\geq\log\left(\frac{se^\priv}{s-c}\right)\geq\log(e^\priv)=\priv.
    \end{align*}
Next,
\begin{align}
    \log(1/p^+) = \log\left(\frac{se^\priv + k -s}{e^\priv}\right)&\geq\log\left(s\right)&\nonumber\\
    &\geq\log(e^\priv + 2) \tag{\cref{lem:s-larger}}\\
    &\geq \priv&\nonumber
\end{align}
As such adding the reciprocals of both terms give a bound of $2/\priv$.
\end{proof}
The only step left to bound the $\lp[\infty]$ error of PGR is to multiply through these bounds and that on $\alpha$ (\cref{lem:alpha-bound}) to get the final error bound. Doing so gives
\begin{align*}
    \bEE{\norm{\hat{q}-q}_\infty}&\leq\sqrt{\frac{16(2e^\priv + 1)^2\log(\ab+1)}{e^\priv(e^\priv - 1)^2\ns}}+\frac{4(2e^\priv + 1)\log(\ab+1)}{(e^\priv - 1)\priv\ns}\log\ns
\end{align*}
For the low-privacy regime this gives
\begin{equation*}
    \bEE{\norm{\hat{q}-q}_\infty}\in\bigO{\sqrt{\frac{\log\ab}{ne^\priv}}+\frac{\log\ab}{n\priv}\log\ns}\,
\end{equation*}
as claimed.
\end{proof}

\section{Worst-case, information-theoretic lower bounds}
\label{subsec:lower-bounds}
We will follow the ``chi-squared lower bound'' framework of~\cite{AcharyaCT:IT1} to obtain our information-theoretic lower bounds against non-interactive locally private protocols:
\begin{theorem}
    \label{thm:expected-max-lb}
    Fix any $\priv>0$. Any non-interactive (public- or private-coin) protocol $\Pi$ for distribution estimation from $\ns$ users must have minmax expected $\lp[\infty]$ error
    \[
        \bigOmega{ \max\Paren{\sqrt{\frac{\log\ab}{\ns(e^\priv-1)^2}},\sqrt{\frac{\log\ab}{\ns e^\priv}}, \frac{\log\ab}{\ns\priv}} }
    \]
\end{theorem}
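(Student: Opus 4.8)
The plan is to apply the chi-square contraction framework of~\cite{AcharyaCT:IT1} to a single, well-chosen family of hard instances and extract all three terms in the lower bound from it. The hard family is the one advertised in the introduction: a ``spiked uniform'' prior where almost all probability mass sits on one fixed reference element $0$, and a small perturbation of size $\delta$ is placed on one of the remaining $\ab$ coordinates. Concretely, for each $j\in[\ab]$ let $p^{(j)}$ assign mass (roughly) $1-\gamma$ to coordinate $0$ and spread the remaining $\gamma$ so that coordinate $j$ carries an extra $\pm\delta$ relative to a baseline; the mixture over a uniformly random sign and a uniformly random $j$ is the prior $\mu$. Any estimator with small $\lp[\infty]$ error must, in particular, recover the sign of the perturbation on the active coordinate, so a standard testing-to-estimation reduction lower bounds the $\lp[\infty]$ error by $\delta$ times the probability that no locally private protocol can solve this ``hidden coordinate with hidden sign'' problem.

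The first main step is to instantiate the chi-square bound: from~\cite{AcharyaCT:IT1}, for any $\priv$-LDP channel the contribution of a single user to the relevant information quantity is controlled by $(e^\priv-1)^2$ times a chi-square divergence between the induced output distributions under neighbouring instances, and this can be further bounded by $\min\{(e^\priv-1)^2,\, e^\priv,\, 1\}$-type factors depending on the regime (this is exactly the ``privacy contraction'' step, and the place where one must be careful not to throw away the $e^\priv$ versus $(e^\priv-1)^2$ distinction). Carrying the parameters through, the posterior on $(j,\text{sign})$ after seeing all $\ns$ transcripts stays close to uniform provided $\ns \cdot (\text{privacy factor}) \cdot \delta^2 / \gamma \lesssim \log\ab$ — the $\log\ab$ arising because we are effectively running $\ab$ parallel hypothesis tests and need the union/Fano-type slack. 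Setting $\gamma$ to its natural value (a constant, or $\delta$ itself, depending on which term we are chasing) and solving for the largest $\delta$ that keeps the protocol from succeeding yields $\delta \asymp \sqrt{\log\ab/(\ns(e^\priv-1)^2)}$ in the high-privacy regime and $\delta\asymp\sqrt{\log\ab/(\ns e^\priv)}$ in the low-privacy regime.

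The third term, $\log\ab/(\ns\priv)$, does not come from a variance/chi-square calculation but from a cruder counting argument on the same instance: when $\gamma\asymp\delta$ is itself tiny, the active coordinate is ``invisible'' unless at least one of the $\ns$ users holding a non-reference element happens to be the one on the perturbed coordinate, and a $\priv$-LDP channel can only amplify the probability of any output event by $e^\priv$; balancing the expected number of ``informative'' reports against the $\log\ab$ coordinates that must each be resolved gives $\delta\gtrsim \log\ab/(\ns\priv)$ before any protocol can hope to localize the spike. I would present this as a separate short lemma and then take the max.

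The step I expect to be the main obstacle is the bookkeeping in the chi-square step that keeps both the $(e^\priv-1)^2$ and the $e^\priv$ behaviour simultaneously without the bound collapsing: naively bounding $\chi^2$ by its sub-Gaussian surrogate loses the $e^\priv$ term, while naively bounding it by $e^\priv$ loses tightness for small $\priv$, so one has to track the divergence as $\min\{(e^\priv-1)^2,e^\priv\}\cdot\chi^2(\cdot\|\cdot)$ with the reference distribution chosen as the one induced by the uniform-over-$j$ mixture, and verify that the perturbation really does scale the divergence by $\delta^2$ (not $\delta$) even when $\gamma$ is small — this is the ``significant care in combining inequalities to avoid a vacuous bound'' the introduction alludes to. Everything else (the testing-to-estimation reduction, the Fano/union slack giving the $\log\ab$, the final $\max$) is routine given the setup.
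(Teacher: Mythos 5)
Your high-level plan — pick a spike-type hard family, reduce to identification of the spike's location, apply Fano, and upper-bound per-user information via chi-square contraction under the LDP constraint — is the same skeleton the paper uses, so the framework is right. But three concrete things would stop this from compiling into a proof.

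First, your hard instance is more complicated than needed, and it obscures the argument. The paper uses $\p_z = (1-4\dst)\uniform_\ab + 4\dst\cdot\delta_z$ for a uniformly random $z$: the spike sits \emph{on top of the uniform distribution}, not on ``one of the remaining coordinates of a near-degenerate distribution with a hidden sign.'' There is no sign parameter at all. The sign construction you propose is the standard one for $\lp[2]$ mean-estimation lower bounds, but here $\lp[\infty]$ accuracy $<2\dst$ already pins down $z$ by the Markov-then-argmax step, so the extra sign layer buys nothing and complicates both the Fano step and the chi-square computation (your reference vector $\bar{\p}$ is no longer uniform, which matters because the paper repeatedly uses $\bE{X\sim\uniform}{\Phi_i(X)}=0$ and $\bE{X\sim\uniform}{|\Phi_i(X)|}\leq 2/\ab$).

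Second, and more importantly, you identify the crux — simultaneously extracting the $(e^\priv-1)^2$ and $e^\priv$ behaviours from the same chi-square expression — but you do not actually resolve it. Writing ``$\min\{(e^\priv-1)^2, e^\priv\}\cdot\chi^2(\cdot\|\cdot)$'' is not a usable inequality, because the $\chi^2$ term is the object whose dependence on $\priv$ you are trying to extract. The paper's actual trick is this: after reducing to
\[
\mutualinfo{Z}{Y_j\mid R}\lesssim \dst^2\ab \sum_{y}\frac{\sum_i \bE{X}{Q(y\mid X)\Phi(X)_i}^2}{\bE{X}{Q(y\mid X)}},
\]
it bounds the \emph{squared} numerator in two asymmetric ways. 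For the first term, both copies of $\bE{X}{Q(y\mid X)\Phi(X)_i}$ are written as $\bE{X}{(Q(y\mid X)-\bE{X}{Q(y\mid X)})\Phi(X)_i}$ and each is bounded by $(e^\priv-1)\bE{X}{Q(y\mid X)}\cdot\bE{X}{|\Phi(X)_i|}$, giving $(e^\priv-1)^2$. For the second term, only \emph{one} of the two factors is hit with $Q(y\mid x)\leq e^\priv\bE{X}{Q(y\mid X)}$, while the other is left as $\bE{X}{Q(y\mid X)|\Phi_i(X)|}$ and later summed over $y$ to $1$; this yields $e^\priv$ with no squaring. Without this asymmetric plug-in, you cannot get the $e^\priv$ term — squaring a one-sided bound would again land you on $e^{2\priv}$ or $(e^\priv-1)^2$, which is precisely the vacuity the introduction warns about.

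Third, your argument for the $\log\ab/(\ns\priv)$ term is a heuristic, not a proof. The counting picture (``informative users are those holding the active coordinate, each leaks $\priv$ nats'') is morally aligned with the answer, but it is not a bound on $\mutualinfo{Z}{(Y^\ns,R)}$, and it is not clear how to make it one without re-deriving what the paper does anyway. The paper handles this term by abandoning the chi-square relaxation entirely and bounding the KL directly, using the convex decomposition $Q^{\p_Z}=(1-4\dst)Q^{\bar{\p}}+4\dst\, Q(\cdot\mid Z)$ to split $\kldiv{Q^{\p_Z}}{Q^{\bar{\p}}}$ into a nonpositive reverse-KL piece and a term $4\dst\sum_y Q(y\mid Z)\log\frac{Q^{\p_Z}(y)}{Q^{\bar{\p}}(y)}$, whose log-ratio is at most $\priv$ by LDP. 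That gives $\mutualinfo{Z}{Y_j\mid R}\leq 4\dst\priv$ in two lines. If you want to keep a counting flavour you would still need to pass through such an information bound; as written your step would not close.
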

\begin{proof}
Suppose there exists a (non-interactive, public- or private-coin) $\priv$-LDP protocol $\Pi$ for $\ns$ users which learns any $\p$ to expected $\lp[\infty]$ error $\dst$ when each user gets an independent sample from $\p$ as input:
\begin{equation}
    \bEE{\norminf{ \p-\hat{\p} }} \leq \dst
\end{equation}
where $\hat{\p}$ is the output of $\Pi$ when run on $X_1,\dots, X_\ns \sim \p$.

Now, consider the family $\mathcal{P}_{\dst} = \{\p_z\}_{z\in[\ab]}$ of probability distributions over $[\ab]$, where, for $z\in[\ab]$, $\p_z$ is defined by
\begin{equation}
    \p_z(x) = \frac{1-4\dst}{\ab} + 4\dst\indic{z=x}, \qquad x\in[\ab]
\end{equation}
(that is, $\p_z$ is a mixture of the uniform distribution $\uniform_\ab$ and a point mass on $z$, with mixture coefficients $1-4\dst$ and $4\dst$). Consider the following process: first, we select $Z$ from $[\ab]$ uniformly at random, then generate $\ns$ \iid samples $X_1,\dots, X_\ns$ from $\p_Z$ and run $\Pi$ on these samples, obtaining a hypothesis $\hat{\p}$. We finally set $\hat{Z}$ to be the element of the domain with the largest probability under $\hat{\p}$, \ie
\[
    \hat{Z} = \arg\!\max_{i \in [\ab]} \hat{\p}(i)
\]
(breaking ties arbitrarily). The first claim is that doing so allows one to guess the correct value of $Z$ with high probability: indeed, since the gap between the highest and second highest element of $\p_Z$ is $4\dst$, we have
\begin{equation}
    \label{eq:lb:step1:accuracy}
\probaOf{\hat{Z} = Z} \geq \probaOf{\norminf{ \p-\hat{\p} } <  2\dst} \geq 1 - \frac{\bEE{\norminf{ \p_Z-\hat{\p} }}}{2\dst} \geq \frac{1}{2}
\end{equation}
the second-to-last inequality being Markov's. However, by Fano's inequality applied to the Markov chain $Z - \p_Z - X^\ns - (Y^\ns,R) - \hat{\p} - \hat{Z}$ (where $X^\ns$ is the tuple of \iid samples, and $(Y^\ns,R)$ denotes the tuple of $\ns$ messages, along with the public randomness, resulting from the protocol $\Pi$), we get, recalling that $Z$ is chosen uniformly in $[\ab]$, that
\begin{equation}
    \label{eq:lb:step1:fano}
    \probaOf{\hat{Z} = Z} \leq \frac{\mutualinfo{Z}{(Y^\ns,R)} + \log 2}{\log \ab}
\end{equation}
and so, putting~\cref{eq:lb:step1:accuracy,eq:lb:step1:fano} together, we get
\begin{equation}
    \label{eq:lb:step1}
    \mutualinfo{Z}{(Y^\ns,R)} \geq \frac{1}{2}\log\frac{\ab}{4} = \Omega(\log\ab)\,.
\end{equation}
This gives us the first ingredient: a lower bound on $\mutualinfo{Z}{(Y^\ns,R)}$. For the second, we need to obtain an upper bound on this same mutual information, as a function of $\priv,\ns,\ab,$ and $\dst$. To do so, observe first that, by the chain rule
\[
    \mutualinfo{Z}{(Y^\ns,R)} = \mutualinfo{Z}{Y^\ns \mid R} + \mutualinfo{Z}{R} = \mutualinfo{Z}{Y^\ns \mid R}
\]
the second equality since the public randomness $R$ is independent from $Z$. This is convenient, as the messages $Y^\ns = (Y_1,\dots,Y_\ns)$ are independent conditioned on $R$, and so we get
\begin{equation}
    \label{eq:step2:oneuser}
 \mutualinfo{Z}{(Y^\ns,R)} \leq \sum_{j=1}^\ns \mutualinfo{Z}{Y_j \mid R} \leq \ns\cdot \max_{1\leq j\leq \ns} \mutualinfo{Z}{Y_j \mid R}
\end{equation}
Consider any user $j$, using locally private randomizer $Q = Q_{j,R}\colon [\ab]\to\cY$ (for notational simplicity, we drop afterwards the dependence on $j$ and $R$). Let $\bar{\p} = \bE{Z}{\p_Z}$ denote the average input distribution (over $Z$), \ie the uniform mixture of all $\p_z$'s. Then we can rewrite the mutual information as
\begin{equation}
    \mutualinfo{Z}{Y_j \mid R}
    = \bE{Z}{\kldiv{Q^{\p_Z}}{Q^{\bar{\p}}}}
\end{equation}
where, for a given input distribution $\p$, $Q^{\p}(\cdot ) = \bE{X\sim\p}{Q(\cdot \mid X)}$ denotes the output distribution (over $\cY$) induced by the randomizer $Q$ on input $X\sim\p$. 

\paragraph{First lower bound (good for small $\priv$).} We then proceed by upperbounding the KL divergence by the $\chi^2$ one and unrolling the latter's definition, getting
\begin{align*}
    \mutualinfo{Z}{Y_j \mid R}
    \leq \bE{Z}{\chisquare{Q^{\p_Z}}{Q^{\bar{\p}}}} 
    &= \bE{Z}{\sum_{y\in\cY} \frac{\Paren{\bE{X\sim\p_Z}{Q(y \mid X)}-\bE{X\sim\bar{\p}}{Q(y \mid X)}}^2}{\bE{X\sim\bar{\p}}{Q(y \mid X)}} } \\
    &= \bE{Z}{\sum_{y\in\cY} \frac{\Paren{\sum_{x\in[\ab]} Q(y\mid x) (\p_Z(x) - \bar{\p}(x) )}^2}{\sum_{x\in[\ab]} Q(y\mid x) \bar{\p}(x) } }
\end{align*}
Note, observing that, for our choice of $\mathcal{P}_\alpha$, $\bar{\p}$ is simply the uniform distribution over $[\ab]$, and that for all $x\in[\ab]$ we then have $\p_Z(x) - \bar{\p}(x) = 4\dst\Paren{\indic{Z=x} - \frac{1}{\ab}}$. Then, letting $\Phi(x) = e_x - \frac{1}{\ab}\mathbf{1}_\ab \in \R^\ab$ for $x\in[\ab]$, we get
\begin{align}
    \mutualinfo{Z}{Y_j \mid R}
    &\leq 16\dst^2\ab^2\cdot \bE{Z}{\sum_{y\in\cY} \frac{\Paren{\bE{X\sim\uniform}{Q(y \mid X) \Paren{\indic{Z=X} - \frac{1}{\ab}}}}^2}{\bE{X\sim\uniform}{Q(y \mid X)}} } \notag\\
    &= 16\dst^2\ab^2\cdot \sum_{y\in\cY} \frac{\bE{Z}{\Paren{\bE{X\sim\uniform}{Q(y \mid X) \Paren{\indic{Z=X} - \frac{1}{\ab}}}}^2}}{\bE{X\sim\uniform}{Q(y \mid X)}} \notag\\
    &= 16\dst^2\ab\cdot \sum_{y\in\cY} \frac{\sum_{i=1}^\ab\Paren{\bE{X\sim\uniform}{Q(y \mid X) \Paren{\indic{i=X} - \frac{1}{\ab}}}}^2}{\bE{X\sim\uniform}{Q(y \mid X)}} \notag\\
    &= 16\dst^2\ab\cdot \sum_{y\in\cY} \frac{\sum_{i=1}^\ab\bE{X\sim\uniform}{Q(y \mid X) \Phi(X)_i}^2}{\bE{X\sim\uniform}{Q(y \mid X)}} \label{eq:intermediate:lb:step}
\end{align}
%
\iffalse 
One can show that, when $X\sim\uniform$, the vector $\Phi(X)$ is $\sigma^2$-subgaussian for $\sigma^2 \asymp \frac{1}{\log\ab}$ (see~\cite[Exercise~3.4.4]{Vershynin18}), from which we get, using~\cite[Lemma~5]{AcharyaCST23}, that
\begin{align*}
    \mutualinfo{Z}{Y_j \mid R}
    &\leq 32\dst^2\sigma^2\ab\cdot \sum_{y\in\cY} \bE{X\sim\uniform}{Q(y \mid X) \log \frac{Q(y \mid X)}{\bE{X\sim\uniform}{Q(y \mid X)}}}\\
    &= 32\dst^2\sigma^2\ab\mutualinfo{\uniform}{Q}\,,
\end{align*}
where $\mutualinfo{\p}{Q}$ denotes the mutual information $\mutualinfo{X}{Y}$ when $X\sim \p$ and $Y$ is the output of the randomizer $Q$ on input $X$. It is known (see, \eg~\cite[Corollary~1]{AcharyaCST23}) that for an $\priv$-LDP randomizer, $\mutualinfo{\p}{Q} \leq \priv$ (regardless of $\p$), and so we finally get 
\begin{align*}
    \mutualinfo{Z}{Y_j \mid R}
    &= 32\dst^2\sigma^2\ab\priv\,,
\end{align*}
which gives, recalling that $\sigma^2 \asymp \frac{1}{\log \ab}$ and using~\cref{eq:lb:step1,eq:step2:oneuser}, that
\begin{equation}
    \log\ab\lesssim \mutualinfo{Z}{(Y^\ns,R) }
    \lesssim \dst^2\ns\frac{\ab}{\log\ab}\priv\,,
\end{equation}
which is\dots \textbf{not that great}. That implies
\begin{equation}
    \dst \gtrsim \sqrt{\frac{\log^2\ab}{\ns\priv\ab}}
\end{equation}
a lower bound off by a factor $\sqrt{\frac{\log\ab}{\ab}}$.
\paragraph{Bound for small $\priv$}
\fi
%
Note that, for any fixed $1\leq i< j \leq \ab$, we have
\[
\bE{X\sim\uniform}{\Phi_i(X)} = 0,\quad 
\bE{X\sim\uniform}{\Phi_i(X)^2} = \frac{1}{\ab}\Paren{1-\frac{1}{\ab}},\quad
\bE{X\sim\uniform}{\abs{\Phi_i(X)}} = \frac{2}{\ab}\Paren{1-\frac{1}{\ab}},\quad
\]
For simplicity, we write $\bE{X}{\cdot}$ for $\bE{X\sim\uniform}{\cdot}$. We will use the fact that, $Q$ being $\priv$-LDP, we have
\begin{equation}
    \label{eq:guarantee:ldp:smalleps}
    \abs{Q(y \mid x) - \bE{X}{Q(y \mid X)}} \leq (e^\priv-1)\bE{X}{Q(y \mid X)}
\end{equation}
for every $x\in[\ab]$ and $y\in\cY$. With this in hand, starting from~\eqref{eq:intermediate:lb:step}, we can write
\begin{align*}
    \mutualinfo{Z}{Y_j \mid R}
    &\leq 16\dst^2\ab \sum_{y\in\cY} \frac{\sum_{i=1}^\ab\bE{X}{Q(y \mid X) \Phi(X)_i}^2}{\bE{X}{Q(y \mid X)}} \\
    &= 16\dst^2\ab \sum_{y\in\cY} \frac{\sum_{i=1}^\ab\bE{X}{\Paren{Q(y \mid X) - \bE{X}{Q(y \mid X)}} \Phi(X)_i}^2}{\bE{X}{Q(y \mid X)}} \tag{as $\bE{X}{\Phi_i(X)} = 0$} \\
    &\leq 16\dst^2\ab \sum_{y\in\cY} \frac{\sum_{i=1}^\ab\bE{X}{\abs{Q(y \mid X) - \bE{X}{Q(y \mid X)}} \cdot \abs{\Phi(X)_i}}^2}{\bE{X}{Q(y \mid X)}}\\
    &\leq 16\dst^2\ab \sum_{y\in\cY} \frac{\sum_{i=1}^\ab (e^\priv-1)^2\bE{X}{Q(y \mid X)}^2 \bE{X}{\abs{\Phi(X)_i}}^2}{\bE{X}{Q(y \mid X)}} \tag{By~\eqref{eq:guarantee:ldp:smalleps}}\\
    &= 16\dst^2\ab  (e^\priv-1)^2 \sum_{y\in\cY} \sum_{i=1}^\ab \bE{X}{Q(y \mid X)} \bE{X}{\abs{\Phi(X)_i}}^2 \\
    &\leq 64\dst^2  (e^\priv-1)^2 \sum_{y\in\cY} \bE{X}{Q(y \mid X)} \tag{as $\bE{X\sim\uniform}{\abs{\Phi_i(X)}} \leq \frac{2}{\ab}$}\\
    &= 64\dst^2  (e^\priv-1)^2\,.
\end{align*}
Using this last bound along with~\cref{eq:lb:step1,eq:step2:oneuser}, we get that
\begin{equation}
    \frac{1}{2}\log\frac{\ab}{4} \leq \mutualinfo{Z}{(Y^\ns,R) }
    \leq 64\dst^2 \ns  (e^\priv-1)^2\,,
\end{equation}
\ie
\begin{equation}
    \dst \geq \frac{1}{8\sqrt{2}}\sqrt{\frac{\log\frac{\ab}{4}}{\ns  (e^\priv-1)^2}},
\end{equation}
showing the $\bigOmega{\sqrt{\frac{\log\ab}{\ns\priv^2}}}$ lower bound for small $\priv$.
\paragraph{Second lower bound.} 
To get the second term of the lower bound, we resume from~\eqref{eq:intermediate:lb:step}, but this time use the fact that $Q(y \mid X) \leq e^\priv \expect{Q(y \mid X)}$:
\begin{align*}
    \mutualinfo{Z}{Y_j \mid R}
    &\leq 16\dst^2\ab \sum_{y\in\cY} \frac{\sum_{i=1}^\ab\bE{X}{\abs{Q(y \mid X) \Phi(X)_i}}\cdot \bE{X}{\abs{Q(y \mid X) \Phi(X)_i}}}{\bE{X}{Q(y \mid X)}} \tag{By~\eqref{eq:intermediate:lb:step}}\\
    &\leq 16\dst^2\ab \sum_{y\in\cY} \frac{\sum_{i=1}^\ab e^{\priv} \expect{Q(y \mid X)} \bE{X}{|\Phi(X)_i|} \bE{X}{Q(y \mid X) \abs{\Phi(X)_i}} }{\bE{X}{Q(y \mid X)}} \tag{cf. above}\\
    &= 16\dst^2 e^\priv \ab \sum_{y\in\cY} \sum_{i=1}^\ab \bE{X}{|\Phi(X)_i|}\cdot  \bE{X}{Q(y \mid X) \abs{\Phi(X)_i}} \\
    &= 16\dst^2 e^\priv \ab \sum_{i=1}^\ab \bE{X}{|\Phi(X)_i|}\cdot  \bE{X}{\abs{\Phi(X)_i}}  \tag{as $\sum_{y\in\cY} Q(y \mid X) = 1$}\\
    &\leq 64\dst^2 e^\priv \,. \tag{as $\bE{X\sim\uniform}{\abs{\Phi_i(X)}} \leq \frac{2}{\ab}$}
\end{align*}
Combining this again with~\cref{eq:lb:step1,eq:step2:oneuser}, we get 
\begin{equation}
    \dst \geq \frac{1}{8\sqrt{2}}\sqrt{\frac{\log\frac{\ab}{4}}{\ns  e^\priv }},
\end{equation}
showing the second part of the lower bound.

\paragraph{Third lower bound (good for large $\priv$).}
Bounding the $\chi^2$ divergence instead of the KL divergence turns out to be too lossy in that case. We record a simple bound one can obtain by handling directly the latter -- using the same lower bound instance as above. Since, for every $y\in\cY$ and $z\in[\ab]$, $\bE{\p_Z}{Q(y \mid X)}
= (1-4\dst) \bE{\bar{\p}}{Q(y \mid X)} + 4\dst Q(y \mid z)$, we can write
\begin{align*}
    \mutualinfo{Z}{Y_j \mid R}
    &= \bE{Z}{\kldiv{Q^{\p_Z}}{Q^{\bar{\p}}}} \\
    &= \bE{Z}{\sum_{y\in\cY} \bE{\p_Z}{Q(y \mid X)} \log \frac{\bE{\p_Z}{Q(y \mid X)}}{\bE{\bar{\p}}{Q(y \mid X)}} } \\
    &= (1-4\dst)\bE{Z}{\sum_{y\in\cY} \bE{\bar{\p}}{Q(y \mid X)} \log \frac{\bE{\p_Z}{Q(y \mid X)}}{\bE{\bar{\p}}{Q(y \mid X)}} } \\
    &\qquad+ 4\dst \bE{Z}{\sum_{y\in\cY} Q(y \mid Z) \log \frac{\bE{\p_Z}{Q(y \mid X)}}{\bE{\bar{\p}}{Q(y \mid X)}} } \\ %
    &= \underbracket{-(1-4\dst) \bE{Z}{\kldiv{Q^{\bar{\p}}}{Q^{\p_Z}}}}_{\leq 0} + 4\dst \shortexpect_Z{\sum_{y\in\cY} Q(y \mid Z) \log \underbracket{\frac{\bE{\p_Z}{Q(y \mid X)}}{\bE{\bar{\p}}{Q(y \mid X)}}}_{\leq e^\priv} } \\
    &\leq 4\dst\priv \shortexpect_Z{\sum_{y\in\cY} Q(y \mid Z) } = 4\dst\priv\,,
\end{align*}
the last equality as $\sum_{y\in\cY} Q(y \mid z) = 1 $ for every $z$. Again using~\cref{eq:lb:step1,eq:step2:oneuser}, we get 
\begin{equation}
    \dst \geq \frac{\log\frac{\ab}{4}}{8 \ns \priv },
\end{equation}
that is, a lower bound of $\bigOmega{\frac{\log\ab}{\ns\priv}}$, better for large $\priv$.
\end{proof}

\section{Amplification by shuffling}
\label{sec:shuffle}
As mentioned in the introduction, one of the key motivations for studying locally private histogram estimation in the low privacy regime is the implication for histogram estimation in the \emph{shuffle} model of privacy (see, \eg~\cite{Ghazi0MP20,GhaziG0PV21,BalcerC20,CheuZ22}), in light of the ``plug-and-play'' amplification-by-shuffling results allowing to ``translate'' the former into the latter. Specifically, we will use the following result of Feldman, McMillan, and Talwar:
\begin{theorem}[{\cite[Theorem~3.1]{FeldmanMT21}}]\label{theo:amplification}
 For any domain $\cX$, let $R\colon\cX\to\cY$ be an $\priv_{L}$-DP local randomiser; and let $S$ be the algorithm that given a tuple of $\ns$ messages $\vec{y}\in\cY^\ns$, samples a uniform random permutation $\pi$ over $[\ns]$ and outputs $(\vec{y}_{\pi(1)},\dots,\vec{y}_{\pi(\ns)})$.
 Then for any $\privdelta\in(0,1]$ such that $\priv_{L}\le\log\frac{\ns}{16\log(2/\privdelta)}$, $S\circ R^\ns$ is $(\priv, \privdelta)$-DP,  where
\[
\priv\le \log\Paren{1+8\frac{e^{\priv_{L}}-1}{e^{\priv_{L}}+1}\left(\sqrt{\frac{e^{\priv_{L}}\log(4/\privdelta)}{\ns}}+\frac{e^{\priv_{L}}}{\ns}\right) }\,.
\]
In particular, if $\priv_{L}\geq 1$ then $\priv = O\Paren{\sqrt{e^{\priv_{L}}\log(1/\privdelta)/\ns}}$, and if $\priv_{L} < 1$ then $\priv = O\Paren{\priv_{L}\sqrt{\log(1/\privdelta)/\ns}}$.
\end{theorem}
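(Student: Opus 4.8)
The plan is to follow the ``hiding among the clones'' strategy: reduce the analysis of $S\circ R^\ns$ on a worst-case pair of neighbouring datasets to a one-dimensional problem about a single Bernoulli hidden inside binomial noise. First I would fix neighbouring datasets $D,D'$ differing only in user $1$'s record, say $x^0$ versus $x^1$, and argue that for the purpose of establishing $(\priv,\privdelta)$-DP it suffices to treat the case where every other user holds either $x^0$ or $x^1$. The point is that for $i\ge 2$, using $\priv_L$-DP one can write $R(x_i)=q\,R(x^0)+q\,R(x^1)+(1-2q)\nu_i$ for some probability distribution $\nu_i$ and $q=\bigOmega{e^{-\priv_L}}$ (take $q=\frac{1}{2e^{\priv_L}}$, since $R(x^0)(y),R(x^1)(y)\le e^{\priv_L}R(x_i)(y)$ for all $y$), so user $i$ behaves like a ``clone of $x^0$'', a ``clone of $x^1$'', or else contributes a message whose law does not depend on $D$ versus $D'$ and is therefore post-processing; one then checks this mixture only makes the two shuffled output distributions harder to distinguish. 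After this reduction the dataset is, under $D$, some $n_0+1$ users holding $x^0$ and $n_1$ holding $x^1$ with $n_0+n_1=\ns-1$, and under $D'$ it is $n_0$ and $n_1+1$.

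Next I would replace $R$ by the extremal $\priv_L$-DP randomiser on two inputs, namely binary randomised response $\mathrm{RR}_{\priv_L}$ (report the true bit with probability $\frac{e^{\priv_L}}{e^{\priv_L}+1}$), which one argues is worst-case for this two-input situation via a short coupling/data-processing argument; so assume $x^0=0$, $x^1=1$, $R=\mathrm{RR}_{\priv_L}$. Then the entire shuffled transcript is a post-processing of the single statistic ``number of reported $1$s''. Writing $p=\frac{1}{e^{\priv_L}+1}$, this count equals $Y+B$, where $Y\sim\binomial{n_0}{p}+\binomial{n_1}{1-p}$ is the contribution of users $2,\dots,\ns$ (the same random variable under $D$ and $D'$) and $B$ is user $1$'s bit, which is $\bernoulli{p}$ under $D$ and $\bernoulli{1-p}$ under $D'$. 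Thus $Y$ plays the role of binomial noise common to both worlds, and everything reduces to: how private is it to release $Y+B$ when all that changes between the worlds is the bias of the $\{0,1\}$-valued $B$?

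For this one-dimensional problem I would use that $Y$ has variance $(\ns-1)p(1-p)=\bigTheta{\ns p}$ (since $p$ is small in the relevant regime, so $p(1-p)\asymp p$) and concentrates: a Chernoff bound puts $Y$ within $\bigO{\sqrt{\ns p\log(1/\privdelta)}}$ of its mean except with probability $\privdelta$, and the hypothesis $\priv_L\le\log\frac{\ns}{16\log(2/\privdelta)}$ is exactly what makes $\ns p\gtrsim\log(1/\privdelta)$, so this window is non-degenerate. Bounding the privacy-loss random variable of ``release $Y+B$'' pointwise on this window --- at outcome $t$ it equals $\log\frac{p+(1-p)r}{(1-p)+pr}$ with $r=\probaOf{Y=t-1}/\probaOf{Y=t}$, which near the bulk is $(1-2p)(r-1)+\bigO{(r-1)^2}$, and $\abs{r-1}=\bigO{\sqrt{\log(1/\privdelta)/\mathrm{Var}(Y)}}$ on the window --- gives $\priv=\bigO{(1-2p)\bigl(\sqrt{\log(1/\privdelta)/(\ns p)}+1/(\ns p)\bigr)}$. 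Substituting $p=\frac{1}{e^{\priv_L}+1}$, so that $1-2p=\frac{e^{\priv_L}-1}{e^{\priv_L}+1}$ and $\frac{1}{\ns p}=\frac{e^{\priv_L}+1}{\ns}$, yields exactly the stated bound (the constants $8$ and $\log(4/\privdelta)$ falling out of this last computation), and the two asymptotic consequences follow by expanding $\log(1+u)$ and separating $\priv_L\ge1$ from $\priv_L<1$.

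I expect the main obstacle to be the first step: making the clones/mixture reduction rigorous --- in particular, arguing cleanly that the ``leftover'' mass $\nu_i$ is genuinely post-processing once shuffled back in, and that the replacement is monotone in the relevant hockey-stick divergence --- together with the constant-tracking in the third step, where it is easy to reach the right $\sqrt{e^{\priv_L}\log(1/\privdelta)/\ns}$ order but fiddly to land on the precise constants. The ``randomised response is extremal'' claim of the second step, though intuitively clear, is a second place where a genuine (if short) argument is needed rather than a hand-wave.
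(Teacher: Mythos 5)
The paper cites this result from \cite[Theorem~3.1]{FeldmanMT21} without reproving it, so the relevant comparison is with that paper's ``hiding among the clones'' argument, which your sketch correctly identifies as the strategy and whose three stages (clone decomposition, reduction to a one-dimensional problem, local analysis of the privacy-loss random variable) you have in the right order.

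The structural gap is in step 1. You correctly state the decomposition $R(x_i) = q\,R(x^0) + q\,R(x^1) + (1-2q)\nu_i$ with $q = \tfrac{1}{2}e^{-\priv_L}$, but then assert that ``after this reduction the dataset is, under $D$, some $n_0+1$ users holding $x^0$ and $n_1$ holding $x^1$ with $n_0+n_1 = n-1$''. This is not what the decomposition yields. With that value of $q$, each user $i \ge 2$ is a $0$- or $1$-clone only with total probability $2q = e^{-\priv_L}$; the number of clones is a \emph{random} $C \sim \mathrm{Bin}(n-1, e^{-\priv_L})$, typically far smaller than $n-1$, and the remaining $n-1-C$ leftover messages (drawn from the $\nu_i$, identical under $D$ and $D'$) provide no cover. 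Your ``$n_0 + n_1 = n-1$'' corresponds instead to a different, and in general invalid, reduction in which every other user acts as a full clone. It happens that for $p = \tfrac{1}{e^{\priv_L}+1}$ the variance of your $Y \sim \mathrm{Bin}(n_0,p)+\mathrm{Bin}(n_1,1-p)$, namely $(n-1)p(1-p) \asymp n e^{-\priv_L}$, matches the scale of the correct model's $\mathrm{Bin}(C, \tfrac12)$ with $C \sim \mathrm{Bin}(n-1, e^{-\priv_L})$, so your sketch lands on the right asymptotic rate; but it is not a valid derivation, and the theorem's precise constants (the $8$ and the $\log(4/\privdelta)$) will not fall out of it. The fix is exactly the part you glossed over: carry $C$ through as a random variable, use the hypothesis $\priv_L \le \log\frac{n}{16\log(2/\privdelta)}$ to guarantee $\mathbb{E}[C] \ge 16\log(2/\privdelta)$ so that $C$ concentrates, and then run your step-3 analysis on $A+B$ with $A \sim \mathrm{Bin}(C,\tfrac12)$, conditioning on the high-probability event that $C$ is near its mean. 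Your step-3 local expansion of the privacy loss is otherwise plausible, with the acknowledged caveat that constant-tracking requires real care.
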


This implies the following:
\begin{lemma}[Amplification by shuffling]
\label{lemma:amp:shuffling}
Fix any $\privdelta\in(0,1]$, $\priv\in(0,1]$, and $\ns$ such that
$
\priv > 16\sqrt{\log(4/\privdelta)/\ns}
$. 
Then, for \[
\priv_L \eqdef \log\frac{\priv^2\ns}{256\log(4/\privdelta)} = \bigTheta{\log\frac{\priv^2\ns}{\log(1/\privdelta)}},
\]
shuffling the messages of $\ns$ users using the same $\priv_L$-LDP randomizer satisfies (robust) $(\priv,\privdelta)$-shuffle differential privacy.
\end{lemma}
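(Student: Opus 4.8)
The plan is to invoke the amplification-by-shuffling bound of \cref{theo:amplification} with local parameter $\priv_L \eqdef \log\frac{\priv^2\ns}{256\log(4/\privdelta)}$ and show that the resulting central parameter is at most $\priv$ (keeping the same $\privdelta$). Write $A \eqdef e^{\priv_L} = \frac{\priv^2\ns}{256\log(4/\privdelta)}$; the hypothesis $\priv > 16\sqrt{\log(4/\privdelta)/\ns}$ is precisely $A > 1$, so $\priv_L > 0$ and a $\priv_L$-LDP randomiser is well defined.

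First I would check the precondition of \cref{theo:amplification}, namely $\priv_L \le \log\frac{\ns}{16\log(2/\privdelta)}$: since $\priv \le 1$ and $\log(4/\privdelta) \ge \log(2/\privdelta)$, we have $A \le \frac{\ns}{256\log(4/\privdelta)} \le \frac{\ns}{16\log(2/\privdelta)}$, so the precondition holds.

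The crux of the calculation is that $\priv_L$ has been chosen exactly so that the dominant term in the FMT bound collapses: $\sqrt{\frac{A\log(4/\privdelta)}{\ns}} = \sqrt{\frac{\priv^2}{256}} = \frac{\priv}{16}$. For the lower-order term, $\frac{A}{\ns} = \frac{\priv^2}{256\log(4/\privdelta)} \le \frac{\priv}{256}$ (using $\priv \le 1$ and $\log(4/\privdelta) > 1$), and $\frac{e^{\priv_L}-1}{e^{\priv_L}+1} \le 1$. Plugging these into \cref{theo:amplification} and using $\log(1+x) \le x$, the shuffled protocol $S\circ R^\ns$ is $(\priv',\privdelta)$-DP with $\priv' \le 8\Paren{\frac{\priv}{16} + \frac{\priv}{256}} = \frac{17}{32}\priv < \priv$, which is the claimed $(\priv,\privdelta)$-shuffle privacy.

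Finally, for the ``robust'' qualifier I would observe that the local randomiser above does not depend on $\ns$, so the same computation applies verbatim to any sub-population of $\gamma\ns$ honest users (replacing $\ns$ by $\gamma\ns$ throughout), turning $\priv/16$ into $\priv/(16\sqrt\gamma)$; for any constant fraction $\gamma$ this changes $\priv$ only by a constant factor, which is immaterial for the $\bigO{\cdot}$ accuracy guarantees derived downstream. There is no real obstacle here beyond routine substitution: the only points needing (minor) care are the precondition check and ensuring that the additive $e^{\priv_L}/\ns$ term together with the $\log(1+\cdot)$ slack do not push the bound past $\priv$ — which is precisely what the (slightly generous) constant $256$ in the definition of $\priv_L$ buys.
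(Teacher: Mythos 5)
Your proof is correct and follows essentially the same route as the paper's: plug the chosen $\priv_L$ into \cref{theo:amplification}, observe that the dominant term collapses to $\priv/16$ by construction, bound the lower-order term using $\priv\le 1$ and $\log(4/\privdelta)>1$, and check the precondition $\priv_L\le\log\frac{\ns}{16\log(2/\privdelta)}$. The only substantive addition is your (informal) discussion of the ``robust'' qualifier, which the paper leaves implicit; note, though, that your own calculation shows robustness with the exact target $\priv$ is not quite free — for a $\gamma$-fraction of honest users the first term becomes $\priv/(16\sqrt\gamma)$, which for small constant $\gamma$ can push the bound past $\priv$ — so strictly speaking one gets $(O(\priv),\privdelta)$-DP for constant $\gamma$, as you acknowledge.
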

\begin{proof}
Note that for $\priv,\privdelta$ as in the statement and $\priv_L$ as defined, we have 
$0 < \priv_{L}\le\log\frac{\ns}{16\log(2/\privdelta)}$. Applying~\cref{theo:amplification}, we get $(\priv',\privdelta)$ privacy for
\[
\priv' = \log\Big(1+8\underbrace{\frac{e^{\priv_{L}}-1}{e^{\priv_{L}}+1}}_{\leq 1}\Big(\frac{\priv}{16}+\underbrace{\frac{\priv^2}{256\log(4/\privdelta)}}_{\leq \priv/16}\Big) \Big)\leq \priv
\]
which proves the statement.
\end{proof}
\begin{theorem}
    \label{theo:pgr:shuffle}
    For $\ns = \bigOmega{\frac{\log(1/\privdelta)}{\priv^2}}$ and $\priv \in(0,1]$, Shuffled Projective Geometry Response achieves maximum error $\bigO{\frac{\sqrt{\log(\ab)\log(1/\privdelta)}}{n\priv}}$, with $O(\log\ab)$ bits of communication (and one single message) per user.
\end{theorem}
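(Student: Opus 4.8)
The plan is to combine the amplification-by-shuffling guarantee of \cref{lemma:amp:shuffling} with the low-privacy $\lp[\infty]$ bound for PGR of \cref{theo:optimal:pgr:ub}. Given $\priv\in(0,1]$, $\privdelta\in(0,1]$, and $\ns$ at least a sufficiently large absolute constant times $\log(1/\privdelta)/\priv^2$, all hypotheses of \cref{lemma:amp:shuffling} hold: in particular $\priv>16\sqrt{\log(4/\privdelta)/\ns}$, and the constant can be taken large enough that $\priv_L\eqdef\log\frac{\priv^2\ns}{256\log(4/\privdelta)}\geq 1$, which is precisely what is needed to apply \cref{theo:optimal:pgr:ub}. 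Thus, having each user run PGR with local privacy parameter $\priv_L$ and then applying the shuffler yields a protocol that is robustly $(\priv,\privdelta)$-shuffle DP.

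\textbf{Step 2 (communication and decoding).} PGR with parameter $\priv_L$ has each user send a single element of its output alphabet (of size $O(\ab)$ in the regime of interest), \ie $O(\log\ab)$ bits; permuting the messages changes neither this nor the distribution of the multiset of messages the server receives, so the server can still run the unbiased PGR estimator of~\eqref{eq:pgr:estimate}. Hence the shuffled protocol inherits exactly the accuracy of plain PGR at local privacy level $\priv_L$.

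\textbf{Step 3 (substitute and simplify).} Applying \cref{theo:optimal:pgr:ub} with privacy parameter $\priv_L\geq 1$ and plugging in $e^{\priv_L}=\priv^2\ns/(256\log(4/\privdelta))$, the dominant term becomes
\[
\sqrt{\frac{\log\ab}{\ns e^{\priv_L}}}=\frac{16\sqrt{\log\ab\,\log(4/\privdelta)}}{\priv\ns}=\bigO{\frac{\sqrt{\log\ab\,\log(1/\privdelta)}}{\priv\ns}},
\]
which is the claimed bound. It remains to check that the additive term $\frac{\log\ab}{\ns\priv_L}\log\ns$ from \cref{theo:optimal:pgr:ub} is of lower order: using $\priv_L\geq1$ together with the identity $\log\ns=\priv_L+\log(256\log(4/\privdelta))+2\log(1/\priv)$ (immediate from the definition of $\priv_L$), this term is $O(\log\ab/\ns)$ up to a polylogarithmic factor, and is absorbed into the main term in the relevant parameter regime (it could be shaved further by invoking the sharper local Glivenko--Cantelli bound of~\cite{BlanchardV24} in place of~\cref{theo:local:gc:ck} in the PGR analysis).

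\textbf{Main obstacle.} There is no genuine conceptual difficulty here — the statement is essentially an immediate corollary of the two cited results. The only care needed is the bookkeeping in Steps 1 and 3: fixing the constant hidden in ``$\ns=\bigOmega{\log(1/\privdelta)/\priv^2}$'' so that $\priv_L\geq1$ while still $\priv_L\leq\log\frac{\ns}{16\log(2/\privdelta)}$ (as required by \cref{theo:amplification}), and then confirming that the additive $\log\ab\cdot\log\ns/(\ns\priv_L)$ error term is dominated by the main $\sqrt{\log\ab\log(1/\privdelta)}/(\ns\priv)$ term.
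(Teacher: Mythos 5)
Your proof takes essentially the same route as the paper's: invoke \cref{lemma:amp:shuffling} to pick $\priv_L = \log\frac{\priv^2\ns}{256\log(4/\privdelta)}$, confirm $\priv_L = \Omega(1)$ so that \cref{theo:optimal:pgr:ub} applies, substitute $e^{\priv_L}=\priv^2\ns/(256\log(4/\privdelta))$ into its leading term to obtain $\bigO{\sqrt{\log\ab\log(1/\privdelta)}/(\ns\priv)}$, and then argue the additive $\frac{\log\ab}{\ns\priv_L}\log\ns$ term is dominated. One small remark: both your proof and the paper's gloss over the fact that the dominance of the first term of \cref{theo:optimal:pgr:ub} requires $\priv_L^2 e^{-\priv_L}\gtrsim (\log\ab)\log^2(\ns)/\ns$, which is an additional constraint not implied by the stated hypothesis $\ns=\bigOmega{\log(1/\privdelta)/\priv^2}$ alone — you flag this by writing ``in the relevant parameter regime,'' whereas the paper drops the $\log\ab$ factor when verifying the condition, so neither argument fully closes this bookkeeping step.
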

\begin{proof}
For $\ns \geq \frac{500}{\priv^2}\log\frac{4}{\privdelta}$, the restriction on $\priv$ from~\cref{lemma:amp:shuffling} holds, and, setting $\priv_L$ as in the lemma, we also have $\priv_L = \Omega(1)$. We then invoke the bound of~\cref{theo:optimal:pgr:ub}, focusing on first part of the bound, which dominates when $\frac{\log\ab}{n}\cdot \log^2 n\leq\frac{\priv_L^2}{e^{\priv_L}}$. This leads to an upper bound on the error of the order%
    \begin{align*}
        \sqrt{\frac{\log\ab}{\ns e^{\priv_L}}} &= \sqrt{\frac{\log(\ab)}{\ns\priv^2\frac{\ns}{\log(1/\privdelta)}}}
            =\frac{\sqrt{\log(\ab)\log(1/\privdelta)}}{\ns\priv}
    \end{align*}
    as desired. It only remains to argue that the first term of the bound did, indeed, dominate the error. As mentioned above, this is the case whenever 
    \[
    \frac{e^{\priv_L}}{\priv_L^2}\cdot \log \ab \ll \frac{n}{\log^2 n}
    \]
    for which a weaker, sufficient condition is $\frac{e^{\priv_L}}{\priv_L^2} \ll \frac{n}{\log^2 n}$, that is, $n \gg e^{\priv_L}$. But this follows from our setting of $\priv_L$, such that $e^{\priv_L} = \underbrace{\tfrac{\priv^2}{256\log(4/\privdelta)}}_{\ll 1} \cdot n$.
\end{proof}
Comparing this result with the summary of local, shuffle, and central histogram error bounds available in~\cite[Table~1]{CheuZ22} shows that with shuffled PGR achieves the best error of any protocol which sends a constant number of messages.

More specifically, focusing on 3 representative known protocols: the only known one-message-per-user protocol, due to~\cite{CheuZ22}, achieves much worse error,\footnote{And has the same restriction $\ns = \Omega(\log(1/\privdelta)/\priv^2)$ on the parameters.} and requires $\ab$ bits of communication per user; while the protocol of~\cite{GhaziG0PV21}, which achieves the same error as~\cref{theo:pgr:shuffle}, uses $\ab^{\Omega(1)}$ $(\log\ab)$-bit messages per user. Finally, a protocol of~\cite{balcer2020} does achieve better error, but at the cost of performing $\ab+1$ rounds, each with $\log\ab$ communication per user. Thus, our bounds demonstrates that shuffled PGR achieve state-of-the-art $\lp[\infty]$--error with only \emph{one} message of $\log\ab$ bits.

\section{Empirical evaluation and findings}
\label{sec:empirical-findings}
\begin{figure}[h]
{\centering
    \includegraphics[width=0.99\textwidth]{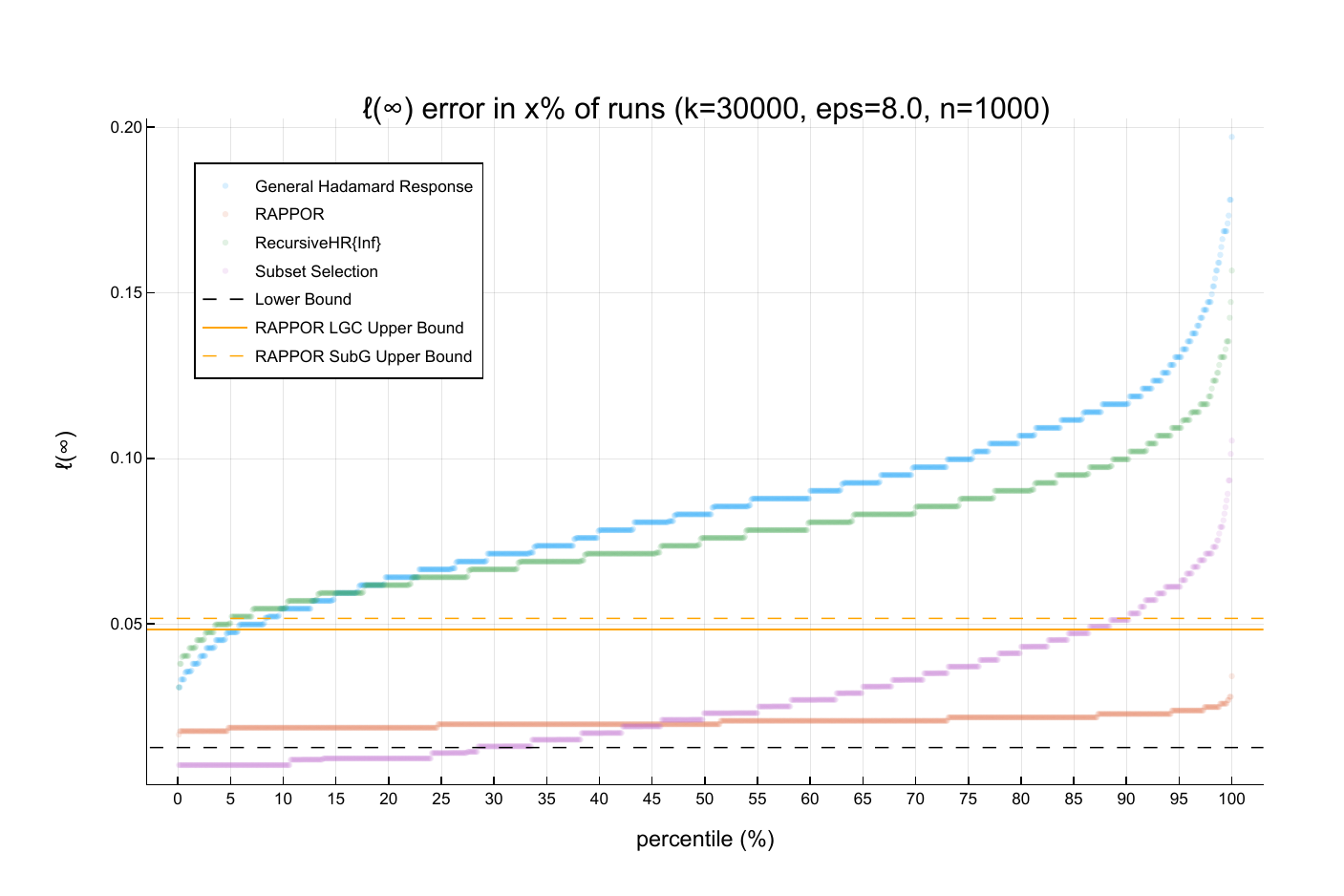}
    \caption{Maximum error ($\lp[\infty]$) in $x\%$ of 1000 runs. $\priv=5$, $\ab=5000$, $\ns=2000$, and the distribution is a point-mass at the first index. Horizontal lines indicate upper and lower bounds on the expected maximum explored in this paper. The improved bound for RAPPOR is the one given by use of the exact sub-Gaussian parameter.}
    \label{fig:percentiles}
}
\end{figure}
We performed empirical evaluations of various protocols in order to evaluate which are likely to have in practice improved upper bounds on their theoretical worst-case error, and to distinguish between constant factors in their performance. Additionally we were interested in whether there were \emph{distribution-dependent} factors which influence $\lp[\infty]$ error.

\Cref{fig:percentiles} is a plot of $\lp[\infty]$ in $x\%$ of runs, effectively representing a CDF of the error distribution. Horizontal lines indicate bounds discussed in this paper. Optimal protocols should have their mean close to the lower bound. This implies much stronger upper bounds can likely be derived for Subset Selection.

\begin{figure}[h]
    \centering
    \includegraphics[width=0.99\textwidth]{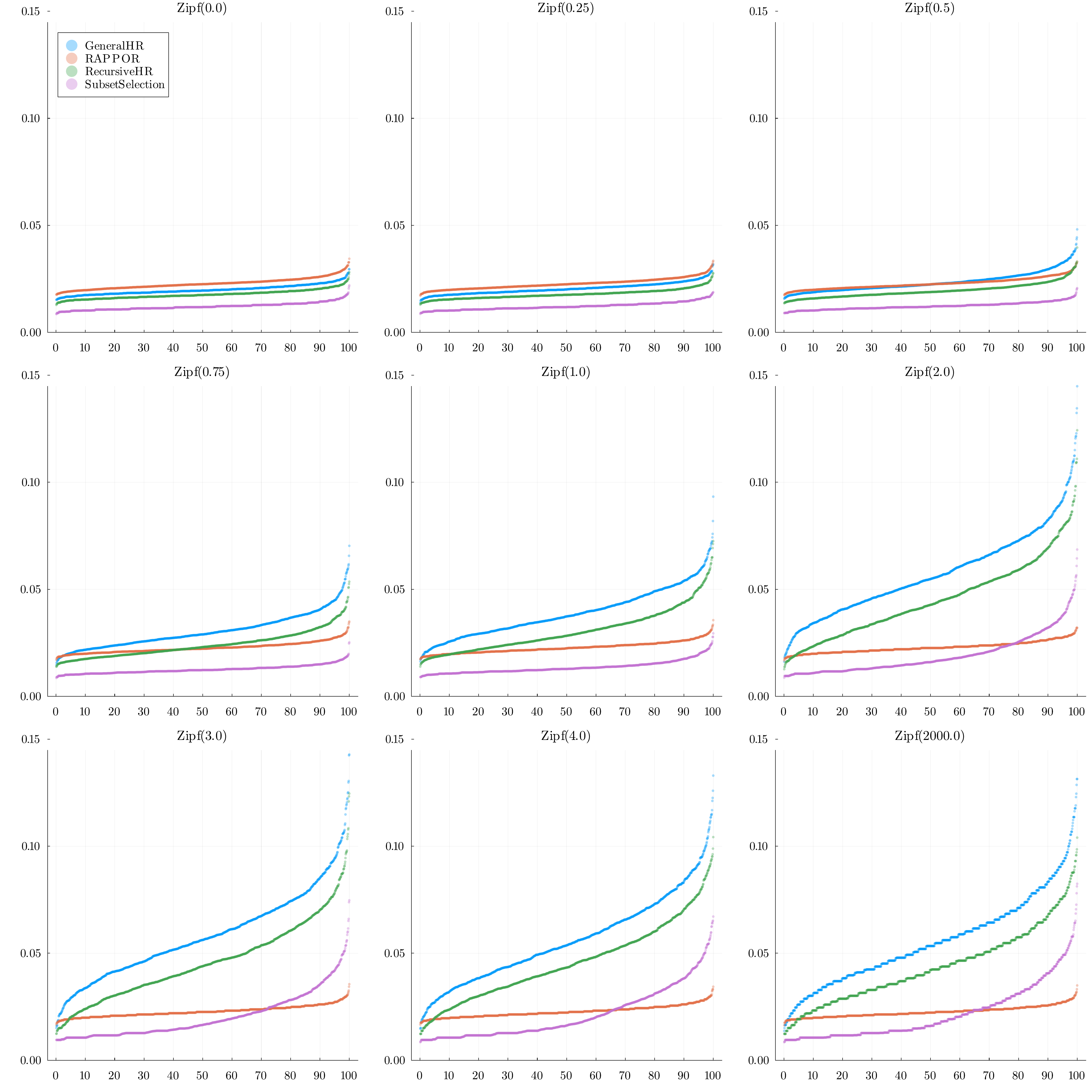}
    \caption{$\lp[\infty]$ error in $x\%$ of runs, with 1000 repeats per protocol. Distributions are $\operatorname{Zipf}(\alpha)$, where $p_i\propto i^{-\alpha}$, larger values of $\alpha$ give more concentrated distributions, $\alpha=2000$ has its entire mass on a single point, while $\alpha=0$ is the uniform distribution. All experiments with $\priv=5$, $\ab=500$, $\ns=1000$.}
    \label{fig:err-by-concentration}
\end{figure}

\Cref{fig:err-by-concentration} demonstrates a relationship between the concentration of a distribution and the error of the protocol. The fact that these protocols perform best on a uniform distribution should not be surprising, given that the randomiser has the effect of flattening the input distribution. What is most interesting is that RAPPOR maintains the same error distribution invariant of input distribution, and that the subset-based protocols do not. We propose that this is due to the independence of coordinates under RAPPOR.

All of the experiments together took approximately 4 hours on a 2022 M2 Macbook air.

\section{Discussion and future work}
\label{sec:discussion}

\paragraph{The logarithmic factor in the upper bound.}
We prove our upper bounds using the result of~\cite{CohenK23}, which is where the $\log\ns$ term appears. The authors conjectured that this term could be removed in general, but follow--up work~\cite{BlanchardV24} demonstrated by counter--example that this is not the case. They show that there is a distribution--dependent interpolation between the term being $T(\ns)/\ns$ and $T(\ns)\log\ns/\ns$. This logarithmic factor has implications for understanding which error regime is dominating, given the parameters of the algorithm $(\priv,\ab,\ns)$. While it is possible to imagine that a careful application of the tools in~\cite{BlanchardV24} could resolve this matter in general or for a specific use--case, in the meantime we emphasise that empirical analysis is still crucial.

\paragraph{Tighter analysis of other protocols.} 
While we analyse RAPPOR which admits a simple analysis due to the independence of coordinates, and Projective Geometry Response which represents the state of the art in low-communication LDP protocols, we believe the tools introduced in this paper are applicable in a very general way to most LDP protocols. Of particular interest would be Subset Selection, which has optimal mutual information between inputs and outputs of the local randomiser, and the two Hadamard Response protocols that we include in our empirical analysis. 

\paragraph{Histograms in the shuffle model.} While our shuffle DP result (\cref{theo:pgr:shuffle}) yields better error, communication, and number of messages, it does have a limitation on the parameter range, namely $\ns = \Omega(\log(1/\delta)/\priv^2)$. It would be interesting to weaken this requirement to match the central DP one (where the dependence on $\priv$ is only linear). Moreover, one could hope to further improve the resulting error to the central DP bound (\ie a $\min(\log\ab,\log(1/\delta))$ dependence instead of $\sqrt{\log\ab\log(1/\delta)}$), or, alternatively, prove a matching lower bound separating the two models. %

\paragraph{Broader applicability of lower bound techniques.} While we prove lower bounds for the specific case of $\lp[\infty]$ frequency estimation, the tools used are extremely general and we imagine that their application could provide new lower bounds for a variety of problems in the LDP setting, especially with the low--privacy regime in mind.

\ifidentified %
\paragraph{Acknowledgments.} The authors would like to thank Guy Blanc for the proof of~\cref{lemma:coupling:binomials}, and Albert Cheu for insightful discussions regarding the use of amplification by shuffling (\cref{lemma:amp:shuffling}).
\fi
\clearpage
\bibliographystyle{alpha}
\bibliography{references-neurips.bib}

\appendix

\section{Deferred miscellaneous proofs}
\label{appendix:misc}

\begin{proof}[Proof of~\cref{lemma:rappor-unbiased}]
\label{proof:rappor-unbiased}
Note that each $\bar{Y}_j$ (for $1\leq j\leq \ab$) is the normalised sum of $\ns$ independent Bernoulli random variables $Y_{1,j},\ldots, Y_{\ns,j}$ with parameter either $\frac{e^{\priv/2}}{e^{\priv/2}+1}$ (for a fraction $q_j$ of them) or parameter $\frac{1}{e^{\priv/2}+1}$ (for the $1-q_j$ remainders). It follows that, for every fixed $j\in[\ab]$,
\[
    \bEE{\bar{Y}_j} = q_j\frac{e^{\priv/2}}{e^{\priv/2}+1} + (1-q_j) \frac{1}{e^{\priv/2}+1} = \frac{e^{\priv/2}-1}{e^{\priv/2}+1}q_j + \frac{1}{e^{\priv/2}+1}
\]
from which we derive our unbiased estimator,
\[
    \bEE{\hat{q}_j} = \frac{e^{\priv/2}+1}{e^{\priv/2}-1}\bEE{Y_j}- \frac{1}{e^{\priv/2}-1}.\qedhere
\]
\end{proof}

\begin{proof}[Proof of~\cref{prop:generic:transformation}]
    Let $A$ be as in the assumptions of the proposition. For any $\priv >0$, let $T \eqdef  \min(\clg{\priv}, L)$ and $\priv' \eqdef \priv/T \in(0,1]$. Each user uses their randomiser $T$ times with privacy parameter $\priv'$, independently, on their (single) input, and sends the $T$ results to the server as if they were $T$ distinct users. This simulates the protocol $A$ on $\ns' \eqdef T\ns$ users, but does not affect the frequency of each element (as each count is multiplied by $T$ but then normalised by $T\ns$ instead of $\ns$). The claims about number of messages and per-user communication are immediate; as for the error, note that we have
    \[
    \mathcal{E}\left(\ns',\ab,\priv'\right)
     = \bigO{\sqrt{\frac{\log\ab}{\ns'\priv'^2}}}
     = \bigO{\sqrt{\frac{\log\ab}{\ns T\min(1,\priv^2/T^2)}}}
     = \bigO{\sqrt{\frac{\log\ab}{\ns\min(T,\priv^2/T)}}}
    \]
    and the result follows from a distinction of cases, observing that 
    (a)~if $\priv < 1$, then $T=1$, and $\min(T,\priv^2/T) = \priv^2 = \min(L, \priv, \priv^2)$; (b)~if $1\leq \priv \leq L$, then $T=\clg{\priv}$, and 
    $
    \min(T,\priv^2/T) = \min(\clg{\priv}, \priv^2/\clg{\priv}) = \Theta(\priv) = \Theta(\min(L,\priv,\priv^2)
    $; and finally (c)~if$\priv > L$, then $T=L$ and $\min(T, \priv^2/T) = L = \min(L,\priv,\priv^2)$.
    (Where for case (b) we used that $\frac{\priv}{\clg{\priv}} = \priv$ if $\priv < 1$, and otherwise is in $[1/2,1]$.)
\end{proof}

Finally, we prove the $(\ddagger)$ bound from~\cref{tab:previous-results}: that is, that the unnamed LDP protocol from~\cite[Theorem 3.1]{ChenKO23} achieves worst-case expected $\lp[2]^2$ error at most $\frac{4\ab\ns e^{2\priv/\numbits}}{(e^{\priv/\numbits}-1)^2}$:
\begin{proof}[Proof of this claim]
    Starting from~\cite[Appendix E.4]{ChenKO23}, where the authors bound the expected $\lp[\infty]$ error, and specifically their Eq.~(23), we have that the output $\hat{q}$ of their LDP protocol (and the true histogram $q$) satisfy, for every $j\in[\ab]$,
    \[
     \hat{q}_j - q_j = \frac{1}{\ns\numbits} \sum_{i=1}^\ns \sum_{m=1}^{\numbits} \Paren{\hat{X}^{(m)}_i(j) - X_i(j)}
    \]
    using their notation for $\hat{X}^{(m)}_i$ (the $m$-bit of the privatized message sent by user $i$). Using their almost sure bound that states that
    \[
        \abs{\hat{X}^{(m)}_i(j) - X_i(j)} \leq \frac{e^{\priv/\numbits}+1}{e^{\priv/\numbits}-1} + 1 = \frac{2e^{\priv/\numbits}}{e^{\priv/\numbits}-1}
    \]
    for all $m\in[\numbits], j\in[\ab], i\in[\ns]$, we get
    \begin{align*}
     \bEE{\normtwo{\hat{q}-q}^2} &=
     \sum_{j=1}^\ab\bEE{\Paren{\hat{q}_j-q_j}^2} \\
     &\leq \frac{1}{(\ns\numbits)^2} \sum_{j=1}^\ab \sum_{i=1}^\ns \sum_{m=1}^{\numbits}\sum_{i'=1}^\ns \sum_{m'=1}^{\numbits} \bEE{\Paren{\hat{X}^{(m)}_i(j) - X_i(j)}\Paren{\hat{X}^{(m')}_{i'}(j) - X_{i'}(j)}} \\
     &= \frac{1}{(\ns\numbits)^2} \sum_{j=1}^\ab \sum_{i=1}^\ns \sum_{m=1}^{\numbits} \bEE{\Paren{\hat{X}^{(m)}_i(j) - X_i(j)}^2} \tag{$\ast$}\\
     &\leq \frac{\ab}{\ns\numbits} \frac{4e^{2\priv/\numbits}}{\Paren{e^{\priv/\numbits}-1}^2}
    \end{align*}
    where $(\ast)$ relies on the facts that (1)~$\hat{X}^{(m)}_i(j)$ is an unbiased estimator of $X_i(j)$, and (2)~that the $\hat{X}^{(m)}_i(j)$'s are independent across both $m\in[\numbits]$ and $i\in[\ns]$.
\end{proof}

%

\section{More empirical analysis and bounds plots}
\begin{figure}[h]
{\centering
    \includegraphics[width=0.99\textwidth]{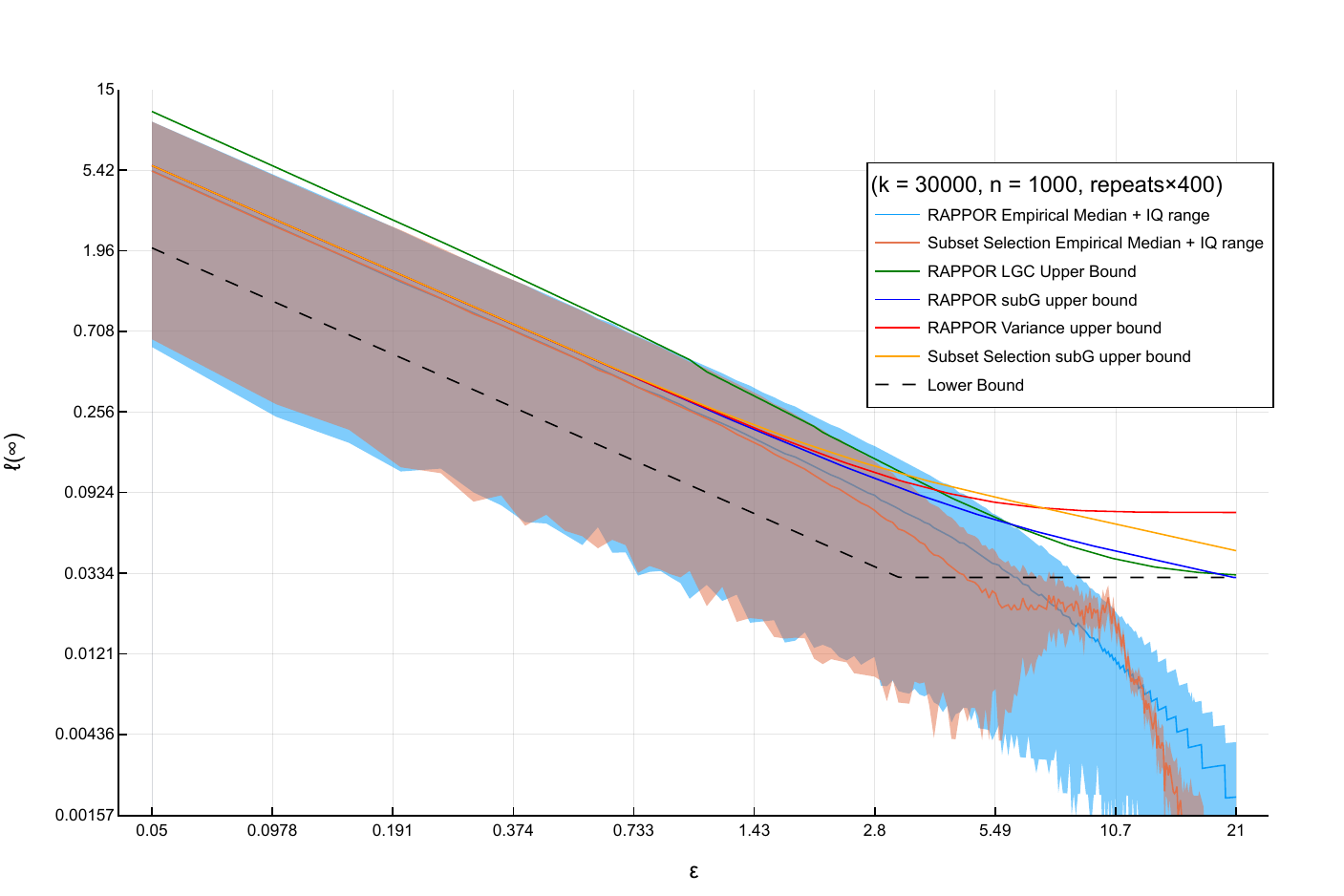}
    \caption{Log--log plot of median $\lp[\infty]$ error with upper and lower quartiles, by $\priv$. Lower and upper bounds discussed in this work included for comparison. No normalisation or clipping has been applied leading to $\lp[\infty]>2$ in the high privacy regime.}
    \label{fig:error-by-eps}
}
\end{figure}

\Cref{fig:error-by-eps} demonstrates RAPPOR and Subset Selections' performance against its theoretical upper and lower bounds. One can see subset selection become worse in the regime where $\priv>\log\ab$, until the subset size reaches 1 and it becomes practically non-private frequency estimation. We can see a minor transition in the low-privacy regime to the Local Glivenko--Cantelli behaviour, which is much more apparent for large $\ab$ as  while simple analysis of the sub--Gaussian parameters begins to diverge from the lower bound. \Cref{fig:only-bounds} plots only the theoretical bounds to demonstrate this more clearly.
\begin{figure}[h]
{\centering
    \includegraphics[width=0.95\textwidth]{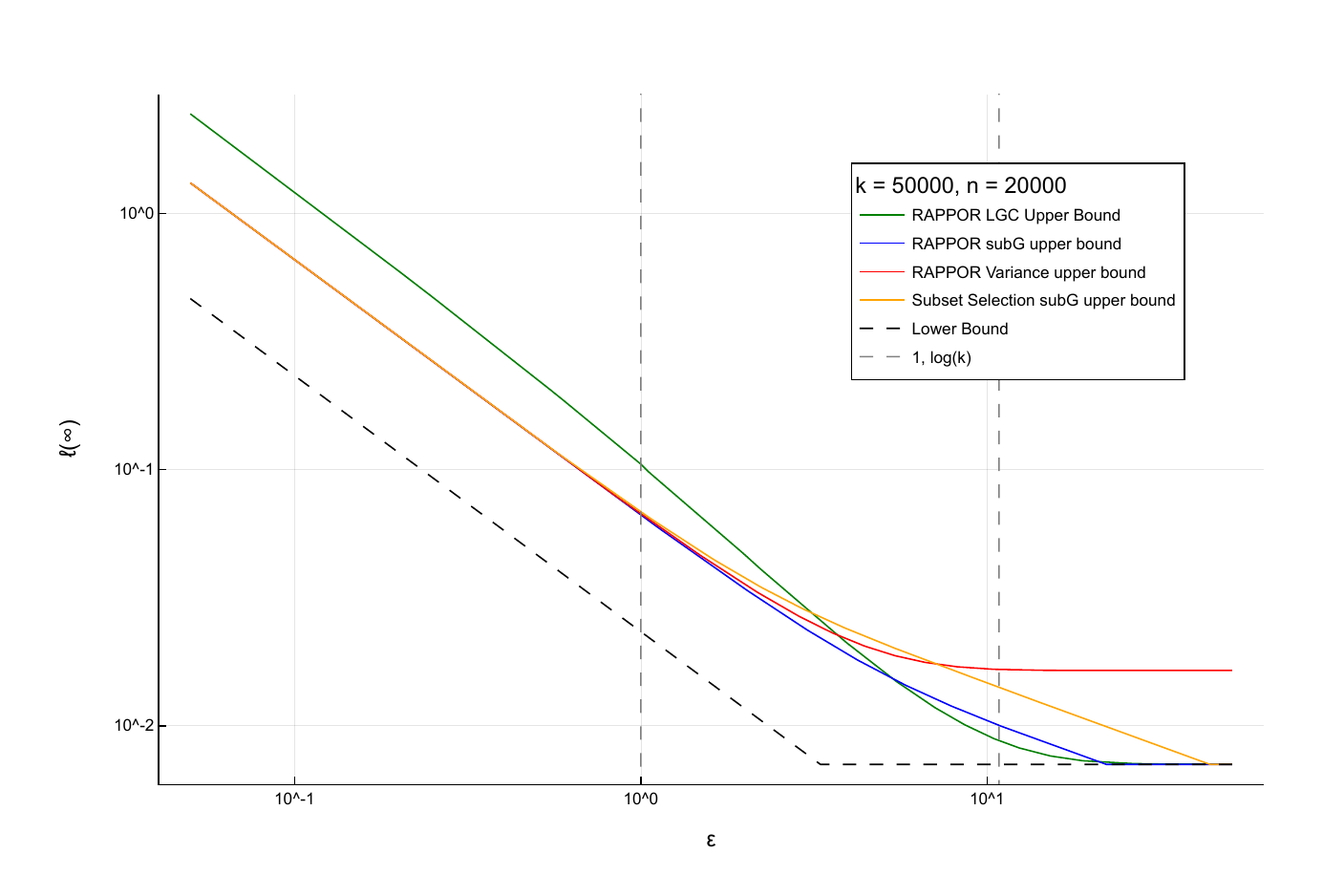}
    \caption{Log--log plot of some bounds discussed in this paper, demonstrating a transition to the local Glivenko Cantelli bound in the intermediate privacy regime. This privacy regime was too computationally expensive to simulate, however larger values of $\ab$ seem to be of interest as the shuffle model allows for larger values of $\priv$ to compensate for the logarithmic error loss in the alphabet size. In addition some literature suggests that domain reduction tools may introduce an impractical amount of noise~\cite{erlingsson2020}.}
    \label{fig:only-bounds}
}
\end{figure}
%
%

\iffalse
\section{Upper bounds for subset selection}
\label{sec:subset-selection}
\input{sec-subset-selection}
\fi

\end{document}